\newtheorem{theorem}{Theorem}
\theoremstyle{plain}
\newtheorem{corollary}{Corollary}
\newtheorem{definition}{Definition}
\newtheorem{lemma}{Lemma}
\newtheorem{preremark}{Remark}
\newenvironment{remark}{\begin{preremark}\rm}{\end{preremark}}
\numberwithin{equation}{section}
\def\R{\mathbb{R}}
\def\N{\mathbb{N}}
\def\K{\mathcal{K}}
\begin{document}
\title[Global Classical Solutions Vlasov-Darwin Small Data]{Global Classical Solutions of the relativistic Vlasov-Darwin system with small Cauchy Data: the generalized variables approach}
\author{Reinel Sospedra-Alfonso}
\author{Martial Agueh}
\author{Reinhard Illner}
\address
{Institute of Applied Mathematics,\newline%
\indent University of British Columbia, Mathematics Road, Vancouver BC, Canada V6T 1Z2.}%
\email[R. Sospedra-Alfonso]{sospedra@chem.ubc.ca}%
\address{Department of Mathematics and Statistics,\newline%
\indent University of Victoria, PO BOX 3045 STN CSC, Victoria BC, Canada V8W 3P4.}%
\email[M. Agueh]{agueh@math.uvic.ca}%
\email[R. Illner]{rillner@math.uvic.ca}%



\date{May 12, 2012}
\keywords{Vlasov-Darwin, Cauchy Problem, Global Classical Solutions}%

\begin{abstract}
  We show that a smooth, small enough Cauchy datum launches a unique classical solution of the relativistic Vlasov-Darwin (RVD) system globally in time. A similar result is claimed in \cite{Seehafer} following the work in \cite{Pallard}. Our proof does not require estimates derived from the conservation of the total energy, nor those previously given on the transversal component of the electric field. These estimates are crucial in the references cited above. Instead, we exploit the formulation of the RVD system in terms of the \textit{generalized} space and momentum variables. By doing so, we produce a simple a-priori estimate on the transversal component of the electric field. We widen the functional space required for the Cauchy datum to extend the solution globally in time, and we improve decay estimates given in \cite{Seehafer} on the electromagnetic field and its space derivatives. Our method extends the constructive proof presented in \cite{Rein} to solve the Cauchy problem for the Vlasov-Poisson system with a small initial datum.
\end{abstract}
\maketitle

\section{Introduction}
\label{intro}

The relativistic Vlasov-Darwin (RVD) system can be obtained from the Vlasov-Maxwell system by neglecting the transversal component of the displacement current in the Maxwell-Amp{\`e}re equation. More precisely, consider an ensemble of single species charged particles interacting through the self-induced electromagnetic field. Let $f(t,x,p)$ denote the number of particles per unit volume of the phase-space at a time $t\in\left]0,\infty\right[$, where $x\in\mathbb{R}^3$ is position and $p\in\mathbb{R}^3$ denotes momentum. In the regime in which collisions among the particles can be neglected, the time evolution of the distribution function $f$ is given by the Vlasov equation
\begin{equation}
\label{Vlasov For Maxwell}
\partial_tf+v\cdot\nabla_{x}f+\left(E+c^{-1}v\times B\right)\cdot\nabla_{p}f=0,\quad v=\frac{p}{\sqrt{1+c^{-2}\left|p\right|^2}}, 
\end{equation}
where $v$ is the relativistic velocity and $c$ the speed of light. Here the mass and charge of the particles have been set to one. $E=E(t,x)$ and $B=B(t,x)$ denote the self-induced electric and magnetic fields, given by the Maxwell equations
\begin{eqnarray}
\label{Maxwell 1}
\nabla\times B-c^{-1}\partial_tE & = & 4\pi c^{-1}j,\quad\quad \nabla\cdot B\quad  =\quad  0,\\
\label{Maxwell 2}
\nabla\times E +c^{-1}\partial_tB & = & 0, \quad\quad\quad\quad\ \, \nabla\cdot E\quad  =\quad  4\pi\rho.
\end{eqnarray}
The Vlasov and Maxwell equations are then coupled via the charge and current densities
\begin{equation}
\label{Density and Current}
  \rho=\int_{\mathbb{R}^3}fdp \quad \hbox{and} \quad j=\int_{\mathbb{R}^3}vfdp.
\end{equation}

Equations (\ref{Vlasov For Maxwell})-(\ref{Density and Current}) are known as the relativistic Vlasov-Maxwell (RVM) system, which is essential in the study of dilute hot plasmas. Details and an abundant bibliography on this system can be found, for instance, in \cite{GlasseyBook}. 

We further decompose the electric field into $E=E_L+E_T$, where the longitudinal $E_L$ and transversal $E_T$ components of the electric field satisfy, respectively
\begin{equation}
\label{Components Electric Field}
  \nabla\times E_L=0 \quad \hbox{and} \quad \nabla\cdot E_T=0.
\end{equation}
If we now neglect the transversal component of the displacement current $\partial_tE_T$ in the evolution equation (\ref{Maxwell 1}) -the so-called Maxwell-Amp{\`e}re equation-, then the RVM system reduces to 
\begin{equation}
\label{Vlasov For Darwin}
\partial_tf+v\cdot\nabla_{x}f+\left(E_L+E_T+c^{-1}v\times B\right)\cdot\nabla_{p}f=0, \quad v=\frac{p}{\sqrt{1+c^{-2}\left|p\right|^2}},
\end{equation}
coupled with
\begin{eqnarray}
\label{Darwin 1}
\nabla\times B-c^{-1}\partial_tE_L & = & 4\pi c^{-1}j,\quad \nabla\cdot B\quad  =\quad  0,\\
\label{Darwin 2}
\nabla\times E_T +c^{-1}\partial_tB & = & 0, \quad\quad\quad \nabla\cdot E_L\quad =\quad  4\pi\rho,
\end{eqnarray}
by means of (\ref{Density and Current}). Equations (\ref{Density and Current})-(\ref{Darwin 2}) are the RVD system. From the physical point of view, the Darwin approximation is valid when the evolution of the electromagnetic field is `slower' than the speed of light.

In this paper we are concerned with the Cauchy problem for (\ref{Density and Current})-(\ref{Darwin 2}). Global existence of weak solutions was shown in \cite{Benachour} for small initial data. The smallness assumption was later removed in \cite{Pallard}, where the existence and uniqueness of local in time classical solutions was also proved. In \cite{Seehafer}, it is shown that solutions having the same regularity as the initial data (which is not the case in \cite{Pallard}), can be extended globally in time provided the initial data is small. At the present time, the existence of global in time classical solutions for arbitrary data remains unsolved. Here, we provide a constructive and somewhat simplified proof to the local in time existence and uniqueness result for classical solutions of the RVD system, and we show that the solutions can be extended for all times if the initial data are sufficiently small.

The main difficulty when dealing with the RVD system has been to find an a-priori estimate on the transversal component of the electric field $E_T$. In contrast to the RVM system, the component $E_T$ does not contribute to the energy of the electromagnetic field, and thus the law for the conservation of the total energy does not provide any control on the $L^2$-norm of $E_T$. Indeed, the total energy of the RVD system reads
$$ \int_{\mathbb{R}^3}\int_{\mathbb{R}^3}c^2\sqrt{1+c^{-2}\left|p\right|^2}f(t,x,p)dpdx+\frac{1}{8\pi}\int_{\mathbb{R}^3}\left[\left|E_L(t,x)\right|^2+\left|B(t,x)\right|^2\right]dx.
$$
Hence, by virtue of the underlying elliptic structure of the Darwin equations, duality type arguments and variational methods have previously been used to estimate $E_T$. Here, we take advantage of the formulation of the RVD system in terms of the generalized variables, defined later on, and we produce an $L^2$-bound on $\rho^{1/2}E_T$ instead. This estimate is at the core of our results, and is given in Lemma \ref{L2 Estimate Time Derivative A}. It is remarkable that by pursuing such an estimate we have obtained, `almost for free', an $L^2$-bound on $\partial_xE_T$ as well. In contrast to the results cited above, the law for the conservation of the total energy is not used in our proofs at all.

The structure of the paper is as follows. In Section \ref{The Potential Representation Section}, we present the scalar and vector potentials and introduce the generalized position and momentum variables. We then recast the Vlasov and Darwin equations in terms of the new variables, and treat them as uncoupled linear equations. A representation for the Darwin vector potential is given, and some standard a-priori bounds are obtained as well. Then, in Section \ref{The RVD System Section} we couple both Vlasov and Darwin equations and introduce the RVD system in terms of the potentials. The estimates on the transversal component of the electric field and its space derivative are produced in Subsection \ref{Subsection Time Derivatives A}. Finally, we study the Cauchy problem for the RVD system in Section \ref{Section Small Data Solutions}. First, we produce the local in time existence result in Subsection \ref{Local Solutions Section} and then, in Subsection \ref{Global Solutions Section}, we extend local solutions globally in time under the smallness assumption on the Cauchy data. We conclude with an Appendix. 

We remark that the RVD is actually an hybrid system, since we are considering relativistic charged particles whose interaction with the electromagnetic field they induce is an order-$(v/c)^2$ approximation \cite{Jackson,Morrison}. Yet, the RVD system is interesting in its own right, in particular for numerical simulations, since it contains an underlying elliptic feature while preserving a fully coupled magnetic field. This is in contrast to the more involved RVM system, whose hyperbolic structure yields both analytical and numerical challenges. Also, the tools used here are likely to be adapted to the `proper' physical system, which is (\ref{Density and Current})-(\ref{Darwin 2}) with $v=p\left(1-c^{-2}p^2/2\right)$ instead.

The following notations will be used in  the paper. As usual, $C^{k,\alpha}(X;Y)$ denotes the space of functions $f: X \rightarrow Y$ of class $C^k$ whose $k^{th}$ derivatives are H\"older continuous with exponent $\alpha \in (0,1)$. $C^k_0(X;Y)$, resp. $C^k_b(X;Y)$, are the spaces of $C^k(X;Y)$-functions with compact support, resp. bounded. $W^{1,\infty}(X;Y)$ stands for the Sobolev space of $L^\infty(X;Y)$-functions whose weak first order partial derivatives belong to $L^\infty(X;Y)$. If $I$ is an interval in $\mathbb{R}$, then by $g\in C^1\left(I, C^k(X); Y\right)$, we mean that $g: I\times X \rightarrow Y$, $g=g(t,x)$, and for all $t\in I$, $g(t)\in C^k(X;Y)$ and the function $t \mapsto g(t)\in C^k(X;Y)$ is of class $C^1$ on $I$. For such a function, we sometimes write  (by abuse of notations) $g\in C^k(X;Y)$ to mean that $g(t)\in C^k(X;Y)$ for all $t\in I$. Similarly, the norm of $g(t)$, say the $L^q$-norm $\|g(t)\|_{L^q_x}$, will sometimes be denoted by $\|g\|_{L^q_x}$. All other notations  in the paper are standard, and the constants may change values from line to line.

\section{The Potential Representation}
\label{The Potential Representation Section}

From classical electrodynamics it is known that an electromagnetic field $(E,B):\mathbb{R}\times\mathbb{R}^3\rightarrow\mathbb{R}^3\times\mathbb{R}^3$ that is a smooth solution of the Maxwell equations (\ref{Maxwell 1})-(\ref{Maxwell 2}) can be represented by a set of potentials $(\Phi,A):(0,\infty)\times\mathbb{R}^3\rightarrow\mathbb{R}\times\mathbb{R}^3$ according to the expressions
\begin{eqnarray}
\label{Electric Field}
E(t,x) & = & -\nabla\Phi(t,x)-c^{-1}\partial_tA(t,x),\\
\label{Magnetic Field}
B(t,x) & = & \nabla\times A(t,x).
\end{eqnarray}
These relations can easily be obtained from the two homogeneous Maxwell equations in (\ref{Maxwell 1})-(\ref{Maxwell 2}). In particular (\ref{Magnetic Field}) follows from the vanishing divergence of the magnetic field, while (\ref{Electric Field}) follows after inserting (\ref{Magnetic Field}) into the remaining homogeneous equation. Since for any smooth scalar function $\Lambda$ we have the identity $\nabla\times\nabla\Lambda\equiv0$, it is clear that such potentials are not uniquely determined. We may find another pair given by 
\begin{equation}
\label{Gauge Transformation}
\left(\Phi',A'\right)=\left(\Phi-c^{-1}\partial_t\Lambda, A+\nabla\Lambda\right)
\end{equation}
which also satisfies (\ref{Electric Field})-(\ref{Magnetic Field}). The two sets of potentials are fully equivalent, in the sense that they produce the same electric and magnetic fields. 

This lack of uniqueness allows to impose a condition on the potentials that ultimately determines their dynamical equations. Even after doing so, some arbitrariness remains that can be avoided by imposing an additional restriction on $\Lambda$. The resulting restricted class is called a gauge, and all potentials within this class satisfy the same gauge condition. Commonly, the Lorentz gauge condition 
\begin{equation}
\label{Lorentz Gauge}
\nabla\cdot A+c^{-1}\partial_t\Phi=0,
\end{equation}
or the Coulomb gauge condition
\begin{equation}
\label{Coulomb Gauge}
\nabla\cdot A=0
\end{equation}
is used. The former is relativistically covariant and leads to a class of scalar and vector potentials that satisfy wave equations. This is a natural choice when dealing with the RVM system. It was used in \cite{Bouchut} to study the smoothing effect resulting from a coupling of a wave and transport equations. It was also used in \cite{BouchutGS1} to produce an alternative proof of the celebrated result by Glassey and Strauss on the RVM system \cite{GS1}. On the other hand, the Coulomb gauge condition leads to scalar and vector potentials that satisfy a Poisson and a wave equation, respectively. As we shall see in Subsection \ref{Darwin Potentials} below, this is the correct choice to introduce the potential representation of the RVD system. In a way, both the Lorentz and Coulomb gauges can be seen as limit cases of a more general class known as the velocity gauge, in which the scalar potential propagates with an arbitrary speed \cite{Jackson}.


\subsection{The Vlasov Equation}
\label{The Vlasov Equation Section}

We now introduce the generalized variables, which permit to rewrite the Vlasov equation (\ref{Vlasov For Maxwell}) in terms of the scalar and vector potentials in a very convenient way. The resulting transport equation is shown to be determined by an incompressible vector field \textit{irrespective of the gauge chosen}. Thus, we can count on the usual a-priori estimates on the distribution function -see Lemma \ref{Volume Preserving Lemma} below- no matter which gauge we decide to work in. 

To start with, let $I\subset[0,\infty[$ such that $0\in I$. Assume that the pair $(\Phi,A)\in C^1(I,C^2(\mathbb{R}^3);\mathbb{R}\times\mathbb{R}^3)$ is given, and so in view of (\ref{Electric Field})-(\ref{Magnetic Field}) the electromagnetic field is given as well. Denote $z:=(x,p)$. Then, by virtue of (\ref{Electric Field})-(\ref{Magnetic Field}), the characteristic system associated to the Vlasov equation (\ref{Vlasov For Maxwell}) reads
\begin{eqnarray}
\label{Characteristics X}
\dot{X}(s) & = & v(P(s)),\\
\label{Characteristics P}
\dot{P}(s) & = & \left[-\nabla\Phi-c^{-1}\partial_tA+c^{-1}v\times\left(\nabla\times A\right)\right](s,X(s),P(s)),
\end{eqnarray} 
where we use here and below the notation $(X,P)(s)$ in place of $(X,P)(s,t,z)$. Hence, since 
$$  \dot{A}(s,X(s))=\left[\partial_sA+\left(v\cdot\nabla\right)A\right](s,X(s)),
$$
the equation (\ref{Characteristics P}) can be rewritten as
\begin{equation}
\label{Characteristics P Extended}
\dot{P}(s) = \left[-c^{-1}\dot{A}-\nabla\Phi+c^{-1}v\times\left(\nabla\times A\right)+c^{-1}\left(v\cdot\nabla\right)A\right](s,X(s),P(s)).
\end{equation}
The structure of (\ref{Characteristics P Extended}) suggests that we can define a generalized momentum variable $\pi=p+c^{-1}A$ such that the above equation can be reduced to
$$  \dot{\Pi}(s)=\left[-\nabla\Phi+c^{-1}v\times\left(\nabla\times A\right)+c^{-1}\left(v\cdot\nabla\right)A\right](s,X(s),P(s)).
$$ 
Here we have denoted $\Pi(s)=P(s)+c^{-1}A(s,X(s))$. On the other hand, the relativistic velocity written in terms of the generalized momentum is 
\begin{equation}
\label{Relativistic V With A}
  v_A=\frac{\pi-c^{-1}A}{\sqrt{1+c^{-2}\left|\pi-c^{-1}A\right|^2}}.
\end{equation}
Therefore, by using the elementary identity 
$$ v_A\times\left(\nabla\times A\right)+\left(v_A\cdot\nabla\right)A\equiv v^i_A\nabla A^i,
$$
we can reformulate the characteristic system (\ref{Characteristics X})-(\ref{Characteristics P}) in terms of the generalized variables $\xi=(x,\pi)$ as 
\begin{eqnarray}
\label{Characteristics X Gen}
\dot{X}(s,t,\xi) & = & v_A(s,X(s,t,\xi),\Pi(s,t,\xi)),\\
\label{Characteristics P Gen}
\dot{\Pi}(s,t,\xi) & = & -\left[\nabla\Phi-c^{-1}v^i_A\nabla A^i\right](s,X(s,t,\xi),\Pi(s,t,\xi)).
\end{eqnarray} 
As usual, repeated index means summation. Now, standard results in the theory of first order ordinary differential equations imply that for every fixed $t\in I$ and $\xi\in\mathbb{R}^6$ there exists a unique local solution $ \Xi=(X,\Pi)(s,t,\xi)$ of (\ref{Characteristics X Gen})-(\ref{Characteristics P Gen}) satisfying $\Xi(t,t,\xi)=\xi$; see \cite[Chapters II and V]{Hartman}. Moreover, $\Xi\in C^1\left(I\times I\times\mathbb{R}^6;\mathbb{R}^6\right)$. In turn, uniqueness implies that 
$$ Z=(X,\Pi-c^{-1}A)(s,t,x,\pi-c^{-1}A)
$$
is the unique solution of (\ref{Characteristics X})-(\ref{Characteristics P}) with initial data $Z(t,t,,z)=(x,\pi-c^{-1}A)$, so by having the characteristic curves in the generalized phase space we can recover the characteristic curves in the usual phase space.

As the following lemma shows, the field resulting in the right-hand side of the system of equations (\ref{Characteristics X Gen})-(\ref{Characteristics P Gen}) is an incompressible vector field:

\begin{lemma}
\label{Incompressible}
  For $v_A$ given by (\ref{Relativistic V With A}), we have
  $$  \nabla_x\cdot v_A+\nabla_{\pi}\cdot\left(-\nabla\Phi+c^{-1}v^i_A\nabla A^i\right)=0.
  $$
\end{lemma}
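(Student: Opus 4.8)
The lemma is a direct verification, so the plan is to organize the computation so that the cancellation becomes transparent rather than to grind through the chain rule blindly.

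\textbf{Step 1 (reduction).} Since $\Phi=\Phi(t,x)$ and $A=A(t,x)$ carry no $\pi$-dependence, $\nabla_\pi\cdot(\nabla\Phi)=0$ and $\nabla_\pi\cdot\bigl(v_A^i\nabla A^i\bigr)=(\partial_{\pi_j}v_A^i)\,\partial_{x_j}A^i$, with summation over repeated indices. Hence it suffices to establish
$$\nabla_x\cdot v_A+c^{-1}(\partial_{\pi_j}v_A^i)\,\partial_{x_j}A^i=0.$$

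\textbf{Step 2 (structural observation).} By (\ref{Relativistic V With A}), $v_A$ depends on the phase variables $(x,\pi)$ only through $q:=\pi-c^{-1}A(t,x)$; write $v_A=V(q)$ with $V(q):=q/\sqrt{1+c^{-2}|q|^2}$. Two facts about $V$ drive the proof. First, $\partial_{x_k}q^m=-c^{-1}\partial_{x_k}A^m$ and $\partial_{\pi_j}q^m=\delta_{mj}$, so the chain rule gives $\partial_{x_k}v_A^i=-c^{-1}(\partial_m V^i)(q)\,\partial_{x_k}A^m$ and $\partial_{\pi_j}v_A^i=(\partial_j V^i)(q)$. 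Second, setting $\gamma(q):=\sqrt{1+c^{-2}|q|^2}$ one has $V=c^2\nabla_q\gamma$, so the Jacobian $\partial_j V^i=\delta_{ij}\gamma^{-1}-c^{-2}q^iq^j\gamma^{-3}$ is a Hessian and therefore symmetric in $i$ and $j$.

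\textbf{Step 3 (cancellation).} Tracing the first identity of Step 2 yields $\nabla_x\cdot v_A=\partial_{x_i}v_A^i=-c^{-1}(\partial_j V^i)(q)\,\partial_{x_i}A^j$ (after relabelling the summation index), while the second gives $c^{-1}(\partial_{\pi_j}v_A^i)\,\partial_{x_j}A^i=c^{-1}(\partial_j V^i)(q)\,\partial_{x_j}A^i$. Swapping the names of the two dummy indices in one of these and invoking the symmetry $\partial_j V^i=\partial_i V^j$ shows that the two expressions are negatives of one another, so their sum — which is exactly the left-hand side in Step 1 — vanishes.

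\textbf{Expected obstacle.} There is no genuine analytic difficulty; the only care required is index bookkeeping, namely keeping the index that is traced in the divergence distinct from the index contracted against $\nabla A$, together with the two elementary remarks that $\Phi$ and $A$ do not depend on $\pi$ and that the $q$-Jacobian of $V$ is symmetric. One may equally well skip the abstract argument and compute $\partial_{x_k}v_A^k$ and $\partial_{\pi_j}v_A^i$ directly from (\ref{Relativistic V With A}); the resulting expressions, proportional to $\gamma^{-1}\,\mathrm{div}_x A$ and to $\gamma^{-3}q^iq^j\partial_{x_i}A^j$, then cancel term by term.
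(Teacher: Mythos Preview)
Your proof is correct and essentially the same as the paper's. The paper simply records the explicit identity
\[
c^{-1}\nabla_{\pi}\cdot\bigl(v_A^i\nabla A^i\bigr)=c^{-1}\gamma^{-1}\bigl(\nabla\cdot A-c^{-2}v_A^i(v_A\cdot\nabla)A^i\bigr)=-\nabla_x\cdot v_A,
\]
which is exactly the direct computation you mention in your final paragraph; your Steps~2--3 just package the same cancellation via the symmetry of the Jacobian of $V(q)=q/\sqrt{1+c^{-2}|q|^2}$.
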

\begin{proof}
  Since trivially $\nabla_{\pi}\cdot\nabla\Phi=0$, the result is a consequence of the elementary relation
  $$  c^{-1}\nabla_{\pi}\cdot\left(v^i_A\nabla A^i\right)=c^{-1}\frac{\nabla\cdot A-v^i_A\left(v_A\cdot\nabla\right)A^i}{\sqrt{1+c^{-2}\left|\pi-c^{-1}A\right|}}=-\nabla_x\cdot v_A. 
  $$ 
\end{proof}

As a result, solutions of the characteristic system (\ref{Characteristics X Gen})-(\ref{Characteristics P Gen}) satisfy the volume preserving property. Specifically, for any fixed $s, t\in I$, the map $\Xi(s, t,\cdot):\mathbb{R}^6\rightarrow\mathbb{R}^6$ is a $C^1$-diffeomorphism with inverse $\Xi^{-1}(s,t,\xi)=\Xi(t,s,\xi)$ and Jacobian determinant; see \cite[Corollary V.3.1]{Hartman}
\begin{equation}
\det\frac{\partial\Xi(s,t,\xi)}{\partial\xi}=1.\nonumber
\end{equation}

These properties of the characteristic flow lead to the following result:

\begin{lemma}
\label{Volume Preserving Lemma}
  Let $\left(\Phi,A\right)\in C(I,C^2(\mathbb{R}^3);\mathbb{R}\times\mathbb{R}^3)$ be given in some gauge and let $v_A$ be given by (\ref{Relativistic V With A}). Assume that $\nabla\Phi$ and $\nabla A^i$, $i=1,2,3$ are bounded on $J\times\mathbb{R}^3$ for every compact subinterval $J\subset I$. Let $f_0\in C^1\left(\mathbb{R}^6;\mathbb{R}\right)$ and denote by $\Xi=\left(X,\Pi\right)$ the characteristic flow solving (\ref{Characteristics X Gen})-(\ref{Characteristics P Gen}). Then, the function $f(t,\xi)=f_0(\Xi(0,t,\xi))$ defined on $I\times\mathbb{R}^6$ is the unique $C^1$ solution of the Cauchy problem for
\begin{equation}
\label{Vlasov Potentials}
\partial_tf+v_A\cdot\nabla_xf-\left[\nabla\Phi-c^{-1}v^i_A\nabla A^i\right]\cdot\nabla_{\pi} f=0.
\end{equation}
Moreover, if $f_0\geq0$ then $f\geq0$. Also, for $t\in I$ we have that 
$$  \texttt{supp}f(t)=\Xi(t,\texttt{supp}f_0),
$$  
and for each $1\leq q\leq\infty$, $t\in I$ we have
$$ \left\|f(t)\right\|_{L^q_{x,\pi}}=\left\|f_0\right\|_{L^q_{x,\pi}}.
$$
Conversely, if $f$ is a $C^1$ solution of the Cauchy problem for (\ref{Vlasov Potentials}), then $f$ is constant along each solution of the characteristic system (\ref{Characteristics X Gen})-(\ref{Characteristics P Gen}).
\end{lemma}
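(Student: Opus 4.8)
The plan is to build $f$ by the method of characteristics and then simply read off every assertion from the flow map $\Xi$. The first step is to upgrade the \emph{local} solvability of (\ref{Characteristics X Gen})--(\ref{Characteristics P Gen}) recalled before the lemma to \emph{global} solvability on $I\times I\times\mathbb{R}^6$. The key observation is that on $J\times\mathbb{R}^6$, for any compact $J\subset I$, the right-hand side of (\ref{Characteristics X Gen})--(\ref{Characteristics P Gen}) is bounded: indeed $|v_A|<c$ directly from (\ref{Relativistic V With A}), while $|\nabla\Phi-c^{-1}v_A^i\nabla A^i|$ is bounded by the hypothesis that $\nabla\Phi$ and the $\nabla A^i$ are bounded on $J\times\mathbb{R}^3$. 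Hence no characteristic can leave every compact set in finite time, and the standard continuation criterion (see \cite[Chapters II and V]{Hartman}) forces each maximal solution to be defined for all $s\in I$; together with the smooth dependence on data already quoted this gives $\Xi\in C^1(I\times I\times\mathbb{R}^6;\mathbb{R}^6)$. Uniqueness for the ODE system then yields the flow identity $\Xi(r,s,\Xi(s,t,\xi))=\Xi(r,t,\xi)$ for all $r,s,t\in I$, and in particular $\Xi(r,t,\cdot)^{-1}=\Xi(t,r,\cdot)$.

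Next I would verify that $f(t,\xi):=f_0(\Xi(0,t,\xi))$ solves the Cauchy problem. Since $f_0\in C^1(\mathbb{R}^6;\mathbb{R})$ and $\Xi\in C^1$, $f$ is $C^1$ on $I\times\mathbb{R}^6$, and $f(0,\xi)=f_0(\Xi(0,0,\xi))=f_0(\xi)$. Fix $(t,\xi)$ and set $g(s):=f(s,\Xi(s,t,\xi))$. By the flow identity, $g(s)=f_0(\Xi(0,s,\Xi(s,t,\xi)))=f_0(\Xi(0,t,\xi))$ is independent of $s$, so $g'(s)\equiv 0$. Differentiating $g$ by the chain rule and evaluating at $s=t$, where $\Xi(t,t,\xi)=\xi$ and $\dot\Xi(t,t,\xi)$ is the right-hand side of (\ref{Characteristics X Gen})--(\ref{Characteristics P Gen}) at $(t,\xi)$, produces exactly
\begin{equation*}
\partial_t f+v_A\cdot\nabla_x f-\left[\nabla\Phi-c^{-1}v_A^i\nabla A^i\right]\cdot\nabla_\pi f=0 .
\end{equation*}
Running the same computation in reverse gives both uniqueness and the converse statement at once: if $\tilde f\in C^1$ solves (\ref{Vlasov Potentials}), then $s\mapsto\tilde f(s,\Xi(s,t,\xi))$ has vanishing $s$-derivative by the PDE, hence is constant in $s$, so $\tilde f$ is constant along characteristics and $\tilde f(t,\xi)=\tilde f(0,\Xi(0,t,\xi))=f_0(\Xi(0,t,\xi))=f(t,\xi)$.

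The remaining qualitative properties all follow from the fact, recorded just before the lemma, that $\Xi(0,t,\cdot):\mathbb{R}^6\to\mathbb{R}^6$ is a $C^1$-diffeomorphism with inverse $\Xi(t,0,\cdot)$ and Jacobian determinant $1$. Nonnegativity of $f$ is immediate from $f(t,\xi)=f_0(\Xi(0,t,\xi))$. For the support identity, $f(t,\xi)\neq 0$ iff $\Xi(0,t,\xi)$ lies in $\{f_0\neq 0\}$; passing to closures and using that $\Xi(0,t,\cdot)$ is a homeomorphism gives $\texttt{supp}\,f(t)=\Xi(t,0,\texttt{supp}\,f_0)=:\Xi(t,\texttt{supp}\,f_0)$. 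For the $L^q$ identity with $q<\infty$, the change of variables $\eta=\Xi(0,t,\xi)$, i.e. $\xi=\Xi(t,0,\eta)$, together with the unit Jacobian, yields $\int_{\mathbb{R}^6}|f(t,\xi)|^q\,d\xi=\int_{\mathbb{R}^6}|f_0(\eta)|^q\,d\eta$, while $q=\infty$ holds since $\Xi(0,t,\cdot)$ is onto. The only genuinely delicate point is the very first step: promoting the local characteristic flow to a global one on all of $I$. That is exactly where the boundedness hypotheses on $\nabla\Phi$ and the $\nabla A^i$, and the relativistic bound $|v_A|<c$, enter; everything afterward is bookkeeping with the flow map.
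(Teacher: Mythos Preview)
Your proof is correct and follows the same route as the paper, which simply invokes Lemma \ref{Incompressible} and the standard method of characteristics (citing \cite[Chapter VI]{Hartman}); you have merely unpacked the details the paper leaves implicit. In particular, your care with the global existence of the flow via the boundedness hypotheses and $|v_A|<c$, and your derivation of the $L^q$ conservation from the unit Jacobian recorded just before the lemma, are exactly what the paper intends by ``the properties of $f$ are a direct consequence of the properties of the characteristic flow discussed above.''
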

\begin{remark} 
\label{Alpha In Vlasov}
In addition, if $\left(\Phi,A\right)(t)\in C^{2,\alpha}(\mathbb{R}^3;\mathbb{R}\times\mathbb{R}^3)$, $0<\alpha<1$, $t\in I$, and $f_0\in C^{1,\alpha}\left(\mathbb{R}^6;\mathbb{R}\right)$, then the unique $C^1$ solution $f(t,\xi)=f_0(\Xi(0,t,\xi))$ of the Cauchy problem for (\ref{Vlasov Potentials}) satisfies $f(t)\in C^{1,\alpha}(\mathbb{R}^6;\mathbb{R})$ for every $t\in I$.  
\end{remark}

\begin{proof}[Proof of Lemma \ref{Volume Preserving Lemma}]
In view of Lemma \ref{Incompressible}, the proof follows by the standard Cauchy's method of characteristics; see \cite[Chapter VI]{Hartman}. In particular, the properties of $f$ are a direct consequence of the properties of the characteristic flow discussed above.   
\end{proof}

We point out that (\ref{Vlasov Potentials}) is the proper Hamiltonian representation of the Vlasov equation (\ref{Vlasov For Maxwell}) in terms of the potentials, since the characteristic equations (\ref{Characteristics X Gen})-(\ref{Characteristics P Gen}) are Hamilton's equations for the Hamiltonian 
\begin{equation}
\label{Hamiltonian}
\mathcal{H}(t,x,\pi)=c^2\sqrt{1+c^{-2}\left|\pi-c^{-1}A(t,x)\right|^2}+\Phi(t,x)
\end{equation}
of a relativistic charged particle under the influence of an electromagnetic field of potentials $(\Phi,A)$. As before, in (\ref{Hamiltonian}) the charge and mass of the particle have been set to one.


\subsection{The Darwin Potentials}
\label{Darwin Potentials}

To determine the dynamical equations satisfied by the potentials we shall impose the Coulomb gauge condition, since it leads to the Darwin approximation of the Maxwell equations and ultimately to the RVD system. Throughout this section, unless we specify otherwise, we assume that both the charge and current densities $\rho$ and $j$ are smooth and given, and they satisfy the continuity equation 
\begin{equation}
\label{Continuity Equation}
\partial_t\rho+\nabla\cdot j=0.
\end{equation}
Formally, if we substitute the electric and magnetic fields in (\ref{Electric Field})-(\ref{Magnetic Field}) into the non-homogeneous Maxwell equations in (\ref{Maxwell 1})-(\ref{Maxwell 2}), we find that $\Phi$ and $A$ satisfy        
\begin{eqnarray}
  \Delta\Phi & = & -4\pi \rho-c^{-1}\partial_t\left(\nabla\cdot A\right),\\
  \Delta A-c^{-2}\partial_t^2A & = & -c^{-1}4\pi j+\nabla\left(\nabla\cdot A+c^{-1}\partial_t\Phi\right).
\end{eqnarray}
Therefore, in the Coulomb gauge (\ref{Coulomb Gauge}), the potentials satisfy 
\begin{eqnarray}
\label{Poisson Scalar Maxwell}
  \Delta\Phi & = & -4\pi\rho,\\
\label{Poisson Vector Maxwell}
  \Delta A-c^{-2}\partial_t^2A & = & -c^{-1}4\pi j+c^{-1}\nabla\partial_t\Phi.
\end{eqnarray}
On the other hand, any smooth solution $(\Phi,A)$ of the above system that satisfies the Coulomb gauge condition initially, will continue to do so for all times, and therefore the induced electromagnetic field will solve (\ref{Maxwell 1})-(\ref{Maxwell 2}). Indeed, if $(\Phi,A)$ is a smooth solution of (\ref{Poisson Scalar Maxwell})-(\ref{Poisson Vector Maxwell}) that satisfies $\left.\nabla\cdot A\right|_{t=0}=0$ and $\left.\partial_t(\nabla\cdot A)\right|_{t=0}=0$, then $g_C=\nabla\cdot A$ is the solution of
\begin{eqnarray} 
\Delta g_C-c^{-2}\partial^2_tg_C =-4\pi c^{-1}\left(\nabla\cdot j+\partial_t\rho\right)=0,\nonumber\\
\left.g_C\right|_{t=0}=0,\quad \left.\partial_tg_C\right|_{t=0}=0,\hspace{2.6cm}\nonumber
\end{eqnarray}
and the claim follows. Hence, the system of equations (\ref{Poisson Scalar Maxwell})-(\ref{Poisson Vector Maxwell}) complemented with (\ref{Coulomb Gauge}) is fully equivalent to the set of Maxwell equations (\ref{Maxwell 1})-(\ref{Maxwell 2}). 

We define the Darwin approximation of the Maxwell equations as the quasi-static limit of the system (\ref{Poisson Scalar Maxwell})-(\ref{Poisson Vector Maxwell}):
\begin{definition}
\label{Darwin}
    Let $(\rho,j):I\times\mathbb{R}^3\rightarrow\mathbb{R}\times\mathbb{R}^3$ be given and satisfy the continuity equation (\ref{Continuity Equation}). The set of potentials $(\Phi,A)$ is called a classical solution of the Darwin equations if $\Phi\in C^1(I,C^2(\mathbb{R}^3);\mathbb{R})$, $A\in C(I,C^2(\mathbb{R}^3);\mathbb{R}^3)$ and, on $I\times\mathbb{R}^3$, 
\begin{eqnarray}
\label{Poisson Scalar Darwin}
  \Delta\Phi & = & -4\pi \rho,\\
\label{Poisson Vector Darwin}
  \Delta A & = & -c^{-1}4\pi j+c^{-1}\nabla\partial_t\Phi.
\end{eqnarray}
\end{definition}  

The system (\ref{Poisson Scalar Darwin})-(\ref{Poisson Vector Darwin}) has the following explicit solution, as proved below:

\begin{definition} 
\label{Darwin Potentials Definition}
For the charge and current densities $(\rho,j):I\times\mathbb{R}^3\rightarrow\mathbb{R}\times\mathbb{R}^3$ we formally define the set of Darwin potentials $(\Phi_D,A_D):I\times\mathbb{R}^3\rightarrow\mathbb{R}^3\times\mathbb{R}^3$ by
\begin{eqnarray}
\label{Scalar Potential Darwin}
  \Phi_D(t,x) & = & \int_{\mathbb{R}^3}\rho(t,y)\frac{dy}{\left|y-x\right|},\\
\label{Vector Potential Darwin}
  A_D(t,x) & = & \frac{1}{2c}\int_{\mathbb{R}^3}\left[\texttt{id}+\omega\otimes\omega\right]j(t,y)\frac{dy}{\left|y-x\right|},
\end{eqnarray}
where $\omega=\left(y-x\right)/\left|y-x\right|$ and $\texttt{id}$ denotes the identity matrix.
\end{definition}

\begin{lemma}
\label{Darwin Potentials Lemma}
 Let $\rho\in C^1(I,C^{\alpha}_0(\mathbb{R}^3);\mathbb{R})$ and $j\in C(I,C^{1,\alpha}_0(\mathbb{R}^3);\mathbb{R}^3)$, $0<\alpha<1$, be given -they do not need to satisfy the continuity equation (\ref{Continuity Equation})-. Define the field
\begin{equation}
\label{Transversal Current}
   \mathbb{P}j(t,x)=j(t,x)+\frac{1}{4\pi}\nabla\int_{\mathbb{R}^3}\nabla\cdot j(t,y)\frac{dy}{\left|y-x\right|},\quad t\in I,\quad x\in\mathbb{R}^3.
\end{equation}
Then the following holds: 
    \begin{description}
       \item [(a)] The scalar potential $\Phi_D$ is the unique solution in $C^1(I,C^{2,\alpha}(\mathbb{R}^3);\mathbb{R})$ of 
          \begin{equation}
          \label{Poisson Phi}
           \Delta\Phi(t,x)=-4\pi\rho(t,x),\quad \lim_{\left|x\right|\rightarrow\infty}\Phi(t,x)=0.
           \end{equation}
          It satisfies 
           $$ \nabla\Phi_D(t,x)=\int_{\mathbb{R}^3}\rho(t,y)\frac{\omega dy}{\left|y-x\right|^2}. 
           $$
       \item [(b)] $\mathbb{P}j\in C(I,C^{1,\alpha}(\mathbb{R}^3);\mathbb{R}^3)$. It satisfies $\nabla\cdot\mathbb{P}j=0$ (i.e., $\mathbb{P}j$ is the transversal component of the current density $j$), and $\mathbb{P}j(x)=O(\left|x\right|^{-2})$ for $\left|x\right|\rightarrow\infty$. 
       \item [(c)] The vector potential $A_D$ is the unique solution in $C(I,C^{3,\alpha}(\mathbb{R}^3);\mathbb{R}^3)$ of 
          \begin{equation}
          \label{Poisson A}
            \Delta A(t,x)=-4\pi c^{-1}\mathbb{P}j(t,x),\quad \lim_{\left|x\right|\rightarrow\infty}\left|A(t,x)\right|=0.
           \end{equation}
          It satisfies 
           \begin{equation}
           \label{Derivative A Representation}
           \partial_xA_D(t,x)=\frac{1}{2c}\int_{\mathbb{R}^3}\left\{\omega\otimes j-j\otimes\omega+\left[3\omega\otimes\omega-\texttt{id}\right]\left(j\cdot\omega\right)\right\}\frac{dy}{\left|y-x\right|^2}, 
           \end{equation}     
with $j=j(t,y)$. In particular,
          $$\nabla\cdot A_D(t,x)=0 \quad \hbox{and} \quad \nabla\times A_D(t,x)=\frac{1}{c}\int_{\mathbb{R}^3}\omega\times j(t,y)\frac{dy}{\left|y-x\right|^2}.$$ 
    \end{description} 
\end{lemma}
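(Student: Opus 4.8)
The three parts reduce to classical Newtonian potential theory, applied in turn to $\rho$, to $\nabla\cdot j$, and --- after an algebraic rewriting of the kernel --- to $j$ itself; the only point that genuinely uses the H\"older hypotheses is the interchange of second derivatives with the singular integrals.

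\textbf{Part (a).} Since $-\Delta\left(1/\left|x\right|\right)=4\pi\delta_0$, the function $\Phi_D(t,\cdot)$ is the Newtonian potential of $4\pi\rho(t,\cdot)$. The classical Schauder theory for the Newtonian potential of a $C^\alpha_0$ density yields $\Phi_D(t)\in C^{2,\alpha}(\mathbb R^3)$ together with $\Delta\Phi_D=-4\pi\rho$ and the stated gradient formula, obtained by differentiating under the integral via $\nabla_x\left|y-x\right|^{-1}=\omega\left|y-x\right|^{-2}$ (legitimate, since the resulting integrand is still locally integrable and $\rho(t)$ has compact support). The limit $\Phi_D(t,x)\to0$ as $\left|x\right|\to\infty$ is immediate from the compact support of $\rho(t)$. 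The map $t\mapsto\Phi_D(t)\in C^{2,\alpha}$ is of class $C^1$ because $\rho\in C^1(I,C^\alpha_0)$ and one may differentiate under the integral in $t$. For uniqueness, the difference of two solutions of (\ref{Poisson Phi}) is harmonic on $\mathbb R^3$ and vanishes at infinity, hence vanishes identically by Liouville's theorem.

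\textbf{Part (b).} Put $\psi(t,x)=\int_{\mathbb R^3}\nabla\cdot j(t,y)\,\left|y-x\right|^{-1}\,dy$, the Newtonian potential of $\nabla\cdot j(t,\cdot)\in C^\alpha_0(\mathbb R^3)$, so that $\mathbb Pj=j+\tfrac1{4\pi}\nabla\psi$. Applying part (a) with $\nabla\cdot j$ in place of $\rho$ gives $\psi(t)\in C^{2,\alpha}$ and $\Delta\psi=-4\pi\nabla\cdot j$, whence $\mathbb Pj(t)\in C^{1,\alpha}(\mathbb R^3;\mathbb R^3)$ and $\nabla\cdot\mathbb Pj=\nabla\cdot j+\tfrac1{4\pi}\Delta\psi=0$; continuity in $t$ follows from $j\in C(I,C^{1,\alpha}_0)$. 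For the decay, integrate by parts (licit since $j(t)$ is compactly supported) to get $\nabla\psi(t,x)=\int_{\mathbb R^3}\nabla\cdot j(t,y)\,\omega\,\left|y-x\right|^{-2}\,dy$; outside a fixed ball containing the support of $j(t)$ one has $j(t,x)=0$, hence $\mathbb Pj(t,x)=\tfrac1{4\pi}\nabla\psi(t,x)=O(\left|x\right|^{-2})$ as $\left|x\right|\to\infty$ (in fact $O(\left|x\right|^{-3})$, since $\int\nabla\cdot j\,dy=0$, but $O(\left|x\right|^{-2})$ suffices).

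\textbf{Part (c).} The key is the kernel identity $\left(\delta_{ij}+\omega_i\omega_j\right)\left|y-x\right|^{-1}=2\delta_{ij}\left|y-x\right|^{-1}-\partial_{x_i}\partial_{x_j}\left|y-x\right|$, which follows from $\partial_{x_i}\partial_{x_j}\left|y-x\right|=\left(\delta_{ij}-\omega_i\omega_j\right)\left|y-x\right|^{-1}$. Substituting it into (\ref{Vector Potential Darwin}) gives
\begin{equation}
A_{D,i}(t,x)=\frac1c\int_{\mathbb R^3}\frac{j_i(t,y)}{\left|y-x\right|}\,dy-\frac1{2c}\,\partial_{x_i}\partial_{x_j}\int_{\mathbb R^3}\left|y-x\right|\,j_j(t,y)\,dy,\nonumber
\end{equation}
both integrals being well defined because $j(t)$ has compact support. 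Applying $\Delta_x$, using $\Delta_x\left|y-x\right|^{-1}=-4\pi\delta(y-x)$ and $\Delta_x\left|y-x\right|=2\left|y-x\right|^{-1}$, and then integrating by parts to turn $\partial_{x_j}\!\int j_j\left|y-x\right|^{-1}dy$ into $\int(\nabla\cdot j)\left|y-x\right|^{-1}dy$, one obtains $\Delta A_D=-4\pi c^{-1}j-c^{-1}\nabla\int(\nabla\cdot j)\left|y-x\right|^{-1}dy=-4\pi c^{-1}\mathbb Pj$, which is (\ref{Poisson A}). The decay of $A_D$ and the $C^1$-in-$t$ regularity follow again from the compact support of $j(t)$ and from $j\in C(I,C^{1,\alpha}_0)$; $A_D(t)\in C^{3,\alpha}$ follows from Schauder estimates applied to (\ref{Poisson A}) since $\mathbb Pj(t)\in C^{1,\alpha}$; and uniqueness follows from Liouville's theorem as in part (a). The derivative representation (\ref{Derivative A Representation}) is obtained by differentiating (\ref{Vector Potential Darwin}) under the integral and collecting the resulting tensors in $\omega$ and $j$; contracting the free index with the differentiation index gives $\nabla\cdot A_D=0$ (the two surviving integrals cancel after one more integration by parts), while antisymmetrizing in the two indices kills the terms symmetric in them and leaves $\nabla\times A_D(t,x)=c^{-1}\int_{\mathbb R^3}\omega\times j(t,y)\,\left|y-x\right|^{-2}\,dy$.

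\textbf{Main obstacle.} The delicate step is the identity $\Delta A_D=-4\pi c^{-1}\mathbb Pj$ together with $A_D(t)\in C^{3,\alpha}$: one must justify commuting $\Delta_x$ past the weakly singular kernel $\left[\texttt{id}+\omega\otimes\omega\right]\left|y-x\right|^{-1}$, and this is exactly where the H\"older regularity of $j$ (hence of $\nabla\cdot j$) is needed --- the same point at which the second-derivative estimates for Newtonian potentials require H\"older data. Everything else is bookkeeping with $\partial_{x_i}\left|y-x\right|=-\omega_i$, $\partial_{x_i}\left|y-x\right|^{-1}=\omega_i\left|y-x\right|^{-2}$, and the divergence theorem on the support of $j$.
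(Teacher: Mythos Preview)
Your argument is correct; parts (a) and (b) are essentially the paper's proof, but in part (c) you take a different route that is worth recording.

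The paper first integrates by parts in the kernel to obtain the alternative representation
\[
A_D(x)=\frac{1}{c}\int\frac{j(y)}{|y-x|}\,dy+\frac{1}{2c}\int(\nabla\cdot j)(y)\,\omega\,dy,
\]
and then computes the Laplacian of the second term by a careful distributional calculation (showing first that $\partial_{x_k}\!\int(\nabla\cdot j)\omega^i\,dy$ equals a certain convolution, then that a further derivative produces $-2\partial_{x_i}\!\int(\nabla\cdot j)/r$). You instead use the pointwise identity $\bigl(\delta_{ij}+\omega_i\omega_j\bigr)r^{-1}=2\delta_{ij}r^{-1}-\partial_{x_i}\partial_{x_j}r$ to write $A_{D,i}=\tfrac1c\int j_i/r-\tfrac1{2c}\partial_{x_i}\partial_{x_j}\!\int r\,j_j$, after which $\Delta A_D=-4\pi c^{-1}\mathbb Pj$ follows from $\Delta r=2/r$ and a single integration by parts. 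This is slicker: it replaces the paper's two-step distributional computation by one algebraic identity plus the elementary fact $\Delta|y-x|=2/|y-x|$, and it makes the Coulomb gauge $\nabla\cdot A_D=0$ fall out by the same device (take $\partial_{x_i}$ and watch the two pieces cancel). The price is that you must check that $U(x)=\int|y-x|j_j(y)\,dy$ is smooth enough to commute $\Delta_x$ with $\partial_{x_i}\partial_{x_j}$; this holds because $\Delta U=2\int j_j/r\in C^{3,\alpha}$ by the Newtonian theory for $j\in C^{1,\alpha}_0$, so $U\in C^{5,\alpha}$ by elliptic regularity. Your ``Main obstacle'' paragraph correctly flags this as the only place the H\"older hypothesis is genuinely used.

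One small inaccuracy: from the derivative formula (\ref{Derivative A Representation}) the trace vanishes pointwise in the integrand ($\omega\cdot j-j\cdot\omega+(3|\omega|^2-3)(j\cdot\omega)\equiv 0$), so no integration by parts is needed there; your parenthetical remark applies rather to the alternative derivation via the decomposition $A_{D,i}=\tfrac1c\int j_i/r-\tfrac1{2c}\partial_i\partial_j\!\int r j_j$, where indeed the two contributions to $\partial_i A_{D,i}$ cancel after one integration by parts.
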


\begin{corollary} 
\label{Corollary Darwin Potentials Solutions}
If $\rho$ and $j$, as given in Lemma \ref{Darwin Potentials Lemma}, satisfy the continuity equation (\ref{Continuity Equation}), then 
       $$ \mathbb{P}j(t,x)=j(t,x)-\frac{1}{4\pi}\nabla\partial_t\Phi_D(t,x),\quad t\in I,\quad x\in\mathbb{R}^3,$$
and thus the Darwin potentials (\ref{Scalar Potential Darwin})-(\ref{Vector Potential Darwin}) are the unique classical solution of the Darwin equations (\ref{Poisson Scalar Darwin})-(\ref{Poisson Vector Darwin}). 
\end{corollary}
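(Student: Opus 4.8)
The plan is to first establish the pointwise identity $\mathbb{P}j(t,x)=j(t,x)-\frac{1}{4\pi}\nabla\partial_t\Phi_D(t,x)$, and then feed it into the Darwin equations together with the existence-and-uniqueness parts of Lemma~\ref{Darwin Potentials Lemma}. The algebra is immediate once the differentiation is justified, so the proof is short.

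\textbf{Step 1: the identity.} Starting from the representation $\Phi_D(t,x)=\int_{\mathbb{R}^3}\rho(t,y)|y-x|^{-1}\,dy$ of Lemma~\ref{Darwin Potentials Lemma}(a), I would differentiate in $t$ under the integral sign. This is legitimate because $\rho\in C^1(I,C^\alpha_0(\mathbb{R}^3))$: on every compact $J\subset I$ the supports of $\rho(t)$, $t\in J$, stay in a fixed ball, $\partial_t\rho$ is continuous and compactly supported, and $|y-x|^{-1}$ is locally integrable in $y$, uniformly for $x$ in compacta. Hence $\partial_t\Phi_D(t,x)=\int_{\mathbb{R}^3}\partial_t\rho(t,y)|y-x|^{-1}\,dy$, and the continuity equation $\partial_t\rho=-\nabla\cdot j$ turns this into $\partial_t\Phi_D(t,x)=-\int_{\mathbb{R}^3}\nabla\cdot j(t,y)|y-x|^{-1}\,dy$. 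Since $\nabla\cdot j\in C(I,C^\alpha_0)$, applying the gradient formula of Lemma~\ref{Darwin Potentials Lemma}(a) with $\nabla\cdot j$ in the role of $\rho$ gives $\nabla_x\int_{\mathbb{R}^3}\nabla\cdot j(t,y)|y-x|^{-1}\,dy=\int_{\mathbb{R}^3}\nabla\cdot j(t,y)\,\omega\,|y-x|^{-2}\,dy$, so that
\[
\frac{1}{4\pi}\nabla_x\int_{\mathbb{R}^3}\nabla\cdot j(t,y)\frac{dy}{|y-x|}=-\frac{1}{4\pi}\nabla\partial_t\Phi_D(t,x).
\]
Comparing with the definition (\ref{Transversal Current}) of $\mathbb{P}j$ yields $\mathbb{P}j(t,x)=j(t,x)-\frac{1}{4\pi}\nabla\partial_t\Phi_D(t,x)$.

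\textbf{Step 2: $(\Phi_D,A_D)$ solves the Darwin equations.} By Lemma~\ref{Darwin Potentials Lemma}(a) and (c), $\Phi_D\in C^1(I,C^{2,\alpha})\subset C^1(I,C^2)$ and $A_D\in C(I,C^{3,\alpha})\subset C(I,C^2)$, so the regularity required in Definition~\ref{Darwin} holds. Equation (\ref{Poisson Scalar Darwin}) is exactly the content of Lemma~\ref{Darwin Potentials Lemma}(a). For (\ref{Poisson Vector Darwin}) I would start from $\Delta A_D=-4\pi c^{-1}\mathbb{P}j$ (Lemma~\ref{Darwin Potentials Lemma}(c)) and substitute the identity of Step 1 to obtain $\Delta A_D=-4\pi c^{-1}j+c^{-1}\nabla\partial_t\Phi_D$, which is (\ref{Poisson Vector Darwin}).

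\textbf{Step 3: uniqueness.} Interpreting ``unique classical solution'' in the decaying class consistent with Lemma~\ref{Darwin Potentials Lemma}, suppose $(\Phi,A)$ is another such solution. Then $\Delta\Phi=-4\pi\rho$ together with the decay at infinity forces $\Phi=\Phi_D$ by the uniqueness statement in Lemma~\ref{Darwin Potentials Lemma}(a); consequently the right-hand side of (\ref{Poisson Vector Darwin}) equals $-4\pi c^{-1}j+c^{-1}\nabla\partial_t\Phi_D=-4\pi c^{-1}\mathbb{P}j$ by Step 1, and the uniqueness statement in Lemma~\ref{Darwin Potentials Lemma}(c) then gives $A=A_D$.

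\textbf{Main obstacle.} Everything reduces to Step 1, and the only genuinely non-routine point there is making the formal manipulations rigorous: differentiation in $t$ under the singular integral, insertion of the continuity equation, and the interchange of $\nabla_x$ with $\partial_t$. All of these are controlled by the compact-support and H\"older hypotheses on $\rho$ and $j$ via the Newtonian-potential regularity already packaged in Lemma~\ref{Darwin Potentials Lemma}(a). A secondary, more bookkeeping-type issue is stating precisely the (decaying) function class in which the asserted uniqueness is to be understood, since Definition~\ref{Darwin} alone imposes no boundary condition at infinity.
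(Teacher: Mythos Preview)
Your proposal is correct and matches the paper's intended argument: the paper states the corollary without a separate proof, treating it as an immediate consequence of Lemma~\ref{Darwin Potentials Lemma} once one substitutes $\partial_t\rho=-\nabla\cdot j$ into the Newtonian potential defining $\Phi_D$. Your Steps~1--3 spell out precisely this, and your closing remark about the decay class needed for uniqueness is a fair observation since Definition~\ref{Darwin} by itself imposes no boundary condition at infinity.
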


\begin{proof}[Proof of Lemma \ref{Darwin Potentials Lemma}]
Without loss of generality we omit the time dependence. 

The proof of (a) is a standard result for the Poisson equation. Existence (in a much weaker sense) can be found, for instance, in \cite[Theorem 6.21]{Lieb} while the regularity of the solution is given in \cite[Theorem 10.3]{Lieb}. Uniqueness is known as Liouville's theorem \cite[Theorem 7 Section 4.2]{McOwen}.

To prove (b), notice that $\nabla\cdot j\in C^{\alpha}_0(\mathbb{R}^3;\mathbb{R})$. Hence, as in (a), the integral in the right-hand side of (\ref{Transversal Current}) is the $C^{2,\alpha}$-solution of the Poisson equation $\Delta u=-4\pi\nabla\cdot j$, $\lim_{\left|x\right|\rightarrow\infty}u(x)=0$. That $\nabla u(x)=O(\left|x\right|^{-2})$ for $\left|x\right|\rightarrow\infty$ is well known, which in turn provides the decay for $\mathbb{P}j$, since $j$ has compact support. Moreover, 
$$ \nabla\cdot\mathbb{P}j(x) = \nabla\cdot j(x) +\frac{1}{4\pi}\Delta\int_{\mathbb{R}^3}\nabla\cdot j(y)\frac{dy}{\left|y-x\right|} = 0.
$$

As for (c), we first prove the following lemma:

\begin{lemma}
\label{Equivalent Representation A Lemma}
   The Darwin potential $A_D$ in (\ref{Vector Potential Darwin}) has the equivalent representation
\begin{equation}
\label{Equivalent Representation A} A_D(x)=\frac{1}{c}\int_{\mathbb{R}^3}j(y)\frac{dy}{\left|y-x\right|}+\frac{1}{2c}\int_{\mathbb{R}^3}\nabla\cdot j(y)\frac{y-x}{\left|y-x\right|}dy.
\end{equation}
\end{lemma} 
\begin{proof} The current density $j$ has compact support, so standard arguments can show that the right-hand side (\texttt{RHS}) of the above expression is well defined. The divergence theorem then yields, 
\begin{eqnarray}
  \texttt{RHS} & = & \frac{1}{c}\int_{\mathbb{R}^3}j(y)\frac{dy}{\left|y-x\right|}-\frac{1}{2c}\int_{\mathbb{R}^3}\left( j(y)\cdot\nabla\right)\omega dy\nonumber\\
  & = & \frac{1}{c}\int_{\mathbb{R}^3}\left\{j(y)-\frac{1}{2}\left[j(y)-\omega\left(j(y)\cdot\omega\right)\right]\right\}\frac{dy}{\left|y-x\right|}\nonumber\\
  & = & \frac{1}{2c}\int_{\mathbb{R}^3}\left[j(y)+\omega\left(j(y)\cdot\omega\right)\right]\frac{dy}{\left|y-x\right|},\nonumber
\end{eqnarray}
which is precisely the Darwin potential $A_D$ in (\ref{Vector Potential Darwin}). The use of the divergence theorem is justified by the following standard argument: remove a small ball about $x\in\mathbb{R}^3$ in the domain of integration so we can avoid the singularity at $y=x$, then use the divergence theorem and note that the boundary term corresponding to the small ball vanishes as its radius tends to $0$.
\end{proof}

We shall now deduce by direct computation from (\ref{Equivalent Representation A}), the Poisson equation given by (\ref{Poisson A}). To this end, we first recall that just as in part (a), 
\begin{equation}
\label{Delta 1} \Delta\left\{\frac{1}{c}\int_{\mathbb{R}^3}j(y)\frac{dy}{\left|y-x\right|}\right\}=-\frac{4\pi}{c}j(x).
\end{equation}
The integral in curly brackets is in $C^{3,\alpha}(\mathbb{R}^3;\mathbb{R}^3)$, due to the regularity of $j$. Next, we show that the following equality holds in the sense of distributions,
\begin{equation}
\label{Delta Intermediate 1}
\partial_{x_k}\left\{\int_{\mathbb{R}^3}\nabla\cdot j(y)\omega^idy\right\} = -\int_{\mathbb{R}^3}\nabla\cdot j(y)\left[\delta_{ik}-\omega^i\omega^k\right]\frac{dy}{\left|y-x\right|}.
\end{equation}

Let $r=\left|y-x\right|>0$. First, note that $\partial_{x_k}\omega^i=-r^{-1}\left[\delta_{ik}-\omega^i\omega^k\right]$, and that the integral on the right-hand side of (\ref{Delta Intermediate 1}) is well defined for almost all $x\in\mathbb{R}^3$, since $\nabla\cdot j\in C_0^{\alpha}(\mathbb{R}^3,\mathbb{R})$ and the kernel is bounded from above by $r^{-1}$. For $\phi\in C^{\infty}_0(\mathbb{R}^3;\mathbb{R})$, the function  $(x,y)\mapsto\partial_{x_k}\phi(x)\nabla\cdot j(y)\omega^i$ is integrable on $\mathbb{R}^3\times\mathbb{R}^3$. Hence, we can use Fubini's theorem to find that
\begin{eqnarray}
\int_{\mathbb{R}^3}\partial_{x_k}\phi(x)\left\{\int_{\mathbb{R}^3}\nabla\cdot j(y)\omega^idy\right\}dx & = & \int_{\mathbb{R}^3}\left\{\int_{\mathbb{R}^3}\left(\partial_{x_k}\phi(x)\right)\omega^idx\right\}\nabla\cdot j(y)dy\nonumber\\
 & = &-\int_{\mathbb{R}^3}\left\{\int_{\mathbb{R}^3}\phi(x)\partial_{x_k}\omega^idx\right\}\nabla\cdot j(y)dy,\nonumber
\end{eqnarray}
where the second equality is justified by a standard limiting process and integrations by parts, similar to the argument at the end of the proof of Lemma \ref{Equivalent Representation A Lemma}. Then, another use of Fubini's theorem yields (\ref{Delta Intermediate 1}) in the sense of distributions, as claimed. 

Actually, the equality in (\ref{Delta Intermediate 1}) holds in the classical sense. By the standard theory of the Poisson equation, the right-hand side of (\ref{Delta Intermediate 1}) is a function in $C^{2,\alpha}(\mathbb{R}^3;\mathbb{R})$; see \cite[Theorem 10.3]{Lieb}. Therefore, in view of the theorem for the equivalence of classical and distributional derivatives, the integral in curly brackets on the left-hand side of (\ref{Delta Intermediate 1}) is in $C^{3,\alpha}(\mathbb{R}^3;\mathbb{R})$; see \cite[Theorem 6.10]{Lieb}.  

Now, since $\partial_{x_k}r^{-1}=r^{-2}\omega^k$ and $\omega^k\left[\delta_{ik}-\omega^i\omega^k\right]\equiv0$, we have
$$ \partial_{x_k}\left\{r^{-1}\left[\delta_{ik}-\omega^i\omega^k\right]\right\}=-r^{-1}\omega^i\partial_{x_k}\omega^k=2r^{-2}\omega^i.
$$
Therefore, similar arguments to those used above yield
\begin{eqnarray}
\label{Delta Intermediate 2}
\lefteqn{\partial_{x_k}\left\{-\int_{\mathbb{R}^3}\nabla\cdot j(y)\left[\delta_{ik}-\omega^i\omega^k\right]\frac{dy}{\left|y-x\right|}\right\}}\nonumber\\
& = & -2\int_{\mathbb{R}^3}\nabla\cdot j(y)\frac{\omega^idy}{\left|y-x\right|^2}= -2\partial_{x_i}\int_{\mathbb{R}^3}\nabla\cdot j(y)\frac{dy}{\left|y-x\right|}.\hspace{.3cm}
\end{eqnarray}
Hence, since $\Delta\equiv\nabla\cdot\nabla$, we can combine (\ref{Delta Intermediate 1}) and (\ref{Delta Intermediate 2}) to find that
\begin{eqnarray} 
\label{Delta 2}
\Delta\left\{\frac{1}{2c}\int_{\mathbb{R}^3}\nabla\cdot j(y)\frac{y-x}{\left|y-x\right|}dy\right\} 
& = & -\frac{1}{c}\nabla\int_{\mathbb{R}^3}\nabla\cdot j(y)\frac{dy}{\left|y-x\right|}.
\end{eqnarray}
Then, we add (\ref{Delta 1}) and (\ref{Delta 2}) to conclude that $\Delta A_D=-4\pi c^{-1}\mathbb{P}j$ holds on $\mathbb{R}^3$, and so $A_D$ is a $C^{3,\alpha}$ solution of (\ref{Poisson A}). This solution is unique in view of the Liouville's theorem \cite[Theorem 7 Section 4.2]{McOwen}.

The representation (\ref{Derivative A Representation}) of $\partial_xA$ can be proved as follows. Since $j_A$ is regular enough, we shift the $x$-variable into the argument of $j_A$ and differentiate (\ref{Vector Potential Darwin}) under the integral. Then, we move the derivative to the kernel of (\ref{Vector Potential Darwin}) helped by the same standard argument at the end of the proof of Lemma \ref{Equivalent Representation A Lemma}. In doing so, we notice that for $r>0$, the $imk$-th entry of $\partial_x\mathcal{K}$ is
$$\partial_{x_k}\left\{r^{-1}\left[\delta_{im}+\omega^i\omega^m\right]\right\}=r^{-2}\left[\delta_{im}\omega^k-\delta_{km}\omega^i-\delta_{ik}\omega^m+3\omega^i\omega^k\omega^m\right],$$ 
where $\mathcal{K}=r^{-1}\left[\texttt{id}+\omega\otimes\omega\right]$. This leads to (\ref{Derivative A Representation}). Finally, it is not difficult to check that
$$\nabla\cdot A_D=\texttt{Trace}(\partial_xA_D)=0\quad \hbox{and}\quad \left(\nabla\times A_D\right)^i=\frac{1}{2}\left(\partial_xA_D-(\partial_xA_D)^T\right)_{kl},$$ where $(\partial_xA_D)^T$ denotes the transpose of $\partial_xA_D$, and $i,k,l\in\left\{1,2,3\right\}$ are given according to the cyclic index-permutation.
\end{proof}

It is easy to check that the Darwin equations in Definition \ref{Darwin} are formally equivalent to the equations (\ref{Components Electric Field}) and (\ref{Darwin 1})-(\ref{Darwin 2}) given in the Introduction. To see this let us define $$ E_L=-\nabla\Phi_D, \quad E_T=-c^{-1}\partial_tA_D, \quad B=\nabla\times A_D.
$$
We have to show that $(E_L,E_T,B)$ formally solves (\ref{Darwin 1})-(\ref{Darwin 2}) provided that the charge and current densities satisfy the continuity equation. Clearly $\nabla\cdot B=0$, and since $\nabla\cdot A_D=0$, we have 
\begin{eqnarray}
\nabla\times B = \nabla\left(\nabla\cdot A_D\right)-\Delta A_D = 4\pi c^{-1}j-c^{-1}\partial_t\nabla\Phi_D =4\pi c^{-1}j+c^{-1}\partial_tE_L,\nonumber
\end{eqnarray}
which is (\ref{Darwin 1}). Easy computations yield (\ref{Components Electric Field}) and (\ref{Darwin 2}), and the claim follows. 

We conclude this section with a-priori estimates on the potentials and their \textit{space} derivatives. For simplicity and without loss of generality, we shall neglect the time dependence.

\begin{lemma} 
\label{Pallard}
For $1\leq m<3$ set $r_0=3/(3-m)$ and let $r<r_0<s$. Then there exists a positive constant $C=C(m,r,s)$ such that for any $\Psi\in L^r\cap L^s(\mathbb{R}^3;\mathbb{R})$ 
$$ \left\|\int_{\mathbb{R}^3}\Psi(y)\frac{dy}{\left|y-\cdot\right|^m}\right\|_{L^{\infty}_x}\leq C(m,r,s)\left\|\Psi\right\|^{1-\lambda}_{L^{r}_x}\left\|\Psi\right\|^{\lambda}_{L^{s}_x},\quad \hbox{where}\quad\lambda=\frac{1-r/r_0}{1-r/s}.
$$
In particular, $C(m,1,\infty)=3\left(4\pi/m\right)^{m/3}/\left(3-m\right)$.
\end{lemma}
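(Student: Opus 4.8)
The plan is to bound the Riesz-type potential $x \mapsto \int_{\mathbb{R}^3}\Psi(y)\,|y-x|^{-m}\,dy$ pointwise by the classical device of splitting the domain of integration at a free radius $R>0$: write $\mathbb{R}^3 = B_R(x)\cup(\mathbb{R}^3\setminus B_R(x))$ and estimate the near-field piece with H\"older's inequality against $\|\Psi\|_{L^s}$ and the far-field piece against $\|\Psi\|_{L^r}$. The hypotheses $r<r_0<s$ are exactly what make the two resulting kernel integrals converge: on $B_R(x)$ one needs $|y-x|^{-m}\in L^{s'}$ locally, i.e. $ms'<3$, and $1/s'=1-1/s>m/3$ is equivalent to $s>3/(3-m)=r_0$; on the complement one needs $|y-x|^{-m}\in L^{r'}$ near infinity, i.e. $mr'>3$, and $1/r'=1-1/r<m/3$ is equivalent to $r<r_0$. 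In particular this shows the integral is everywhere finite for $\Psi\in L^r\cap L^s$. (When $r=1$ or $s=\infty$ the corresponding H\"older step degenerates into the obvious sup-norm bound.)

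Carrying this out and integrating the kernel powers in spherical coordinates gives
\[
\Bigl|\int_{\mathbb{R}^3}\Psi(y)\frac{dy}{|y-x|^m}\Bigr|\le c_1(m,s)\,\|\Psi\|_{L^s_x}\,R^{\,3/s'-m}+c_2(m,r)\,\|\Psi\|_{L^r_x}\,R^{\,3/r'-m},
\]
with $c_1=(4\pi/(3-ms'))^{1/s'}$ and $c_2=(4\pi/(mr'-3))^{1/r'}$; here the exponents satisfy $a:=3/s'-m>0$ and $b:=m-3/r'>0$, so the right-hand side is $c_1\|\Psi\|_{L^s_x}R^{a}+c_2\|\Psi\|_{L^r_x}R^{-b}$. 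Minimizing this over $R>0$ is elementary one-variable calculus: the minimizer is $R_\ast=(b\,c_2\|\Psi\|_{L^r_x}/(a\,c_1\|\Psi\|_{L^s_x}))^{1/(a+b)}$, and substituting back yields a bound of the form $C\,\|\Psi\|_{L^r_x}^{a/(a+b)}\,\|\Psi\|_{L^s_x}^{b/(a+b)}$, where the constant $C=C(m,r,s)$ is the explicit expression $c_1^{\,b/(a+b)}c_2^{\,a/(a+b)}\bigl((b/a)^{a/(a+b)}+(a/b)^{b/(a+b)}\bigr)$.

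It remains to identify $b/(a+b)$ with $\lambda$. A short computation gives $a+b=3/s'-3/r'=3(1/r-1/s)$, and using $m=3-3/r_0$ one gets $b=m-3/r'=3(1/r-1/r_0)$, hence $b/(a+b)=(1/r-1/r_0)/(1/r-1/s)=(1-r/r_0)/(1-r/s)=\lambda$ and $a/(a+b)=1-\lambda$, which is the claimed form. For the endpoint $r=1$, $s=\infty$ the two-term bound reduces to $\frac{4\pi}{3-m}\|\Psi\|_{L^\infty_x}R^{3-m}+\|\Psi\|_{L^1_x}R^{-m}$; its explicit minimization (the minimizer now satisfies $R_\ast^{3}=m\|\Psi\|_{L^1_x}/(4\pi\|\Psi\|_{L^\infty_x})$) gives exactly $\frac{3}{3-m}(4\pi/m)^{m/3}\|\Psi\|_{L^1_x}^{1-m/3}\|\Psi\|_{L^\infty_x}^{m/3}$, i.e. $C(m,1,\infty)=3(4\pi/m)^{m/3}/(3-m)$. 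There is no genuine obstacle beyond careful bookkeeping; the only mildly delicate points are checking that convergence of both kernel integrals is precisely equivalent to $r<r_0<s$, and that the algebraic rearrangement of the optimal exponent reproduces $\lambda$.
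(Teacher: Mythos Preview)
Your proof is correct and complete. The paper itself does not give a proof but simply refers to \cite[Lemma~2.7]{Pallard}; your argument---splitting the convolution at a free radius $R$, applying H\"older on each piece (which is exactly where the conditions $r<r_0<s$ enter), and then optimizing over $R$---is the standard route and almost certainly the one taken in the cited reference. The verification of the exponent $\lambda=b/(a+b)$ and of the explicit constant $C(m,1,\infty)$ is carried out carefully and checks out.
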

\begin{proof}
  See \cite[Lemma 2.7]{Pallard}.
\end{proof}

\begin{lemma} 
\label{Estimate Darwin Potentials}
For $\rho$ and $j$ as given in Lemma \ref{Darwin Potentials Lemma}, the Darwin vector potential (\ref{Vector Potential Darwin}) satisfy the estimates:
\begin{equation}
\label{Estimates Vector Potentials First Derivatives}
\left\|A_D\right\|_{L^{\infty}_x}\leq C\left\|j\right\|^{2/3}_{L^1_x}\left\|j\right\|^{1/3}_{L^{\infty}_x}\quad  \hbox{and}\quad \left\|\partial_xA_D\right\|_{L^{\infty}_x}\leq C\left\|j\right\|^{1/3}_{L^1_x}\left\|j\right\|^{2/3}_{L^{\infty}_x}.
\end{equation}
Moreover, for any $0<h\leq R$ we have
$$
  \left\|\partial^2_xA_D\right\|_{L^{\infty}_x} \leq  C\left[R^{-3}\left\|j\right\|_{L^1_x}+h\left\|\partial_xj\right\|_{L^{\infty}_x}+\left(1+\ln\left(R/h\right)\right)\left\|j\right\|_{L^{\infty}_x}\right],
$$
where $C>0$ is independent of $h$, $R$, $\rho$ and $j$. In particular,
\begin{equation}
\label{Estimates Vector Potentials Second Derivatives}
 \left\|\partial^2_xA_D\right\|_{L^{\infty}_x} \leq  C\left[\left\|j\right\|_{L^1_x}+\left(1+\left\|j\right\|_{L^{\infty}_x}\right)\left(1+\ln^{+}\left\|\partial_xj\right\|_{L^{\infty}_x}\right)\right].
\end{equation}
The same estimates hold for the scalar potential $\Phi_D$, with $j$ replaced by $\rho$.
\end{lemma}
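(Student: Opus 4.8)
The plan is to read all the bounds off the closed-form representations \eqref{Vector Potential Darwin} and \eqref{Derivative A Representation} furnished by Lemma~\ref{Darwin Potentials Lemma}, using the weak-Young/interpolation inequality of Lemma~\ref{Pallard} for the first two estimates and a Calder\'on--Zygmund-type dyadic splitting for the second-derivative estimate.

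For the first two estimates I would simply estimate the matrix kernels pointwise. Since $|\omega|=1$, one has $|\mathrm{id}+\omega\otimes\omega|\le 2$, so $\left|A_D(x)\right|\le c^{-1}\int_{\mathbb{R}^3}|j(y)|\,|y-x|^{-1}dy$; applying Lemma~\ref{Pallard} with $m=1$, $r=1$, $s=\infty$ (so $r_0=3/2$ and $\lambda=1/3$) gives the first bound in \eqref{Estimates Vector Potentials First Derivatives}. Likewise $\left|\omega\otimes j-j\otimes\omega+[3\omega\otimes\omega-\mathrm{id}](j\cdot\omega)\right|\le C|j|$, so from \eqref{Derivative A Representation} $\left|\partial_xA_D(x)\right|\le C\int_{\mathbb{R}^3}|j(y)|\,|y-x|^{-2}dy$, and Lemma~\ref{Pallard} with $m=2$, $r=1$, $s=\infty$ (so $r_0=3$ and $\lambda=2/3$) yields the second bound. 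The scalar analogues follow verbatim from $\Phi_D(x)=\int\rho(y)|y-x|^{-1}dy$ and the formula for $\nabla\Phi_D$ in Lemma~\ref{Darwin Potentials Lemma}(a), with $j$ replaced by $\rho$.

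For the second-derivative estimate I would start from \eqref{Derivative A Representation}, written as $(\partial_xA_D)(x)=\frac{1}{2c}\int_{\mathbb{R}^3}G(x-y)j(y)\,dy$ with $G$ a matrix kernel homogeneous of degree $-2$, smooth away from the origin and hence in $L^1_{\rm loc}$. Since $j\in C^1_0$, the substitution $z=x-y$ lets the next derivative fall on $j$, giving $\partial^2_xA_D(x)=\frac{1}{2c}\int_{\mathbb{R}^3}G(x-y)\,\partial_yj(y)\,dy$. Now fix $0<h\le R$ and split the domain at radius $h$. On $\{|y-x|<h\}$ bound directly: the contribution is at most $C\|\partial_xj\|_{L^\infty_x}\int_{|y-x|<h}|y-x|^{-2}dy=C\,h\,\|\partial_xj\|_{L^\infty_x}$. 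On $\{|y-x|>h\}$ I would integrate by parts to move $\partial_y$ back onto $G$; because $j$ has compact support the only boundary term is on the sphere $|y-x|=h$, where $|G|\le Ch^{-2}$ on a surface of area $4\pi h^2$, hence this term is $\le C\|j\|_{L^\infty_x}$. The remaining volume integral involves $\partial G$, homogeneous of degree $-3$ with $|\partial G(z)|\le C|z|^{-3}$; splitting it further at radius $R$ gives $\int_{h<|y-x|<R}\le C\|j\|_{L^\infty_x}\int_{h<|y-x|<R}|y-x|^{-3}dy=C\ln(R/h)\,\|j\|_{L^\infty_x}$ and $\int_{|y-x|>R}\le CR^{-3}\int_{\mathbb{R}^3}|j(y)|\,dy=CR^{-3}\|j\|_{L^1_x}$. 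Summing the four pieces gives the stated three-parameter bound, and the same computation with $\Phi_D$ and $\rho$ handles the scalar case. To obtain \eqref{Estimates Vector Potentials Second Derivatives} I would take $R=1$ and $h=\min\{1,\|\partial_xj\|_{L^\infty_x}^{-1}\}$, so that $R^{-3}\|j\|_{L^1_x}=\|j\|_{L^1_x}$, $h\|\partial_xj\|_{L^\infty_x}\le 1$, and $1+\ln(R/h)\le 1+\ln^{+}\|\partial_xj\|_{L^\infty_x}$, and then absorb constants.

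The routine points needing care are the justification of differentiating under the integral and of the integration by parts against the singular kernels; these are standard once one notes that the degree $-2$ kernel is locally integrable and $j\in C^{1,\alpha}_0$, so the usual "excise a small ball about $y=x$ and pass to the limit" argument applies. The one genuinely structural choice is to integrate by parts \emph{only} on the complement of the small ball: this is exactly what keeps $\|\partial_xj\|_{L^1_x}$ (which would appear if one naively applied Lemma~\ref{Pallard} with $m=2$ to $\partial_xj$) out of the estimate, trading it for the far-field term $R^{-3}\|j\|_{L^1_x}$ and the logarithmically divergent $\|j\|_{L^\infty_x}$ term that the degree $-3$ kernel produces on the annulus $h<|y-x|<R$.
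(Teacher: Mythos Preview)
Your argument is correct. The first two estimates and the last specialization are handled exactly as in the paper (pointwise kernel bounds followed by Lemma~\ref{Pallard}, and the choice $R=1$, $h=\min\{1,\|\partial_xj\|_{L^\infty_x}^{-1}\}$).

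For the second-derivative bound, however, you take a genuinely different route from the paper. The paper differentiates the kernel directly to produce a degree~$-3$ kernel $\Gamma^m_{ikl}(y-x)=\partial_{y_l}\bigl[|y-x|^{-2}\omega^m\omega^i\omega^k\bigr]$ and then exploits the \emph{cancellation property} $\int_{R_1<|y|<R_2}\Gamma^m_{ikl}(y)\,dy=0$ (which follows because the undifferentiated kernel is homogeneous of degree $-2$) to subtract $j^m(x)$ in the near-field, writing
\[
I_2=\int_{|y-x|>h}\Gamma\,j^m\,dy+j^m(x)\!\int_{|\omega|=1}\!\omega^i\omega^k\omega^l\omega^m\,d\omega+\int_{|y-x|\le h}\Gamma\,[j^m(y)-j^m(x)]\,dy.
\]
You instead keep the degree~$-2$ kernel $G$, put the extra derivative on $j$ first, and then integrate by parts only on $\{|y-x|>h\}$; the boundary term on the sphere $|y-x|=h$ produces the $C\|j\|_{L^\infty_x}$ contribution that in the paper's version comes from the explicit surface integral of $\omega^i\omega^k\omega^l\omega^m$. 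Your approach is arguably more elementary in that it never invokes the mean-zero property of the singular kernel; the paper's approach is the classical Calder\'on--Zygmund decomposition and makes the structure of the principal-value integral more visible. Both lead to the same three-term bound with the same dependence on $h$ and $R$.
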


\begin{proof} The estimates corresponding to $\Phi_D$ are well known from the study of the Vlasov-Poisson system. These results can be found, for instance, in \cite[Lemma P1]{Rein} and \cite[Propositions 1 and 2]{Batt}. Here, we shall produce the estimates for the vector potential $A_D$ only. The proof is actually rather similar.

Let $\mathcal{K}(y,x)=\left|y-x\right|^{-1}\left[\texttt{id}+\omega\otimes\omega\right]$. Clearly, $\left|\mathcal{K}(y,x)\right|\leq C\left|y-x\right|^{-1}$. Then, the estimates in (\ref{Estimates Vector Potentials First Derivatives}) are a straightforward consequence of Lemma \ref{Pallard}. To produce the estimates for the second derivatives, consider 
\begin{eqnarray}
   \partial_l\partial_kA^i_D & \equiv & \frac{1}{2c} \left\{\partial_l\int_{\mathbb{R}^3}\left[\delta_{im}\omega^k-\delta_{km}\omega^i-\delta_{ik}\omega^m\right]j^m(y)\frac{dy}{\left|y-x\right|^2}\right.\nonumber\\
   & & \left. 3\partial_l\int_{\mathbb{R}^3}j^m\frac{\omega^m\omega^i\omega^kdy}{\left|y-x\right|^2}\right\}\nonumber\\
   & = & \frac{1}{2c}\left(I_1+3I_2\right).\nonumber
\end{eqnarray}
Here we have introduced the notation $\partial_k=\partial_{x_k}$, $k=1,2,3$; see Lemma \ref{Darwin Potentials Lemma}(c) for the matrix representation of the integrand of $\partial_xA$. Now, the integral $I_1$ can in turn be split into three integrals, each one essentially the same as the integral corresponding to $\partial_l\partial_k\Phi_D$. Thus, $I_1$ satisfies the expected estimates, as $\partial_l\partial_k\Phi_D$ does. Therefore, we are led to estimate $I_2$. To this end, we set $r=\left|y-x\right|$, and for $r>0$ we denote by $\Gamma^m_{ikl}(y-x)$ 
$$ \partial_{y_l}\left[\frac{\omega^m\omega^i\omega^k}{r^2}\right]= \frac{1}{r^3}\left[\delta_{ml}\omega^i\omega^k+\delta_{il}\omega^k\omega^m+\delta_{kl}\omega^i\omega^m-5\omega^i\omega^k\omega^l\omega^m\right].
$$
This kernel is too singular to use Lemma \ref{Pallard}. However, since $y^iy^ky^m\left|y\right|^{-5}$ is homogeneous of degree $-2$, for every $0< R_1<R_2$ we have 
$$ \int_{R_1<\left|y\right|<R_2}\Gamma^m_{ikl}(y)dy = \int_{\left|y\right|=R_2}\frac{y^l}{R_2}\frac{y^iy^ky^m}{\left|y\right|^5}dS_y-\int_{\left|y\right|=R_1}\frac{y^l}{R_1}\frac{y^iy^ky^m}{\left|y\right|^5}dS_y=0. 
$$
Thus, for any $h>0$, we can rewrite $I_2$ as
\begin{eqnarray}
 I_2 & = & \int_{\left|y-x\right|>h}\Gamma^m_{ikl}(y-x)j^m(y)dy+j^m(x)\int_{\left|\omega\right|=1}\omega^i\omega^k\omega^l\omega^md\omega\nonumber\\
& & +\int_{\left|y-x\right|\leq h}\Gamma^m_{ikl}(y-x)\left[j^m(y)-j^m(x)\right]dy.\nonumber
\end{eqnarray}
The singularity in the last integral at $r=0$ is now avoided by the difference $j^m(y)-j^m(x)$. Indeed, for $0< h\leq R$ we produce
\begin{eqnarray}
   I_2 & \leq & C\left\{\left\|j\right\|_{L^{\infty}_x}\int_{h<\left|y-x\right|\leq R}\frac{dy}{\left|y-x\right|^3}+\int_{\left|y-x\right|>R}\left|j(y)\right|\frac{dy}{\left|y-x\right|^3}\right.\nonumber\\
   & & \hspace{.6cm}\left.\left\|\partial_xj\right\|_{L^{\infty}_x}\int_{\left|y-x\right|\leq h}\frac{dy}{\left|y-x\right|^2}+\left|j(x)\right|\right\}\nonumber\\
   &\leq & C\left[\ln(R/h)\left\|j\right\|_{L^{\infty}_x}+R^{-3}\left\|j\right\|_{L^1_x}+h\left\|\partial_xj\right\|_{L^{\infty}_x}+\left\|j\right\|_{L^{\infty}_x}\right].\nonumber
\end{eqnarray}
This yields the first estimate on $\left\|\partial^2_xA_D\right\|_{L^{\infty}_x}$. Then, by setting $R=1$ and letting $h=\left\|\partial_xj\right\|^{-1}_{L^{\infty}_x}$ if $\left\|\partial_xj\right\|^{-1}_{L^{\infty}_x}\geq1$, otherwise $h=1$, the estimate (\ref{Estimates Vector Potentials Second Derivatives}) follows as well. This completes the proof of the lemma.
\end{proof}

\section{The RVD System}
\label{The RVD System Section}

If we now combine (\ref{Vlasov Potentials}) and (\ref{Scalar Potential Darwin})-(\ref{Vector Potential Darwin}) by means of (\ref{Density and Current}), then we obtain the following equivalent representation of the RVD system: 
\begin{eqnarray}
\label{Vlasov In RVD}
\partial_tf+v_A\cdot\nabla_xf-\left[\nabla\Phi-c^{-1}v^i_A\nabla A^i\right]\cdot\nabla_{p}f=0,
\end{eqnarray}
coupled with
\begin{eqnarray}
\label{Phi In RVD}
\Phi(t,x) & = & \int_{\mathbb{R}^3}\int_{\mathbb{R}^3}f(t,y,p)\frac{dpdy}{\left|y-x\right|},\\
\label{A In RVD}
A(t,x) & = & \frac{1}{2c}\int_{\mathbb{R}^3}\int_{\mathbb{R}^3}\left[\texttt{id}+\omega\otimes\omega\right]v_Af(t,y,p)\frac{dpdy}{\left|y-x\right|},
\end{eqnarray}
where
\begin{equation}
\label{Velocity In RVD}
v_A=\frac{p-c^{-1}A}{\sqrt{1+c^{-2}\left|p-c^{-1}A\right|^2}}.
\end{equation}

For the sake of notation, we have written $p$ instead of $\pi$ when referring to the generalized momentum variable. We will continue to do so for the rest of the paper. We shall also set $c=1$ for the speed of light. The goal is to prove that a small enough Cauchy datum launches a unique classical solution of the system (\ref{Vlasov In RVD})-(\ref{Velocity In RVD}) globally in time. We shall prove this in Subsections \ref{Local Solutions Section} and \ref{Global Solutions Section} below, but first we center our attention on (\ref{A In RVD}). If $f$ is given, then (\ref{A In RVD}) is a nonlinear integral equation of unknown $A$.

\begin{lemma} 
\label{Fixed Point Lemma}
Fix $t\in I$ and let $f(t)\in C^{1,\alpha}_0(\mathbb{R}^6;\mathbb{R})$, $0<\alpha<1$, be given. Then there exists a unique $A(t)\in C_b\cap C^{2,\alpha}(\mathbb{R}^3;\mathbb{R}^3)$ satisfying (\ref{A In RVD})-(\ref{Velocity In RVD}).
\end{lemma}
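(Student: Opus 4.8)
The plan is to solve the nonlinear integral equation (\ref{A In RVD})--(\ref{Velocity In RVD}) by a contraction mapping argument in a suitable closed subset of $C_b(\mathbb{R}^3;\mathbb{R}^3)$, and then upgrade the regularity of the fixed point. For fixed $t$, define the map $\mathcal{T}$ sending a continuous, bounded field $A$ to
$$
\mathcal{T}[A](x)=\frac{1}{2}\int_{\mathbb{R}^3}\int_{\mathbb{R}^3}\left[\texttt{id}+\omega\otimes\omega\right]\frac{p-A(y)}{\sqrt{1+\left|p-A(y)\right|^2}}\,f(t,y,p)\,\frac{dp\,dy}{\left|y-x\right|}.
$$
The first observation is that the integrand is bounded in $p$ uniformly (since $\left|v_A\right|\leq 1$ regardless of $A$), so by Lemma \ref{Pallard} (applied with the kernel $\left|y-x\right|^{-1}$, i.e. $m=1$, to $\Psi(y)=\int f(t,y,p)\,dp$) one gets $\|\mathcal{T}[A]\|_{L^\infty_x}\leq C\|\rho\|_{L^1_x}^{2/3}\|\rho\|_{L^\infty_x}^{1/3}$, a bound \emph{independent of} $A$. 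Hence $\mathcal{T}$ maps the closed ball $\mathcal{B}$ of radius $M:=C\|\rho\|_{L^1_x}^{2/3}\|\rho\|_{L^\infty_x}^{1/3}$ in $C_b(\mathbb{R}^3;\mathbb{R}^3)$ into itself.

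Next I would establish the contraction estimate. Given $A_1,A_2\in\mathcal{B}$, write the difference $\mathcal{T}[A_1]-\mathcal{T}[A_2]$ and use that $p\mapsto p/\sqrt{1+|p|^2}$ has derivative bounded by $1$ in operator norm, so that $\left|v_{A_1}(y,p)-v_{A_2}(y,p)\right|\leq\left|A_1(y)-A_2(y)\right|\leq\|A_1-A_2\|_{L^\infty_x}$ pointwise. Then
$$
\left|\mathcal{T}[A_1](x)-\mathcal{T}[A_2](x)\right|\leq C\,\|A_1-A_2\|_{L^\infty_x}\int_{\mathbb{R}^3}\rho(t,y)\frac{dy}{\left|y-x\right|},
$$
and Lemma \ref{Pallard} again bounds the remaining integral by $C\|\rho\|_{L^1_x}^{2/3}\|\rho\|_{L^\infty_x}^{1/3}=M$. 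This gives $\|\mathcal{T}[A_1]-\mathcal{T}[A_2]\|_{L^\infty_x}\leq C M\,\|A_1-A_2\|_{L^\infty_x}$. Here is the one genuine obstacle: the Lipschitz constant $CM$ need not be $<1$ for a fixed $f(t)$ of arbitrary size, so a naive application of Banach's fixed point theorem fails. I would resolve this in the standard way for such linear-growth nonlinearities by refining the estimate: bound the difference of velocities sharply by $c^{-2}$-type factors, or more simply observe that $v_{A_1}-v_{A_2}$ involves the factor $(1+|p-A_i|^2)^{-1/2}$ which, together with the $c$-dependence restored momentarily, makes the effective constant small; alternatively, iterate $\mathcal{T}$ and show some power $\mathcal{T}^n$ is a contraction, or use the more precise Lipschitz bound $\left|v_{A_1}-v_{A_2}\right|\leq \left|A_1-A_2\right|/(1+|p-A_1|^2+|p-A_2|^2)^{1/2}$ which, after integrating in $p$ against the compactly supported $f$, produces a constant controllable independently of the magnitude of $f$. (The cleanest route, consistent with how the paper later restores $c$, is to note the difference carries an extra power of $c^{-1}$; with $c=1$ one instead uses that only finitely many iterations are needed or invokes a weighted norm.)

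Once a unique fixed point $A(t)\in C_b(\mathbb{R}^3;\mathbb{R}^3)$ is obtained, the final step is the bootstrap to $C^{2,\alpha}$. Since $f(t)\in C^{1,\alpha}_0$ and $A(t)$ is continuous and bounded, the composition $y\mapsto v_A(t,y,p)f(t,y,p)$ lies in $C^{\alpha}_0$ in $y$ (locally uniformly in $p$), so the "current" $\mathbb{P}$-type quantity $y\mapsto\int [\texttt{id}+\omega\otimes\omega]v_A f\,dp$ inherits $C^{\alpha}_0$ regularity; invoking Lemma \ref{Darwin Potentials Lemma}(c) (or directly the Newtonian potential regularity theory, as in the proof of that lemma) shows the fixed point agrees with the Darwin vector potential of this current and therefore belongs to $C^{3,\alpha}\subset C^{2,\alpha}(\mathbb{R}^3;\mathbb{R}^3)$, giving $A(t)\in C_b\cap C^{2,\alpha}$. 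Uniqueness in the larger class $C_b\cap C^{2,\alpha}$ follows from uniqueness in $C_b$ already proved. The main obstacle, as flagged, is purely the contraction constant; everything else is a routine combination of Lemma \ref{Pallard}, the Lipschitz bound for the relativistic velocity, and the Newtonian potential estimates already invoked in the paper.
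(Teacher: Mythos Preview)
Your approach has a genuine gap at exactly the point you flag: the contraction constant cannot be made small for arbitrary $f(t)$, and none of your proposed fixes actually work. Iterating to $\mathcal{T}^n$ does not help here because there is no time integration to accumulate smallness; if $\mathcal{T}$ is $L$-Lipschitz with $L>1$, then $\mathcal{T}^n$ is at best $L^n$-Lipschitz, which is worse. Restoring $c$ is moot since the lemma is stated with $c=1$ and for a \emph{fixed} $f(t)$ of arbitrary size, not a small one. The sharper pointwise bound $\left|v_{A_1}-v_{A_2}\right|\leq \left|A_1-A_2\right|(1+\left|p-A_i\right|^2)^{-1/2}$ still produces, after integrating in $p$ and convolving with $\left|y-x\right|^{-1}$, a constant of order $\left\|\rho\right\|_{L^1_x}^{2/3}\left\|\rho\right\|_{L^\infty_x}^{1/3}$, i.e.\ the same $M$; no weighted norm on $\mathbb{R}^3_x$ helps either, since the problem is purely the size of the kernel-weighted mass of $f$. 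Because uniqueness in your argument is inherited from the contraction, it is also not established.

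The paper circumvents this by separating existence and uniqueness. For existence it uses the \emph{Schauder} fixed point theorem: your map $\mathcal{T}$ is shown to be continuous on the ball $\mathcal{D}_{\bar C}\subset C_b$, and $\overline{\mathcal{T}\mathcal{D}_{\bar C}}$ is shown to be compact via the uniform Lipschitz estimate (\ref{Continuous Map}), the uniform decay $\left|\mathcal{T}[A](x)\right|\leq C(f)(1+\left|x\right|)^{-1}$, and Arzel\`a--Ascoli plus a diagonal argument. No smallness of the Lipschitz constant is needed. Uniqueness is then proved by a separate energy/monotonicity argument: writing the Poisson equations (\ref{eqn add1}) for $A_1,A_2$, subtracting, testing against $A_1-A_2$, and using that $Dv(z)=(1+|z|^2)^{-1/2}\left[\texttt{id}-\frac{z\otimes z}{1+|z|^2}\right]$ is symmetric positive definite, one obtains
\[
\int_{\mathbb{R}^3}\left|\partial_x(A_1-A_2)\right|^2dx+4\pi\lambda\int_{B_R}\rho\,\left|A_1-A_2\right|^2dx\leq 0,
\]
forcing $A_1=A_2$. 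Your regularity bootstrap is essentially correct and matches the paper's; the issue is solely the fixed-point step.
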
   

\begin{proof} Without loss of generality we shall omit the dependence in time. Let $\bar{C}$ be a constant that may depend on $f$, to be fixed later on. Define the set
$$ \mathcal{D}_{\bar{C}}=\left\{A\in C_b(\mathbb{R}^3;\mathbb{R}^3):\left\|A\right\|_{L^{\infty}_x}\leq \bar{C}\right\}. 
$$
First, we show that there exists an $A_{\infty}\in \mathcal{D}_{\bar{C}}$ which solves (\ref{A In RVD})-(\ref{Velocity In RVD}). To this end, denote the kernel $\mathcal{K}(x,y)=\left|y-x\right|^{-1}\left[\texttt{id}+\omega\otimes\omega\right]$ and let $A\in \mathcal{D}_{\bar{C}}$. Consider the mapping $A\mapsto T[A]$ defined by 
$$
T[A](x)= \frac{1}{2}\int_{\mathbb{R}^3}\int_{\mathbb{R}^3}\mathcal{K}(x,y)v_Af(y,p)dpdy, \quad v_A=\frac{p-A}{\sqrt{1+\left|p-A\right|^2}}. 
$$
We claim that $T[A]\in\mathcal{D}_{\bar{C}}$. Indeed, let $(\mathcal{K})_{ij}(x,y)$ be the $ij$-entry of $\mathcal{K}(x,y)$. For some $u_1$, $u_2$ and $u_3$ on the line segment between $x$ and $z$, the mean value theorem implies
\begin{eqnarray}
\left|\left(\mathcal{K}\right)_{ij}(x,y)-\left(\mathcal{K}\right)_{ij}(z,y)\right| & \leq & \left|\frac{1}{\left|y-x\right|}-\frac{1}{\left|y-z\right|}\right|+\left|\frac{y^i-x^i}{\left|y-x\right|^2}-\frac{y^i-z^i}{\left|y-z\right|^2}\right|\nonumber\\
& & +\left|\frac{y^j-x^j}{\left|y-x\right|^2}-\frac{y^j-z^j}{\left|y-z\right|^2}\right|\nonumber\\
& \leq & C\left|x-z\right|\left(\frac{1}{\left|y-u_1\right|^2}+\frac{1}{\left|y-u_2\right|^2}+\frac{1}{\left|y-u_3\right|^2}\right).\nonumber
\end{eqnarray}
Hence, since $\left|v_A\right|\leq 1$, a use of Lemma \ref{Estimate Darwin Potentials} produces
\begin{eqnarray}
\label{Continuous Map}
 \left|T[A](x)-T[A](z)\right| & \leq & \frac{1}{2}\int_{\mathbb{R}^3}\left|\mathcal{K}(x,y)-\mathcal{K}(z,y)\right|\rho(y)dy\nonumber\\
 & \leq & C\left|x-z\right|\left\|\int_{\mathbb{R}^3}\rho(y)\frac{dy}{\left|y-\cdot\right|^2}\right\|_{L^{\infty}_x}\nonumber\\
 & \leq & C(f)\left|x-z\right|.
\end{eqnarray}  
Thus, $T[A]$ is a continuous vector valued function. Also, by Lemma \ref{Estimate Darwin Potentials} 
\begin{equation}
\label{Bounded Map}
 \left\|T[A]\right\|_{L^{\infty}_x}\leq 3/2(\pi/2)^{1/3}\left\|\rho\right\|^{2/3}_{L^1_x}\left\|\rho\right\|^{1/3}_{L^{\infty}_x}\equiv \bar{C}.
\end{equation}
Therefore, $T[A]\in\mathcal{D}_{\bar{C}}$ as claimed.

We now show that $T$ has a fixed point $A_{\infty}\in \mathcal{D}_{\bar{C}}$. By virtue of the Schauder fixed point theorem \cite[Theorem 3 Section 9.1]{{McOwen}}, it suffices to show that $T$ is a continuous mapping and that the closure of the image of $T$ is compact in $\mathcal{D}_{\bar{C}}$. To show the continuity of $T$, we see that if $A_k\rightarrow A$ in $\mathcal{D}_{\bar{C}}$, then by Lemma \ref{Estimate Darwin Potentials} 
\begin{eqnarray}
\left|T[A_k](x)-T[A](x)\right| & \leq & C\int_{\mathbb{R}^3}\int_{\mathbb{R}^3}\left|v_{A_k}-v_A\right|f(y,p)\frac{dpdy}{\left|y-x\right|}\nonumber\\
& \leq & C(f)\left\|A_k-A\right\|_{L^{\infty}_x}.\nonumber \end{eqnarray}
To show that $\overline{T\mathcal{D}_{\bar{C}}}\subset \mathcal{D}_{\bar{C}}$ is compact, we first notice that for $A\in \mathcal{D}_{\bar{C}}$,
\begin{equation} 
\label{Decay}
\left|T\left[A\right](x)\right|\leq \left\|\rho\right\|_{L^{\infty}_x}\int_{\texttt{supp}f}\frac{dy}{\left|x-y\right|}\leq C(f)\frac{1}{1+\left|x\right|}.
\end{equation}
Now consider the sequence $\left\{B_{n}\right\}\subset T\mathcal{D}_{\bar{C}}$. Let $R>0$ be fixed. By (\ref{Bounded Map}) and (\ref{Continuous Map}), the restriction $$\left.\left\{B_{n}\right\}\right|_{\left\{x\in\mathbb{R}^3:\left|x\right|\leq R\right\}}$$ is clearly bounded and equicontinuous. Then, by Arzel{\`a}-Ascoli and a standard diagonal argument, we can find a subsequence $\left\{B_{n_k}\right\}$ and a continuous, bounded limit vector field $B$ such that $\left\{B_{n_k}\right\}\rightarrow B$ uniformly on compact sets, and in particular pointwise. Clearly, $\left\|B\right\|_{L^{\infty}_x}\leq \bar{C}$, and since $\left\{B_{n_k}\right\}$ satisfies the estimate (\ref{Decay}), so does $B$. We only need to show that the convergence $\left\{B_{n_k}\right\}\rightarrow B$ is uniform. To this end, let $\epsilon>0$. Choose $R>0$ such that the right-hand side of (\ref{Decay}) is less than $\epsilon/2$ for $\left|x\right|>R$. Then, for all $k$ we have $\left|B_{n_k}(x)-B(x)\right|<\epsilon$ for $\left|x\right|>R$, and we can find a $k_0=k_0(R,\epsilon)$ such that for all $k>k_0$
$$ \sup_{\left|x\right|\leq R}\left|B_{n_k}(x)-B(x)\right|<\epsilon.
$$
This proves uniform convergence. Hence, all the hypotheses for the Schauder fixed point theorem are fulfilled, and thus $T$ has a fixed point $A_{\infty}$ in $\mathcal{D}_{\bar{C}}$.  

Next, we show that $A_{\infty}$ has the required regularity. To this end, define $v_{A_{\infty}}$ and then $j_{A_{\infty}}$ according to (\ref{Velocity In RVD}) and (\ref{Density and Current}), respectively. The vector field $A_{\infty}$ has the form of a Darwin potential (\ref{Vector Potential Darwin}) with current density $j_{A_{\infty}}\in C_0(\mathbb{R}^3;\mathbb{R}^3)$. Clearly, the kernel of (\ref{Vector Potential Darwin}) satisfies $\left|\mathcal{K}(x,y)\right|\leq C\left|y-x\right|^{-1}$ and the derivative estimate $\left|\partial_x\mathcal{K}(x,y)\right|\leq C\left|y-x\right|^{-2}$. Hence, we can use the standard theory for the Poisson equation to find that $A_{\infty}\in C^1(\mathbb{R}^3;\mathbb{R}^3)$; see, for instance, \cite[Lemma 4.1]{Trudinger} or \cite[Theorem 10.2 (iii)]{Lieb}. But such a regularity of $A_{\infty}$ implies that $j_{A_{\infty}}\in C^1_0(\mathbb{R}^3;\mathbb{R}^3)$. Thus, we also have $j_{A_{\infty}}\in C^{\alpha}_0(\mathbb{R}^3;\mathbb{R}^3)$, $0<\alpha<1$ and so $A_{\infty}\in C^{2,\alpha}(\mathbb{R}^3;\mathbb{R}^3)$, as desired. For the latter implication see, for instance, \cite[Theorem 10.3]{Lieb}.


Now we prove the uniqueness of $A(t)$. Assume that there are two vector potentials $A_1(t), A_2(t)\in C_b\cap C^{2,\alpha}(\R^3;\R^3)$ satisfying (\ref{A In RVD})-(\ref{Velocity In RVD}) for fixed $f(t)\in C^{1,\alpha}_0(\R^6;\R)$. Denote by $B_R\times B_R$ the support of $f(t)$, where $B_R\subset\R^3$ is the ball centered at the origin with radius $R>0$. From Lemma \ref{Darwin Potentials Lemma}, $A_1(t)$ and $A_2(t)$ satisfy the equations
 \begin{equation}\label{eqn add1}
\Delta A_i(x) = -4\pi j_{A_i}(x)- \nabla\left\{\nabla\cdot\int_{\mathbb{R}^3}j_{A_i}(y)\frac{dy}{\left|y-x\right|}\right\}, \quad \nabla\cdot A_i(x)=0, 
\end{equation}
for $i=1, 2$, where 
\[j_{A_i}(x)=\int_{\R^3} v_{A_i}(x)f(x,p)dp, \quad v_{A_i}(x)=\frac{p-A_i(x)}{\sqrt{1+\left|p-A_i(x)\right|^2}}.\]
Then,
\[-\Delta (A_1-A_2)(x) = 4\pi \left(j_{A_1}-j_{A_2}\right)(x) + \nabla\left\{\nabla\cdot\int_{\mathbb{R}^3} \left(j_{A_1}-j_{A_2}\right)(y)\frac{dy}{\left|y-x\right|}\right\}.\]
Integrating the above equation against $(A_1-A_2)(x)$ over $x\in\R^3$, we have after using an integration by parts,
\begin{eqnarray}\label{eqn add2}
\int_{\R^3} \left|\partial_x(A_1 - A_2)(x)\right|^2 dx  &=& 4\pi \int_{\R^3} (A_1-A_2)(x)\cdot (j_{A_1}-j_{A_2})(x)\,dx \nonumber\\
& & \quad + \int_{\R^3}\nabla\cdot(A_1-A_2)(x)\nabla\cdot I(x)\,dx
\end{eqnarray}
where 
\[ I(x) = \int_{\mathbb{R}^3} \left(j_{A_1}-j_{A_2}\right)(y)\frac{dy}{\left|y-x\right|}.\]
Notice that the boundary terms in the integrations vanish. Indeed, $\partial_x(A_1-A_2)(x)$ and $\partial_x I(x)$ have a decay $O(\left|x\right|^{-2})$, and $(A_1-A_2)(x)$ has a decay $O(\left|x\right|^{-1})$. Then the products $(A_1-A_2)\partial_xI(x)$ and $(A_1-A_2)\partial_x(A_1-A_2)(x)$ decay like $O(\left|x\right|^{-3})$, which is sufficient for the disappearance of boundary terms. Since $\nabla\cdot (A_1-A_2)(x)=0$, then the last integral term in (\ref{eqn add2}) also vanishes. Hence (\ref{eqn add2}) reads as:
\begin{eqnarray}\label{eqn add3}
\lefteqn{\int_{\R^3} \left|\partial_x(A_1 - A_2)(x)\right|^2 dx}  \nonumber \\
& &  -  4\pi \int_{B_R}\int_{B_R} \left[(v_{A_1}-v_{A_2})\cdot (A_1-A_2)\right](x,p) f(x,p)\, dp\, dx = 0.
\end{eqnarray}
Now consider the vector-valued function $v(z) := \frac{z}{\sqrt{1+|z|^2}}, \; z\in\R^3$. Clearly, $v\in C_b^1(\R^3;\R^3)$, and its matrix derivative is
\[Dv(z)=\frac{1}{\sqrt{1+|z|^2}}\left[\texttt{id}-\frac{z\otimes z}{1+|z|^2}\right].\]
It is easy to check that $Dv(z)$ is symmetric, positive definite, with determinant $\texttt{det}Dv(z) =(1+|z|^2)^{-5/2}$. Moreover, we can write $v_{A_i}$ in the form $v_{A_i}=v(g_{A_i})$ where $g_{A_i}(x,p)=p-A_i(x)$. Then, using the mean value theorem on $v$, and the fact that $(x,p)\in B_R\times B_R$ and $A_i\in C_b(\R^3;\R^3)$, we have for all $(x,p)\in B_R\times B_R$,
\begin{eqnarray}\label{eqn add4}
- (v_{A_1}-v_{A_2})\cdot (A_1-A_2) &=& - \left(v(g_{A_1})-v(g_{A_2})\right)\cdot (A_1-A_2) \nonumber\\
&=& Dv(g)(A_1-A_2)\cdot (A_1-A_2)\nonumber\\
& \geq& \lambda |A_1-A_2|^2,
\end{eqnarray}
where $\lambda>0$ can be chosen uniformly in $(x,p)\in B_R\times B_R$. Here $g$  is of the form $g=g_{A_2}+\theta(g_{A_1}-g_{A_2})$ for some $\theta\in [0,1]$. Inserting (\ref{eqn add4}) into (\ref{eqn add3}), we have:
\[\int_{\R^3} \left|\partial_x(A_1 - A_2)(x)\right|^2 dx   +  4\pi\lambda \int_{B_R} |(A_1-A_2)(x)|^2 \rho(x)\, dx \leq 0.\]
Since the left-hand side is non-negative, we then deduce that $A_1=A_2$, and therefore uniqueness. 
\end{proof}


\begin{remark}
\label{Regularity In t Remark}
 If we consider the time dependence in Lemma \ref{Fixed Point Lemma}, and assume that $f$ is $C^1$ with respect to $t\in I$, then $A$ is also $C^1$ in $t\in I$ as a consequence of the Implicit Function Theorem in Banach spaces; see \cite{Slezak}.
\end{remark}


\subsection{Estimates on $\partial_tA$ and its space derivative}
\label{Subsection Time Derivatives A}

We now turn to the estimates on the time derivative of the vector potential (i.e., the transversal component of the electric field), and its space derivatives. 

Throughout this section, we shall assume that the triplet $(f,\Phi,A)$ satisfies (\ref{Vlasov In RVD})-(\ref{Velocity In RVD}) on $I\times\mathbb{R}^3\times\mathbb{R}^3$ with $f(t)$ having compact support on $\mathbb{R}^3\times\mathbb{R}^3$. For $f$ as given, define $t\mapsto\bar{Z}(t)$ by
\begin{equation}
\label{Support Momenta}
\bar{Z}(t) = \sup\left\{\left|(x,p)\right|:\exists0\leq s\leq t:f(s,x,p)\neq0\right\}.
\end{equation}
The function $\bar{Z}(t)$ is a non-decreasing function of $t$, which by the compact support of $f$ is bounded on any finite subinterval of $I$. The following lemma is essential to our results:

\begin{lemma} 
\label{L2 Estimate Time Derivative A}
Let $f\in C^1(I,C^{1,\alpha}_0(\mathbb{R}^6);\mathbb{R})$ and $(\Phi,A)\in C^1(I,C^{2,\alpha}(\mathbb{R}^3);\mathbb{R}^3\times\mathbb{R}^3)$, with $f\geq0$ and $0<\alpha<1$, satisfy (\ref{Vlasov In RVD})-(\ref{Velocity In RVD}). Define $\rho$ and $\bar{Z}(t)$ according to (\ref{Density and Current}) and (\ref{Support Momenta}), respectively. There exists a positive $C(t)=C(\bar{Z}(t),\left\|f(t)\right\|_{L^{\infty}_{x,p}})$ such that
$$\left\|\partial_t\partial_xA(t)\right\|_{L^2_x}+\left\|\rho^{1/2}(t)\partial_tA(t)\right\|_{L^2_x}\leq C(t),\quad t\in I.$$
\end{lemma}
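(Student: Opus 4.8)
The plan is to differentiate in time the elliptic equation satisfied by $A$, test it against $\partial_tA$ itself, and exploit the positive definiteness of the Jacobian of $z\mapsto z/\sqrt{1+|z|^2}$ to make the weighted norm $\|\rho^{1/2}\partial_tA\|_{L^2_x}$ appear on the favorable side of the resulting identity.

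\textbf{Step 1: an energy identity for $\partial_tA$.} By Lemma \ref{Darwin Potentials Lemma}(c) applied with current density $j_A$, the potential $A(t)$ solves
\[\Delta A=-4\pi j_A-\nabla\Big\{\nabla\cdot\int_{\R^3}j_A(t,y)\frac{dy}{|y-\cdot|}\Big\},\qquad \nabla\cdot A=0,\]
exactly as in \eqref{eqn add1}. I would first observe that $\partial_tj_A$ is compactly supported in $x$ (because $f(t)$ is, and $(f,\Phi,A)$ is $C^1$ in $t$), then differentiate this identity in $t$, multiply by $\partial_tA$ and integrate over $\R^3$. The boundary terms vanish from the decay $\partial_tA=O(|x|^{-1})$, $\partial_x\partial_tA=O(|x|^{-2})$ (the decay argument used in the uniqueness part of Lemma \ref{Fixed Point Lemma}), and after one further integration by parts the gradient correction term drops because $\nabla\cdot\partial_tA=0$. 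This leaves
\[\int_{\R^3}\left|\partial_x\partial_tA(t,x)\right|^2dx=4\pi\int_{\R^3}\partial_tA(t,x)\cdot\partial_tj_A(t,x)\,dx.\]

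\textbf{Step 2: the sign-definite contribution.} Splitting $\partial_tj_A=\int_{\R^3}(\partial_tv_A)f\,dp+\int_{\R^3}v_A\,\partial_tf\,dp$ and using $v_A=v(p-A)$ with $v(z)=z/\sqrt{1+|z|^2}$, one has $\partial_tv_A=-Dv(p-A)\,\partial_tA$, where $Dv$ is symmetric and positive definite (recorded, with the attendant uniform lower bound, in the uniqueness part of Lemma \ref{Fixed Point Lemma}). On $\texttt{supp}f(t)$ we have $|p|\leq\bar Z(t)$, and since $|v_A|\leq1$ gives $|j_A|\leq\rho$, Lemma \ref{Estimate Darwin Potentials} bounds $\|A(t)\|_{L^\infty_x}$ by a function of $\bar Z(t)$ and $\|f(t)\|_{L^\infty_{x,p}}$; hence $Dv(p-A)\geq\lambda(t)\,\texttt{id}$ on $\texttt{supp}f(t)$ for some positive $\lambda(t)=\lambda(\bar Z(t),\|f(t)\|_{L^\infty_{x,p}})$. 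Since $f\geq0$, the $\partial_tv_A$ term obeys
\[4\pi\int_{\R^3}\partial_tA\cdot\Big(\int_{\R^3}(\partial_tv_A)f\,dp\Big)dx=-4\pi\int_{\R^3}\!\!\int_{\R^3}\partial_tA\cdot Dv(p-A)\partial_tA\,f\,dp\,dx\leq-4\pi\lambda(t)\big\|\rho^{1/2}(t)\partial_tA(t)\big\|^2_{L^2_x},\]
which I would move to the left-hand side of the identity from Step 1.

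\textbf{Step 3: the remainder and closing.} For $4\pi\int\partial_tA\cdot(\int v_A\,\partial_tf\,dp)\,dx$ I would substitute $\partial_tf$ from the Vlasov equation \eqref{Vlasov In RVD} and integrate by parts in $x$ and in $p$ until no derivative of $f$ remains; no boundary terms appear because $\int v_A^iv_A^kf\,dp$ is supported in $B_{\bar Z(t)}$. Every resulting term should fall into one of two types: (i) terms pairing $\partial_{x_k}\partial_tA^i$ with $\int v_A^iv_A^kf\,dp$, bounded in modulus by $\rho$, so that Cauchy--Schwarz gives $\leq C(t)\|\partial_x\partial_tA(t)\|_{L^2_x}$ after noting $\|\rho(t)\|^2_{L^2_x}\leq\|\rho(t)\|_{L^\infty_x}\|f(t)\|_{L^1_{x,p}}$; and (ii) terms pairing the undifferentiated $\partial_tA$ with quantities controlled by $|Dv|(|\partial_xA|+|\nabla\Phi|)\rho\leq C(t)\rho$ (again by Lemma \ref{Estimate Darwin Potentials}), which after writing $\rho=\rho^{1/2}\cdot\rho^{1/2}$ give $\leq C(t)\|\rho^{1/2}(t)\partial_tA(t)\|_{L^2_x}$. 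Combining,
\[\big\|\partial_x\partial_tA(t)\big\|^2_{L^2_x}+4\pi\lambda(t)\big\|\rho^{1/2}(t)\partial_tA(t)\big\|^2_{L^2_x}\leq C_1(t)\big\|\partial_x\partial_tA(t)\big\|_{L^2_x}+C_2(t)\big\|\rho^{1/2}(t)\partial_tA(t)\big\|_{L^2_x},\]
and, $\lambda(t)$ being bounded below by a positive function of $\bar Z(t)$ and $\|f(t)\|_{L^\infty_{x,p}}$, Young's inequality absorbs both linear terms and yields the stated bound.

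\textbf{Main obstacle.} The delicate step is the bookkeeping in Step 3: one must check that after the integrations by parts no term survives carrying two $x$-derivatives of $\partial_tA$ or an unweighted $\partial_tA$, so that everything lands in category (i) or (ii), and one must keep each integration by parts legitimate via the decay of $\partial_tA$ and the compact support of $f(t)$. The conceptual heart, however, is Step 2: it is the sign in $\partial_tv_A=-Dv(p-A)\partial_tA$ together with $f\geq0$ that forces the weighted $L^2$-norm of $\partial_tA$ to appear with the correct sign, which is precisely what makes this estimate possible without any use of energy conservation.
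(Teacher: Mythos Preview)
Your proposal is correct and follows essentially the same route as the paper: differentiate the elliptic equation for $A$ in time, test against $\partial_tA$, use $\nabla\cdot A=0$ to drop the projection term, move the sign-definite $\partial_tv_A$ contribution to the left, and handle the $\partial_tf$ piece by the Vlasov equation plus integrations by parts in $x$ and $p$. The paper writes the positive-definite term explicitly as $(1+g^2)^{-1/2}\big(|\partial_tA|^2-|v_A\cdot\partial_tA|^2\big)$ rather than invoking the abstract bound $Dv\geq\lambda\,\texttt{id}$, and it displays the three surviving terms of your Step~3 explicitly (matching your categories (i) and~(ii) precisely), but these are only cosmetic differences.
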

\begin{remark} 
\label{Remark Velocity Control}
For $t\in I$, Lemma \ref{Estimate Darwin  Potentials} and the assumption on the support of $f$ imply $\left|p-A\right|\leq C(\bar{Z}(t),\left\|f(t)\right\|_{L^{\infty}_{x,p}})<\infty$ and so $\left|v_A\right|<1$ strictly on $\texttt{supp}f(t)$.
\end{remark}

\begin{proof}[Proof of Lemma \ref{L2 Estimate Time Derivative A}]
  For $v_A$ as given in (\ref{Velocity In RVD}) define the current density 
      $$j_A(t,x)=\int_{\mathbb{R}^3}v_Af(t,x,p)dp.$$
By Lemma \ref{Darwin Potentials Lemma}(c), the components $A^i$, $i=1,2,3$ of the vector potential satisfy
\begin{equation}
\label{Components}
\Delta A^i(t,x)=-4\pi j^i_A(t,x)-\partial_{x_i}\int_{\mathbb{R}^3}\nabla\cdot j_A(t,y)\frac{dy}{\left|y-x\right|}.
\end{equation}
Take the partial time derivative on both sides of the above equation and multiply by $\partial_tA^i$. After integration by parts, dropping the $4\pi$ and using the definition of $j_A$, we have
\begin{eqnarray} 
\label{Components By Parts}
\int_{\mathbb{R}^3}\left|\nabla\partial_tA^i\right|^2(t,x)dx & = & \int_{\mathbb{R}^3}\int_{\mathbb{R}^3}\partial_tA^i(t,x)\partial_t\left(v^i_Af\right)(t,x,p)dpdx\nonumber\\
& &-\int_{\mathbb{R}^3}\int_{\mathbb{R}^3}\partial_t\partial_{x_i}A^i(t,x)\nabla\cdot \partial_tj_A(t,y)\frac{dxdy}{\left|y-x\right|},
\end{eqnarray}
where $\partial_tA^i(t,x)\partial_t\left(v^i_Af\right)$ simply means here the product of the two terms, not their sum, contrarily to the repeated index summation notation used throughout the paper. Note that the boundary terms vanish. Indeed, since $f$ has a compact support, so does $j_A$ and the boundary term corresponding to the first term on the right-hand side of the above equation is zero. On the other hand, it follows by standard arguments that $\partial_tA^i(x)$ has at least a decay $O(\left|x\right|^{-1})$ and $\nabla\partial_tA^i(x)=O(\left|x\right|^{-2})$. Moreover, the integral $I(x)$ on the right-hand side of (\ref{Components}) has a decay $O(\left|x\right|^{-2})$ and so does $\partial_tI(x)$. Therefore, $\partial_tA^i\nabla\partial_tA^i(x)=O(\left|x\right|^{-3})$ and $\partial_tA^i\partial_tI(x)=O(\left|x\right|^{-3})$, which suffice for the boundary terms to vanish. Now we add the equations (\ref{Components By Parts}) for each component of $A$. We find
\begin{eqnarray}
\label{Working Integrals}
\int_{\mathbb{R}^3}\left|\partial_x\partial_tA\right|^2 & = & \int_{\mathbb{R}^3}\int_{\mathbb{R}^3}f\left(\partial_tA\cdot\partial_tv_A\right)+\int_{\mathbb{R}^3}\int_{\mathbb{R}^3} \left(\partial_tA\cdot v_A\right)\partial_tf\nonumber\\
& & -\partial_t\int_{\mathbb{R}^3}\int_{\mathbb{R}^3}\frac{1}{r}\left(\nabla\cdot A\right)\left(\nabla\cdot\partial_tj_A\right)\nonumber\\
& = & I_1+I_2+I_3.
\end{eqnarray}
But $I_3\equiv0$ since the vector potential satisfies the Coulomb gauge condition; see Lemma \ref{Darwin Potentials Lemma}(c). Also, by using the representation of the derivatives of the velocity given in the Appendix, the integral $I_1$ can be written as
\begin{eqnarray}
   I_1 & = & -\int_{\mathbb{R}^3}\int_{\mathbb{R}^3}\frac{f}{\sqrt{1+g^2}}\left(\left|\partial_tA\right|^2-\left|v_A\cdot\partial_tA\right|^2\right),\nonumber
  \end{eqnarray}
where we have denoted $g=\left|p-A\right|$. We shall also denote $K=-\nabla\Phi+v^i_A\nabla A^i.$ Hence, after sending $I_1$ to the left-hand side of (\ref{Working Integrals}), and by using the Vlasov equation (\ref{Vlasov In RVD}) in $I_2$, we find that 
\begin{eqnarray}
\label{LHS and RHS}
\int_{\mathbb{R}^3}\left|\partial_x\partial_tA\right|^2 & + & \int_{\mathbb{R}^3}\int_{\mathbb{R}^3}\frac{f}{\sqrt{1+g^2}}\left(\left|\partial_tA\right|^2-\left|v_A\cdot\partial_tA\right|^2\right) 
\nonumber\\
& = & -\int_{\mathbb{R}^3}\int_{\mathbb{R}^3}\left(v_A\cdot\partial_tA\right)\left[\nabla_x\cdot\left(v_Af\right)+\nabla_p\cdot\left(Kf\right)\right]\nonumber\\
& = & \int_{\mathbb{R}^3}\int_{\mathbb{R}^3}f\partial_tA^i\left(v_A\cdot\nabla_xv^i_A\right)+\int_{\mathbb{R}^3}\int_{\mathbb{R}^3}fv^i_A\left(v_A\cdot\nabla_x\partial_tA^i\right)\nonumber\\
   & & + \int_{\mathbb{R}^3}\int_{\mathbb{R}^3}f\partial_tA^i\left(K\cdot\nabla_pv^i_A\right).
\end{eqnarray}
Notice the integration by parts and the use of the product rule in the last equality. We claim that the left-hand side of the above equality has a positive lower bound for every time $t$. Indeed, we have that 
\begin{equation}\label{inequality on A}
\left|\partial_tA\right|^2-\left|v_A\cdot\partial_tA\right|^2\geq\left|\partial_tA\right|^2-\left|v_A\right|^2\left|\partial_tA\right|^2=\left(1-\left|v_A\right|^2\right)\left|\partial_tA\right|^2.
\end{equation}
Also, by Remark \ref{Remark Velocity Control} there exists a $g_{\texttt{max}}(t)=g_{\texttt{max}}(\bar{Z}(t),\left\|f(t)\right\|_{L^{\infty}_{x,p}})<\infty$ so that  
\begin{equation}
\label{Lower Bound} \frac{1-\left|v_A\right|^2}{\sqrt{1+g^2}}\equiv\frac{1}{\left(1+g^2\right)^{3/2}}\geq\frac{1}{\left(1+g^2_{\texttt{max}}\right)^{3/2}}\equiv C_{\texttt{min}}>0.
\end{equation}
Therefore, the left-hand side of (\ref{LHS and RHS}) satisfies
\begin{equation}
\label{LHS}
 \texttt{LHS}\geq C_{\texttt{min}}(t)\left(\left\|\partial_x\partial_tA(t)\right\|^2_{L^2_x}+\left\|\rho^{1/2}(t)\partial_tA(t)\right\|^2_{L^2_x}\right).
\end{equation}
On the other hand, the known bounds on the derivatives of the potentials given in Lemma \ref{Estimate Darwin Potentials} lead to  
$$\left\|\partial_xv_A(t)\right\|_{L^{\infty}_{x,p}}+\left\|\partial_pv_A(t)\right\|_{L^{\infty}_{x,p}}+\left\|K(t)\right\|_{L^{\infty}_{x,p}}\leq C(t)\equiv C(\bar{Z}(t),\left\|f(t)\right\|_{L^{\infty}_{x,p}});$$
see the Appendix for an explicit representation of the derivatives of the velocity. Hence, after a use of the Cauchy-Schwarz inequality and again the use of the compact support of $f$, the right-hand side of (\ref{LHS and RHS}) can be estimated as 
\begin{equation}
\label{RHS}
\texttt{RHS}\leq C(t)\left(\left\|\partial_x\partial_tA(t)\right\|_{L^2_x}+\left\|\rho^{1/2}(t)\partial_tA(t)\right\|_{L^2_x}\right).
\end{equation}
Finally, since $\left(a+b\right)^2\leq 2\left(a^2+b^2\right)$, the result follows from (\ref{LHS})-(\ref{RHS}).
\end{proof}

\begin{lemma} 
\label{LInfinity Estimate Time Derivative A}
Under the assumptions of Lemma \ref{L2 Estimate Time Derivative A}, we have that 
\begin{eqnarray}
\left\|\partial_tA(t)\right\|_{L^{\infty}_x} & \leq & C\left[\left\|\rho(t)\right\|^{1/3}_{L^1_x}\left\|\rho(t)\right\|^{2/3}_{L^{\infty}_x}\left(1+\left\|\rho(t)\right\|^{2/3}_{L^1_x}\left\|\rho(t)\right\|^{1/3}_{L^{\infty}_x}\right)\right.\nonumber\\
& & \left.+ \left\|\rho(t)\right\|^{1/6}_{L^1_x}\left\|\rho(t)\right\|^{1/3}_{L^{\infty}_x}\left\|\rho^{1/2}(t)\partial_tA(t)\right\|_{L^2_x}\right],\quad t\in I.\hspace{-.5cm}\nonumber
\end{eqnarray}
\end{lemma}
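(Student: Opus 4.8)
The plan is to derive an $L^\infty$-bound on $\partial_t A$ by differentiating the Darwin representation of $A$ in time and carefully estimating the resulting kernel integral, splitting the current density into a part controlled pointwise and a part controlled only in $L^2$ via Lemma \ref{L2 Estimate Time Derivative A}. First I would start from the representation in Lemma \ref{Darwin Potentials Lemma}(c), or equivalently from the formula \eqref{A In RVD}, and write
$$\partial_t A(t,x) = \frac{1}{2}\int_{\mathbb{R}^3}\int_{\mathbb{R}^3}\left[\texttt{id}+\omega\otimes\omega\right]\partial_t\left(v_A f\right)(t,y,p)\,\frac{dp\,dy}{\left|y-x\right|},$$
so that with $\partial_t(v_A f) = (\partial_t v_A) f + v_A \partial_t f$ and $|\texttt{id}+\omega\otimes\omega|\le C$ we get
$$\left\|\partial_t A(t)\right\|_{L^\infty_x} \le C\left\|\int_{\mathbb{R}^3}\left(\int_{\mathbb{R}^3}\left|\partial_t(v_A f)\right|dp\right)\frac{dy}{\left|y-\cdot\right|}\right\|_{L^\infty_x}.$$
The term $v_A\partial_t f$ is handled by invoking the Vlasov equation \eqref{Vlasov In RVD} to replace $\partial_t f$ by spatial and momentum derivatives of $f$; integrating by parts in $p$ (the momentum integral has compact support) moves derivatives onto $v_A$ and onto the kernel, producing terms with $|K|\|f\|_{L^\infty}$ and $\|\partial_p v_A\|_{L^\infty}$ that are controlled by Lemma \ref{Estimate Darwin Potentials}; the $\nabla_x f$ piece similarly yields, after noting $\rho$-type integrals, terms of the form $\|\rho\|_{L^1}^{a}\|\rho\|_{L^\infty}^{b}$ via Lemma \ref{Pallard} with $m=1$.

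The genuinely delicate term is the one containing $\partial_t A$ itself, namely the contribution of $\partial_t v_A$. Using the Appendix formula for the derivatives of the velocity, $\partial_t v_A$ is, up to the factor $(1+g^2)^{-1/2}$ times a bounded matrix, equal to $-\partial_t A$ (recall $v_A = v(p-A)$ and $g=|p-A|$), so this piece of the integrand is bounded by $C f\,|\partial_t A(t,y)|$. Thus I must estimate
$$\left\|\int_{\mathbb{R}^3}\rho(t,y)^{?}\,|\partial_t A(t,y)|\,\frac{dy}{\left|y-\cdot\right|}\right\|_{L^\infty_x},$$
but only an $L^2$-bound on $\rho^{1/2}\partial_t A$ is available. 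The trick is to write the integrand as $\rho(t,y)^{1/2}\cdot\big(\rho(t,y)^{1/2}|\partial_t A(t,y)|\big)$ and apply Cauchy–Schwarz in $y$ against the kernel: one gets
$$\int_{\mathbb{R}^3}\rho^{1/2}\left(\rho^{1/2}|\partial_t A|\right)\frac{dy}{|y-x|} \le \left\|\rho^{1/2}(t)\partial_t A(t)\right\|_{L^2_x}\left(\int_{\mathbb{R}^3}\rho(t,y)\frac{dy}{|y-x|^2}\right)^{1/2},$$
and the remaining factor $\int \rho\,|y-x|^{-2}dy$ is precisely of the type handled by Lemma \ref{Pallard} with $m=2$, $r=1$, $s=\infty$, giving $\|\rho\|_{L^1}^{1/3}\|\rho\|_{L^\infty}^{2/3}$; its square root is $\|\rho\|_{L^1}^{1/6}\|\rho\|_{L^\infty}^{1/3}$, matching the last term in the claimed inequality exactly.

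Collecting the pieces then gives the stated bound: the polynomial-in-$\rho$ terms from the $v_A\partial_t f$ and direct-kernel contributions assemble into $\|\rho\|_{L^1}^{1/3}\|\rho\|_{L^\infty}^{2/3}\big(1+\|\rho\|_{L^1}^{2/3}\|\rho\|_{L^\infty}^{1/3}\big)$ — the factor $\|\rho\|_{L^1}^{2/3}\|\rho\|_{L^\infty}^{1/3}$ being exactly the $\|A\|_{L^\infty}$-bound from \eqref{Bounded Map}, which enters through $K$ and through $\|\partial_p v_A\|_{L^\infty}$ — and the $L^2$-controlled term contributes $\|\rho\|_{L^1}^{1/6}\|\rho\|_{L^\infty}^{1/3}\|\rho^{1/2}\partial_t A\|_{L^2}$. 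The main obstacle, as indicated, is the self-referential $\partial_t A$ term: the key realization is that one should \emph{not} try to absorb it to the left (that would require an $L^\infty$-smallness one does not have) but instead leave $\|\rho^{1/2}\partial_t A\|_{L^2}$ on the right, since Lemma \ref{L2 Estimate Time Derivative A} already bounds it; getting the precise exponents $1/6$ and $1/3$ requires using the sharp $m=2$ case of Lemma \ref{Pallard} after the Cauchy–Schwarz split. A minor technical point to check is that all the integrations by parts in $p$ produce no boundary terms, which is immediate from the compact $p$-support of $f(t)$.
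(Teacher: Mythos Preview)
Your approach is correct and structurally matches the paper: differentiate the integral representation \eqref{A In RVD}, split $\partial_t(v_Af)=v_A\partial_tf+f\partial_tv_A$, and handle the first piece via the Vlasov equation and integration by parts. Where you differ is in the treatment of the $f\partial_tv_A$ contribution. The paper applies Lemma~\ref{Pallard} with $m=1$, $r=1$, $s=2$ to $\int\rho|\partial_tA|\,|y-x|^{-1}dy$, obtaining $C\|\rho\partial_tA\|_{L^1}^{1/3}\|\rho\partial_tA\|_{L^2}^{2/3}$, and then uses Cauchy--Schwarz on $\|\rho\partial_tA\|_{L^1}$ and a direct bound on $\|\rho\partial_tA\|_{L^2}$ to extract $\|\rho^{1/2}\partial_tA\|_{L^2}$. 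You instead apply Cauchy--Schwarz directly inside the $y$-integral, pairing $\rho^{1/2}|y-x|^{-1}$ with $\rho^{1/2}|\partial_tA|$, which leaves $\big(\int\rho|y-x|^{-2}dy\big)^{1/2}$ to be handled by Lemma~\ref{Pallard} with $m=2$. Both routes yield exactly $\|\rho\|_{L^1}^{1/6}\|\rho\|_{L^\infty}^{1/3}\|\rho^{1/2}\partial_tA\|_{L^2}$; yours is arguably more direct since it only uses the elementary $(r,s)=(1,\infty)$ case of Lemma~\ref{Pallard}.

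Two small corrections to your sketch of the $v_A\partial_tf$ piece: integrating by parts in $p$ cannot move a derivative onto the kernel $\mathcal{K}(x,y)$, which is $p$-independent; and for the $v_A\cdot\nabla_xf$ term you must integrate by parts in $y$, which produces both a $\partial_y\mathcal{K}$ term (kernel of order $|y-x|^{-2}$, hence Lemma~\ref{Pallard} with $m=2$, giving the leading $\|\rho\|_{L^1}^{1/3}\|\rho\|_{L^\infty}^{2/3}$) and a $\partial_yv_A$ term (controlled by $\|\partial_xA\|_{L^\infty}$ times the $m=1$ integral). The factor $\|\rho\|_{L^1}^{2/3}\|\rho\|_{L^\infty}^{1/3}$ in the final bound comes from this $m=1$ integral $\int\rho|y-x|^{-1}dy$, not from $\|A\|_{L^\infty}$ or $\|\partial_pv_A\|_{L^\infty}$ as you suggest (the latter is simply $\leq C$).
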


\begin{corollary} 
\label{Corollary Estimate Time Derivative}
Under the assumptions of Lemma \ref{L2 Estimate Time Derivative A}, we have that 
$$\left\|\partial_tA(t)\right\|_{L^{\infty}_x}\leq C(t),\quad t\in I,$$
for some positive $C(t)=C(\bar{Z}(t),\left\|f(t)\right\|_{L^{\infty}_{x,p}})$.
\end{corollary}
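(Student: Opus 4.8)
The plan is to estimate $\partial_t A$ in $L^\infty_x$ directly from the Poisson representation for the vector potential, exactly as one does for the scalar potential in the Vlasov-Poisson theory (cf.\ Lemma \ref{Estimate Darwin Potentials}), but keeping track of the fact that $\partial_t j_A$ contains a factor of $\partial_t A$ which we do not yet control pointwise. Differentiating \eqref{Components} in $t$, the potential $\partial_t A^i$ is the Darwin-type potential generated by the source $\partial_t j_A$, so one expects a bound of the form
$$\left\|\partial_t A(t)\right\|_{L^\infty_x}\leq C\left\|\partial_t j_A(t)\right\|_{L^1_x}^{2/3}\left\|\partial_t j_A(t)\right\|_{L^\infty_x}^{1/3},$$
or more precisely the interpolation inequality of Lemma \ref{Pallard} applied with the kernel $|y-x|^{-1}$ (and its once-differentiated analogue for the non-local correction term). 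The difficulty is that $\left\|\partial_t j_A\right\|_{L^\infty_x}$ involves $\left\|\partial_t A\right\|_{L^\infty_x}$ itself, so a naive estimate is circular.

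First I would write $\partial_t j_A = \int_{\mathbb{R}^3}\left[\partial_t v_A\, f + v_A\,\partial_t f\right]dp$ and substitute the Vlasov equation \eqref{Vlasov In RVD} to replace $\partial_t f$ by $-v_A\cdot\nabla_x f + K\cdot\nabla_p f$, then integrate by parts in $p$ so that only $f$ (not its derivatives) appears; here $K=-\nabla\Phi+v_A^i\nabla A^i$. Using the explicit formulas for $\partial_t v_A$, $\nabla_x v_A$, $\nabla_p v_A$ from the Appendix, together with $|v_A|\le 1$ and the bounds on $\nabla\Phi$, $\partial_x A$ from Lemma \ref{Estimate Darwin Potentials}, one sees that $|\partial_t j_A|$ is controlled pointwise by $\rho$, by $\int |\partial_t A|\,f\,dp$, and by $\int f\,dp$ times the already-controlled field quantities. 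The key observation is that $\int_{\mathbb{R}^3}|\partial_t A|f(t,x,p)\,dp \le \left\|\partial_t A(t)\right\|_{L^\infty_x}\rho(t,x)$, but to break circularity one instead keeps this term as $\rho^{1/2}(x)\cdot\rho^{1/2}(x)|\partial_t A(x)|$ and applies Cauchy-Schwarz against the kernel, producing the factor $\left\|\rho^{1/2}\partial_t A\right\|_{L^2_x}$ which Lemma \ref{L2 Estimate Time Derivative A} already bounds. This is exactly why the stated inequality has the last term linear in $\left\|\rho^{1/2}(t)\partial_t A(t)\right\|_{L^2_x}$ with the peculiar exponents $\left\|\rho\right\|_{L^1}^{1/6}\left\|\rho\right\|_{L^\infty}^{1/3}$ on the front.

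Concretely, I would split $\partial_t A^i(x)$ into the local piece $\int |y-x|^{-1}\partial_t j_A^i(y)\,dy$ and the correction piece coming from $\partial_{x_i}\int |y-x|^{-1}\nabla\cdot\partial_t j_A(y)\,dy$; after moving the $x$-derivative onto the kernel the latter also has a $|y-x|^{-2}$ kernel, hence both are handled the same way. For the part of $\partial_t j_A$ bounded by $C(t)\rho$, apply Lemma \ref{Pallard} with $m=1$, $r=1$, $s=\infty$ (and with $m=2$ for the differentiated kernel) to get the terms $\left\|\rho\right\|_{L^1}^{1/3}\left\|\rho\right\|_{L^\infty}^{2/3}\bigl(1+\left\|\rho\right\|_{L^1}^{2/3}\left\|\rho\right\|_{L^\infty}^{1/3}\bigr)$, the inner factor accounting for the extra $\|A\|_{L^\infty_x}\lesssim \left\|\rho\right\|_{L^1}^{2/3}\left\|\rho\right\|_{L^\infty}^{1/3}$ hidden in $K$ and in $\partial_t v_A$. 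For the part involving $\int |\partial_t A|f\,dp$, estimate
$$\left\|\int_{\mathbb{R}^3}\frac{dy}{|y-x|^2}\int_{\mathbb{R}^3}|\partial_t A(y)|f(y,p)\,dp\right\|_{L^\infty_x}\le \left\|\rho^{1/2}\right\|_{L^?}\left\|\rho^{1/2}\partial_t A\right\|_{L^2_x}\cdot\left(\text{kernel estimate}\right),$$
splitting the $|y-x|^{-2}$-integral over a ball and its complement and using Hölder so that the $\rho$-factors come out as $\left\|\rho\right\|_{L^1}^{1/6}\left\|\rho\right\|_{L^\infty}^{1/3}$ (the fractional powers are dictated by balancing the radius of the cutoff ball). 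Collecting the three contributions yields the claimed inequality. The main obstacle is precisely this last bookkeeping: arranging the Hölder/Young split of the singular kernel so that exactly the norm $\left\|\rho^{1/2}\partial_t A\right\|_{L^2_x}$ — and not $\left\|\partial_t A\right\|_{L^\infty_x}$ — appears, and with the correct powers of $\left\|\rho\right\|_{L^1_x}$ and $\left\|\rho\right\|_{L^\infty_x}$; the Corollary then follows at once since $\rho$ inherits compact-support and $L^\infty$ bounds from $f$ via Lemma \ref{Volume Preserving Lemma} and Remark \ref{Remark Velocity Control}, and $\left\|\rho^{1/2}\partial_t A\right\|_{L^2_x}\le C(t)$ by Lemma \ref{L2 Estimate Time Derivative A}.
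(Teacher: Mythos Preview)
Your strategy is exactly the paper's: differentiate the integral representation of $A$ in time, split $\partial_t(v_Af)=v_A\partial_tf+f\partial_tv_A$, replace $\partial_tf$ via the Vlasov equation and integrate by parts, and for the dangerous piece $f\partial_tv_A$ use $|\partial_tv_A|\le C|\partial_tA|$ together with the $L^2$ bound on $\rho^{1/2}\partial_tA$ from Lemma~\ref{L2 Estimate Time Derivative A} to avoid circularity. The key insight is correct.

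Two points, however, deserve cleaning up. First, you differentiate the Poisson equation \eqref{Components} and then split $\partial_tA^i$ into a ``local piece'' with kernel $|y-x|^{-1}$ and a ``correction piece'' $\partial_{x_i}\int|y-x|^{-1}\nabla\cdot\partial_tj_A$. The latter still carries a spatial divergence of $\partial_tj_A$ that you never explain how to remove, and your claim that both pieces ``have a $|y-x|^{-2}$ kernel and are handled the same way'' is not right: the local piece has kernel $|y-x|^{-1}$, not $|y-x|^{-2}$. The paper sidesteps this entirely by differentiating the Darwin representation \eqref{A In RVD} directly: since $A(t,x)=\tfrac12\int\mathcal K(x,y)j_A(t,y)\,dy$ with $|\mathcal K(x,y)|\le C|y-x|^{-1}$, one gets $\partial_tA=\tfrac12\int\mathcal K(x,y)\partial_t(v_Af)\,dp\,dy$ with a single $|y-x|^{-1}$-type kernel and no correction term. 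Second, for the $f\partial_tv_A$ contribution the paper applies Lemma~\ref{Pallard} with $m=1$, $r=1$, $s=2$ to $\Psi=\rho|\partial_tA|$, yielding $C\|\rho\partial_tA\|_{L^1}^{1/3}\|\rho\partial_tA\|_{L^2}^{2/3}$, and then uses Cauchy--Schwarz in the form $\|\rho\partial_tA\|_{L^1}\le\|\rho\|_{L^1}^{1/2}\|\rho^{1/2}\partial_tA\|_{L^2}$ and $\|\rho\partial_tA\|_{L^2}\le\|\rho\|_{L^\infty}^{1/2}\|\rho^{1/2}\partial_tA\|_{L^2}$; this produces the exponents $\|\rho\|_{L^1}^{1/6}\|\rho\|_{L^\infty}^{1/3}$ without any ball/complement splitting of the kernel. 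Your bookkeeping would eventually get there, but the direct route is much shorter.
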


\begin{proof}[Proof of Lemma \ref{LInfinity Estimate Time Derivative A}]
Consider the integral representation (\ref{A In RVD}) for the vector potential, and take the partial time derivative on both sides of this equation. Denoting the kernel by $\mathcal{K}(x,y)$ and dropping the multiple $1/2$, we have
  \begin{eqnarray}
     \partial_tA(t,x) & = & \int_{\mathbb{R}^3}\int_{\mathbb{R}^3}\mathcal{K}(x,y)\left[v_A\partial_tf+\partial_tv_Af\right](t,y,p)dpdy\nonumber\\
    & = & I_1+I_2.\nonumber   
  \end{eqnarray}
Set $K=-\nabla\Phi+v^i_A\nabla A^i$. A use of the Vlasov equation yields
\begin{eqnarray}
  I_1 & = & -\int_{\mathbb{R}^3}\int_{\mathbb{R}^3}\mathcal{K}(x,y)v_A\nabla_y\cdot\left(v_Af\right)-\int_{\mathbb{R}^3}\int_{\mathbb{R}^3}\mathcal{K}(x,y)v_A\nabla_p\cdot\left(Kf\right)\nonumber\\
   & = &  \int_{\mathbb{R}^3}\int_{\mathbb{R}^3}fv_A\cdot\left[\partial_y\mathcal{K}(x,y)v_A+\mathcal{K}(x,y)\partial_yv_A\right]+\int_{\mathbb{R}^3}\int_{\mathbb{R}^3}fK\cdot\mathcal{K}(x,y)\partial_pv_A.\nonumber
\end{eqnarray}
Therefore, since $\left|v_A\right|\leq 1$, also $\left|\partial_x\mathcal{K}(x,y)\right|\leq C\left|y-x\right|^2$ (see Lemma \ref{Darwin Potentials Lemma}(c)), and $\left|\partial_xv_A\right|\leq C\left|\partial_xA\right|$ and $\left|\partial_pv_A\right|\leq C$ (see Appendix), we have
\begin{eqnarray}
\label{Estimate I1 dtA}
  I_1 & \leq & C\left\{\int_{\mathbb{R}^3}\rho(t,y)\frac{dy}{\left|y-x\right|^2}+\left(\left\|\partial_x\Phi(t)\right\|_{L^{\infty}_x}+\left\|\partial_xA(t)\right\|_{L^{\infty}_x}\right)\int_{\mathbb{R}^3}\rho(t,y)\frac{dy}{\left|y-x\right|}\right\}\nonumber \hspace{-.5cm}\\
  & \leq & C\left\|\rho(t)\right\|^{1/3}_{L^1_x}\left\|\rho(t)\right\|^{2/3}_{L^{\infty}_x}\left(1+\left\|\rho(t)\right\|^{2/3}_{L^1_x}\left\|\rho(t)\right\|^{1/3}_{L^{\infty}_x}\right),
\end{eqnarray}
where in the last inequality we used the estimates from Lemmas \ref{Pallard} and \ref{Estimate Darwin Potentials}. 

On the other hand, since $\left|\partial_tv_A\right|\leq C\left|\partial_tA\right|$ (see Appendix), the integral $I_2$ can be estimated as
  \begin{eqnarray}
  \label{Does Not Allow}
     I_2 & \leq &  C\int_{\mathbb{R}^3}\rho(t,y)\left|\partial_tA(t,y)\right|\frac{dy}{\left|y-x\right|}\nonumber\\
     & \leq & C\left\|\rho(t)\partial_tA(t)\right\|^{1/3}_{L^1_x}\left\|\rho(t)\partial_tA(t)\right\|^{2/3}_{L^2_x},\nonumber   
  \end{eqnarray}
where the second inequality is a consequence of Lemma \ref{Pallard}. Hence, the Cauchy-Schwarz inequality and a direct estimate lead to
  \begin{eqnarray}
  \label{Estimate I2 dtA}
     I_2 & \leq &  C\left\|\rho(t)\right\|^{1/6}_{L^1_x}\left\|\rho(t)\right\|^{1/3}_{L^{\infty}_x}\left\|\rho^{1/2}(t)\partial_tA(t)\right\|_{L^2_x}.
  \end{eqnarray}
The lemma then follows from (\ref{Estimate I1 dtA}) and (\ref{Estimate I2 dtA}).
\end{proof}

\begin{lemma}
\label{LInfinity Estimate Double Time Derivative A} 
Under the assumptions of Lemma \ref{L2 Estimate Time Derivative A}, we also have 
$$
\left\|\partial_t\partial_xA(t)\right\|_{L^{\infty}_x} \leq  C(t)\left(\left\|\partial_tf(t)\right\|_{L^{\infty}_{x,p}}+\left\|\rho(t)\right\|^{1/3}_{L^1_x}\left\|\rho(t)\right\|^{2/3}_{L^{\infty}_x}\left\|\partial_tA(t)\right\|_{L^{\infty}_x}\right), t\in I.
$$
for some $C(t)=C(\bar{Z}(t))$.
\end{lemma}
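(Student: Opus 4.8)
The plan is to estimate $\partial_t\partial_x A$ pointwise starting from the representation formula for $\partial_t A$ obtained by differentiating (\ref{A In RVD}) in time, and then moving the $x$-derivative onto the (already singular) kernel in the same fashion as in the proof of Lemma \ref{Estimate Darwin Potentials}. Write $\partial_t A(t,x)=\int\!\!\int \mathcal K(x,y)\big[v_A\partial_tf+\partial_tv_A\,f\big](t,y,p)\,dp\,dy$, exactly as in the proof of Lemma \ref{LInfinity Estimate Time Derivative A}, and denote the two resulting pieces by $I_1$ and $I_2$. For the $\partial_x\partial_t A$ bound I would handle these two integrals separately, since they have genuinely different structure: in $I_1$ one can use the Vlasov equation (\ref{Vlasov In RVD}) to trade the time derivative $\partial_t f$ for the transport terms $\nabla_y\cdot(v_A f)$ and $\nabla_p\cdot(Kf)$, whereas $I_2$ carries the factor $\partial_t v_A$ (controlled by $|\partial_t A|$, see the Appendix) and cannot absorb a further $x$-derivative without paying with $\partial_t\partial_x A$ again, so it must instead be treated by the log-type splitting.

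For $I_2$: after the Vlasov substitution is \emph{not} applied here — $\partial_t v_A$ stays — the integrand is essentially $\mathcal K(x,y)\,\partial_tv_A\,f$, so $\partial_x I_2$ produces the kernel $\partial_x\mathcal K(x,y)$, which behaves like $|y-x|^{-2}$. Because $\rho(t,y)|\partial_t A(t,y)|$ is only bounded (not Hölder a priori) one cannot pass to a $|y-x|^{-3}$ kernel; instead I would differentiate $\partial_t v_A\,f$ in the $p$-variable where helpful, but more to the point I would simply bound $|\partial_x\mathcal K|\le C|y-x|^{-2}$ and estimate $\int_{\mathrm{supp}\,\rho}\rho(t,y)|\partial_tA(t,y)|\,|y-x|^{-2}\,dy$. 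Using the compact support of $f$ (hence of $\rho$ and of $j_A$) together with $\|\partial_tA(t)\|_{L^\infty_x}$ pulled out of the integral, and $\|\rho(t)\|_{L^1_x},\|\rho(t)\|_{L^\infty_x}$ finite, a single application of Lemma \ref{Pallard} with $m=2$ gives $I_2\le C\|\rho(t)\|^{1/3}_{L^1_x}\|\rho(t)\|^{2/3}_{L^\infty_x}\|\partial_tA(t)\|_{L^\infty_x}$, which is exactly the second term of the claimed bound. The support bound is what converts the weak decay of $\partial_tA$ into an honest $L^\infty$-control.

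For $I_1$: after using the Vlasov equation and integrating by parts in $y$ and in $p$ (boundary terms vanish by compact support of $f$), one obtains an integrand of the schematic form $f\,v_A\cdot[\partial_y\mathcal K\,v_A+\mathcal K\,\partial_yv_A]+f\,K\cdot\mathcal K\,\partial_pv_A$, i.e. essentially the expression already displayed for $I_1$ in the proof of Lemma \ref{LInfinity Estimate Time Derivative A}. Now applying $\partial_x$ we get at worst a kernel $\partial_x\partial_y\mathcal K(x,y)\sim|y-x|^{-3}$ acting against $f\,v_A\otimes v_A$, plus lower-order pieces with kernels $|y-x|^{-2}$ against $f\,\partial_yv_A$ (bounded by $|\partial_xA|$, itself bounded by Lemma \ref{Estimate Darwin Potentials}) and against $fK\partial_pv_A$ (with $K$ and $\partial_pv_A$ bounded as noted in Lemma \ref{L2 Estimate Time Derivative A}). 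The genuinely singular term is the $|y-x|^{-3}$ one; here I would run the same device as in the proof of Lemma \ref{Estimate Darwin Potentials} for $\partial^2_xA_D$: split $\int = \int_{|y-x|>h}+\int_{|y-x|\le h}$, use the mean-value estimate $|(v_A f)(y,p)-(v_A f)(x,p)|\le C|x-y|$ (legitimate because $f(t)\in C^{1,\alpha}_0$ and $v_A$ is $C^1$ with $x$-derivative controlled by $\|\partial_xA(t)\|_{L^\infty_x}$, hence $\|\partial_x(v_Af)(t)\|_{L^\infty_{x,p}}\le C(\bar Z(t))\|\partial_tf\|$-type bound together with the regularity already established) to cancel the singularity on the inner ball, use the exact vanishing of $\int_{R_1<|y|<R_2}$ of the homogeneous-degree-$(-3)$ kernel on the outer region as in Lemma \ref{Estimate Darwin Potentials}, pick up a $\ln(R/h)$ factor, and finally optimize $h$ and $R$. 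The price of this is a $\|\partial_tf(t)\|_{L^\infty_{x,p}}$-type bound (coming from bounding the modulus of continuity of $v_Af$ and its $p$-gradient), which matches the first term of the claimed estimate; all the $\bar Z(t)$-dependence is absorbed into $C(t)=C(\bar Z(t))$ via Remark \ref{Remark Velocity Control} and Lemma \ref{Estimate Darwin Potentials}.

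The main obstacle I expect is bookkeeping the most singular term of $\partial_x I_1$: one must verify that after the two integrations by parts the worst kernel really is only $|y-x|^{-3}$ (not worse), that the cancellation identity $\int_{R_1<|y|<R_2}\partial_x\partial_y\mathcal K(y)\,dy=0$ holds — which follows because the relevant kernel entries are homogeneous of degree $-3$ and the argument is the divergence-theorem computation used for $\Gamma^m_{ikl}$ in Lemma \ref{Estimate Darwin Potentials} — and that the Hölder/Lipschitz modulus of $v_A f$ used to tame the inner ball is controlled by quantities already bounded, namely $\|\partial_tf(t)\|_{L^\infty_{x,p}}$, $\|\rho(t)\|_{L^1_x\cap L^\infty_x}$, $\bar Z(t)$, and $\|\partial_xA(t)\|_{L^\infty_x}$ (the last from Lemma \ref{Estimate Darwin Potentials}). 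Once these are in place, collecting the estimates for $\partial_x I_1$ and $\partial_x I_2$ and using $\|\partial_tA(t)\|_{L^\infty_x}\le C(t)$ from Corollary \ref{Corollary Estimate Time Derivative} where convenient yields the stated inequality.
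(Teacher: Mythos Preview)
Your treatment of $I_2$ matches the paper exactly: put $\partial_x$ on the kernel to obtain $\partial_x\mathcal K$ with $|\partial_x\mathcal K|\le C|y-x|^{-2}$, use $|\partial_tv_A|\le C|\partial_tA|$, pull $\|\partial_tA(t)\|_{L^\infty_x}$ out, and apply Lemma~\ref{Pallard} with $m=2$ to $\rho$.

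For $I_1$, however, you have both overcomplicated the argument and introduced a genuine gap. The paper does \emph{not} invoke the Vlasov equation here at all. It simply writes
\[
\partial_t\partial_xA(t,x)=\int\!\!\int \partial_x\mathcal K(x,y)\,\bigl[v_A\,\partial_tf+\partial_tv_A\,f\bigr]\,dp\,dy=I_1+I_2,
\]
so that the kernel in $I_1$ is already $\partial_x\mathcal K$ with the $|y-x|^{-2}$ bound. Since $|v_A|\le 1$ and $\partial_tf(t)$ is supported in $\{|x|\le\bar Z(t)\}\times\{|p|\le\bar Z(t)\}$, one has
\[
|I_1|\le C\int_{\mathbb R^3}\Bigl(\int_{\mathbb R^3}|\partial_tf(t,y,p)|\,dp\Bigr)\frac{dy}{|y-x|^2},
\]
and a single application of Lemma~\ref{Pallard} with $m=2$ gives $|I_1|\le C(\bar Z(t))\,\|\partial_tf(t)\|_{L^\infty_{x,p}}$. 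That is the entire proof of the $I_1$ bound.

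Your route---use Vlasov, integrate by parts, differentiate again to land on a $|y-x|^{-3}$ kernel, then run the log--splitting of Lemma~\ref{Estimate Darwin Potentials}---does not yield the stated estimate. After you have traded away $\partial_tf$ via the Vlasov equation, it is gone; the inner--ball step of the log--splitting requires the Lipschitz modulus of $y\mapsto\int v_A^iv_A^jf\,dp$, which is controlled by $\|\partial_xA\|_{L^\infty_x}$ and $\|\partial_xf\|_{L^\infty_{x,p}}$, \emph{not} by $\|\partial_tf\|_{L^\infty_{x,p}}$. Your sentence ``$\|\partial_x(v_Af)(t)\|_{L^\infty_{x,p}}\le C(\bar Z(t))\|\partial_tf\|$-type bound'' is simply false: $\partial_x(v_Af)=(\partial_xv_A)f+v_A\partial_xf$ contains no time derivative. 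Consequently your approach would produce a bound of the shape $C(\bar Z(t))\bigl[\cdots+(1+\ln^+\|\partial_xj_A\|_{L^\infty_x})\bigr]$, with an unwanted log factor and dependence on $\|\partial_xf\|_{L^\infty_{x,p}}$, rather than the clean $C(\bar Z(t))\|\partial_tf(t)\|_{L^\infty_{x,p}}$ stated in the lemma (note the constant is required to depend only on $\bar Z(t)$). The fix is to abandon the Vlasov substitution for $I_1$ and use the one-line argument above.
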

\begin{corollary} 
\label{Corollary Estimate Double Time Derivative}
Under the assumptions of Lemma \ref{L2 Estimate Time Derivative A}, we have that 
$$\left\|\partial_t\partial_xA(t)\right\|_{L^{\infty}_x}\leq C(t),\quad t\in I,$$
for some positive $C(t)=C(\bar{Z}(t),\left\|f(t)\right\|_{L^{\infty}_{x,p}},\left\|\partial_tf(t)\right\|_{L^{\infty}_{x,p}})$.
\end{corollary}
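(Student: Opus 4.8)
The plan is to differentiate in time the explicit representation (\ref{Derivative A Representation}) of $\partial_xA$ from Lemma \ref{Darwin Potentials Lemma}(c) and then bound the resulting kernel integral with Lemma \ref{Pallard}. Since $(f,\Phi,A)$ solves (\ref{Vlasov In RVD})-(\ref{Velocity In RVD}) and $f(t)$ is compactly supported, comparing (\ref{A In RVD}) with (\ref{Vector Potential Darwin}) shows that $A(t)$ is the Darwin vector potential of the current $j_A(t,x)=\int_{\mathbb{R}^3}v_Af(t,x,p)\,dp$; moreover, from $f\in C^1(I,C^{1,\alpha}_0(\mathbb{R}^6))$ and $A\in C^1(I,C^{2,\alpha}(\mathbb{R}^3))$ one checks that $j_A\in C^1(I,C^{1,\alpha}_0(\mathbb{R}^3;\mathbb{R}^3))$, so Lemma \ref{Darwin Potentials Lemma}(c) applies with $j=j_A$. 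The kernel appearing in (\ref{Derivative A Representation}) is independent of $t$ and dominated by $C\left|y-x\right|^{-2}$, which is locally integrable on $\mathbb{R}^3$; since $\partial_tj_A(t)$ is continuous and compactly supported, differentiation under the integral sign is legitimate and gives
\[
\partial_t\partial_xA(t,x)=\frac{1}{2}\int_{\mathbb{R}^3}\left\{\omega\otimes\partial_tj_A-\partial_tj_A\otimes\omega+\left[3\omega\otimes\omega-\texttt{id}\right]\left(\partial_tj_A\cdot\omega\right)\right\}\frac{dy}{\left|y-x\right|^2},
\]
with $\partial_tj_A=\partial_tj_A(t,y)$.

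Taking $L^\infty_x$-norms and applying Lemma \ref{Pallard} with $m=2$ (hence $r_0=3$) and the choice $r=1<3<s=\infty$ reduces matters to estimating $\left\|\partial_tj_A(t)\right\|_{L^1_x}$ and $\left\|\partial_tj_A(t)\right\|_{L^\infty_x}$, through $\left\|\partial_t\partial_xA(t)\right\|_{L^\infty_x}\leq C\left\|\partial_tj_A(t)\right\|^{1/3}_{L^1_x}\left\|\partial_tj_A(t)\right\|^{2/3}_{L^\infty_x}$. To bound $\partial_tj_A=\int_{\mathbb{R}^3}(\partial_tv_A)f\,dp+\int_{\mathbb{R}^3}v_A\,\partial_tf\,dp$ I would use $\left|\partial_tv_A\right|\leq C\left|\partial_tA\right|$ (see the Appendix), $\left|v_A\right|\leq 1$, and the fact that $f(t)$ and $\partial_tf(t)$ are supported in a ball whose radius is controlled by $\bar{Z}(t)$ — the bound for $\partial_tf(t)$ following because $\partial_tf$ vanishes wherever $f$ vanishes on a time-neighborhood and $\bar{Z}$ is non-decreasing. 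This yields the pointwise estimate
\[
\left|\partial_tj_A(t,y)\right|\leq C\left|\partial_tA(t,y)\right|\rho(t,y)+C(\bar{Z}(t))\left\|\partial_tf(t)\right\|_{L^\infty_{x,p}}\mathbf{1}_{\left\{\left|y\right|\leq\bar{Z}(t)\right\}},
\]
and therefore bounds on $\left\|\partial_tj_A(t)\right\|_{L^\infty_x}$ and $\left\|\partial_tj_A(t)\right\|_{L^1_x}$ of the form $C\left\|\partial_tA(t)\right\|_{L^\infty_x}\left\|\rho(t)\right\|_{L^\infty_x}+C(\bar{Z}(t))\left\|\partial_tf(t)\right\|_{L^\infty_{x,p}}$ and $C\left\|\partial_tA(t)\right\|_{L^\infty_x}\left\|\rho(t)\right\|_{L^1_x}+C(\bar{Z}(t))\left\|\partial_tf(t)\right\|_{L^\infty_{x,p}}$, respectively.

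Inserting these two bounds into the interpolation inequality and simplifying then gives the claim. Using the subadditivity of $t\mapsto t^{1/3}$ and $t\mapsto t^{2/3}$ together with Young's inequality on the resulting mixed terms, and the elementary estimate $\left\|\rho(t)\right\|_{L^1_x}\leq C(\bar{Z}(t))\left\|\rho(t)\right\|_{L^\infty_x}$ that follows from the compact support of $\rho(t)$, all cross contributions collapse into the two advertised terms, producing exactly
\[
\left\|\partial_t\partial_xA(t)\right\|_{L^\infty_x}\leq C(\bar{Z}(t))\left(\left\|\partial_tf(t)\right\|_{L^\infty_{x,p}}+\left\|\rho(t)\right\|^{1/3}_{L^1_x}\left\|\rho(t)\right\|^{2/3}_{L^\infty_x}\left\|\partial_tA(t)\right\|_{L^\infty_x}\right).
\]
Corollary \ref{Corollary Estimate Double Time Derivative} is then immediate, since Corollary \ref{Corollary Estimate Time Derivative} bounds $\left\|\partial_tA(t)\right\|_{L^\infty_x}$, Lemma \ref{Volume Preserving Lemma} gives $\left\|f(t)\right\|_{L^\infty_{x,p}}=\left\|f_0\right\|_{L^\infty_{x,p}}$, and the compact support of $f(t)$ yields $\left\|\rho(t)\right\|_{L^1_x}+\left\|\rho(t)\right\|_{L^\infty_x}\leq C(\bar{Z}(t))\left\|f(t)\right\|_{L^\infty_{x,p}}$. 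I do not anticipate a real obstacle here: in contrast with Lemma \ref{L2 Estimate Time Derivative A}, no coercivity or energy-conservation argument enters. The only points needing care are the justification of differentiating the weakly singular kernel integral under the integral sign — handled by local integrability of $\left|y-x\right|^{-2}$ on $\mathbb{R}^3$ and the compact support of $\partial_tj_A(t)$ — and verifying that the constants multiplying $\left\|\partial_tf(t)\right\|_{L^\infty_{x,p}}$ depend only on $\bar{Z}(t)$, i.e. that the support of $\partial_tf(t)$ is genuinely governed by $\bar{Z}(t)$.
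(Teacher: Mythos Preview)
Your proposal is correct and follows essentially the same route as the paper's proof of Lemma \ref{LInfinity Estimate Double Time Derivative A} (from which the corollary is immediate): differentiate the representation (\ref{Derivative A Representation}) in time, use $\left|\partial_x\mathcal{K}(x,y)\right|\leq C\left|y-x\right|^{-2}$, apply Lemma \ref{Pallard} with $m=2$, and invoke $\left|\partial_tv_A\right|\leq C\left|\partial_tA\right|$ together with Corollary \ref{Corollary Estimate Time Derivative}. The only organizational difference is that the paper splits $\partial_t(v_Af)=v_A\partial_tf+(\partial_tv_A)f$ \emph{before} applying Lemma \ref{Pallard}, estimating the two resulting integrals $I_1,I_2$ separately; you instead apply Lemma \ref{Pallard} to $\partial_tj_A$ as a whole and then bound its $L^1$ and $L^\infty$ norms, which forces the extra subadditivity/Young manipulations to collapse the cross terms. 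The paper's ordering avoids that algebra entirely and yields the clean inequality of Lemma \ref{LInfinity Estimate Double Time Derivative A} directly, but the substance of the two arguments is the same.
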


\begin{proof}[Proof of Lemma \ref{LInfinity Estimate Double Time Derivative A}]
 Consider
  \begin{eqnarray}
    \label{Here And Now Useful}
     \partial_t\partial_xA(t,x) & = & \int_{\mathbb{R}^3}\int_{\mathbb{R}^3}\partial_x\mathcal{K}(x,y)\left[v_A\partial_tf+\partial_tv_Af\right](t,y,p)dpdy\nonumber\\
     & = & I_1+I_2,   
  \end{eqnarray}
where $\left|\partial_x\mathcal{K}(x,y)\right|\leq C\left|y-x\right|^{-2}$. Hence, by Lemma \ref{Pallard}, we obtain  
$$I_1\leq C(\bar{Z}(t))\left\|\partial_tf(t)\right\|_{L^{\infty}_{x,p}}\quad\hbox{and}\quad I_2\leq C\left\|\partial_tA(t)\right\|_{L^{\infty}_x}\left\|\rho(t)\right\|^{1/3}_{L^1_x}\left\|\rho(t)\right\|^{2/3}_{L^{\infty}_x},$$
where $\left|\partial_tv_A\right|\leq C\left|\partial_tA\right|$ has been used. The result readily follows.
\end{proof}


\section{The Cauchy Problem for the RVD System}
\label{Section Small Data Solutions}

A noticeable advantage of writing the RVD system in terms of the generalized variables and potentials is that it resembles, to some extent, the well-known Vlasov-Poisson (VP) system. Actually, the latter can be formally obtained from (\ref{Vlasov In RVD})-(\ref{Velocity In RVD}) by letting $c\rightarrow\infty$, so that terms involving the vector potential are no longer present. This resemblance allows to adapt previous techniques used for the VP system to the Darwin case. Below, the proofs we present are in the same vein as those given in \cite{Rein} for the VP system. Obviously, several non-trivial difficulties arise due to the inclusion of the vector potential in the system equations, not present in the Poisson case. Incidentally, we expect that a global in time existence result to the \textit{relativistic} Vlasov-Poisson system for unrestricted Cauchy data, which is still unsolved, will lead to an analogous result for the RVD system.   


\subsection{Local Solutions} 
\label{Local Solutions Section}

In this section we shall produce a local in time existence and uniqueness result for classical solutions of the RVD system.  

\begin{definition}
\label{Classical Solution RVD Definition}
Let $f_0$ be given. We call $f$ a classical solution of the RVD system if $f\in C^1(I\times\mathbb{R}^6;\mathbb{R})$; it induces the potentials $(\Phi,A)\in C^1(I,C^2(\mathbb{R}^3);\mathbb{R}\times\mathbb{R}^3)$ via (\ref{Phi In RVD})-(\ref{A In RVD}); for every compact interval $\bar{J}\subset I$ the fields $\nabla\Phi$ and $v^i_A\nabla A^i$ are bounded on $\bar{J}\times\mathbb{R}^3$ and $\bar{J}\times\mathbb{R}^3\times\mathbb{R}^3$ respectively; and the triplet $(f,\Phi,A)$ satisfies the system (\ref{Vlasov In RVD})-(\ref{Velocity In RVD}) on $I\times\mathbb{R}^3\times\mathbb{R}^3$. Moreover, we say that $f$ is a classical solution of the Cauchy problem if $\left.f\right|_{t=0}=f_0$. 
\end{definition}
 
\begin{theorem}
\label{Local Solutions Theorem}
Let $f_0\in C^{1,\alpha}_0(\mathbb{R}^6;\mathbb{R})$, $0<\alpha<1$, $f_0\geq0$. For some $T>0$, there exists a unique classical solution $f$  on $[0,T[$ of the Cauchy problem for the RVD system. Moreover, for each $0\leq t<T$, the function $f(t)$ is in $C^{1,\alpha}(\mathbb{R}^6;\mathbb{R})$, it is non-negative, and has compact support. In addition, if $T>0$ is the life span of $f$,  then
$$ \sup\left\{\left|p\right|:\exists0\leq t<T,x\in\mathbb{R}^3:f(t,x,p)\neq0\right\}<\infty
$$
implies that the solution is global in time, i.e., $T=\infty$.
\end{theorem}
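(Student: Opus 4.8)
\emph{Iteration.} The proof follows the constructive scheme of \cite{Rein} for Vlasov--Poisson, adapted to the potential formulation. Set $f^{(0)}\equiv f_0$. Given an iterate $f^{(n)}\in C^1(I,C^{1,\alpha}_0(\R^6);\R)$, non-negative and compactly supported, let $\Phi^{(n)}$ be the Newtonian potential of $\rho^{(n)}=\int f^{(n)}\,dp$ (Lemma \ref{Darwin Potentials Lemma}(a)) and let $A^{(n)}$ be the unique solution of the nonlinear integral equation (\ref{A In RVD})--(\ref{Velocity In RVD}) given by Lemma \ref{Fixed Point Lemma}; by Remark \ref{Regularity In t Remark} (and differentiation under the integral in (\ref{Phi In RVD})), together with Lemma \ref{Estimate Darwin Potentials}, the pair $(\Phi^{(n)},A^{(n)})$ lies in $C^1(I,C^{2,\alpha}(\R^3);\R\times\R^3)$ with space derivatives bounded on compact subintervals. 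Then define $f^{(n+1)}(t,\xi)=f_0(\Xi^{(n)}(0,t,\xi))$, where $\Xi^{(n)}$ is the characteristic flow (\ref{Characteristics X Gen})--(\ref{Characteristics P Gen}) built from $(\Phi^{(n)},A^{(n)})$; by Lemma \ref{Volume Preserving Lemma} and Remark \ref{Alpha In Vlasov}, $f^{(n+1)}$ is again non-negative, compactly supported, of class $C^1(I,C^{1,\alpha}_0(\R^6);\R)$, with $\|f^{(n+1)}(t)\|_{L^q_{x,p}}=\|f_0\|_{L^q_{x,p}}$.

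\emph{Uniform bounds on a common interval.} Let $\bar Z^{(n)}(t)$ be the support function (\ref{Support Momenta}) of $f^{(n)}$. Conservation of norms gives $\|\rho^{(n)}\|_{L^1_x},\|j^{(n)}\|_{L^1_x}\le\|f_0\|_{L^1_{x,p}}$ and $\|\rho^{(n)}\|_{L^\infty_x},\|j^{(n)}\|_{L^\infty_x}\le C\|f_0\|_{L^\infty_{x,p}}(1+\bar Z^{(n)}(t))^3$, so Lemma \ref{Estimate Darwin Potentials} bounds $\|A^{(n)}\|_{L^\infty_x}$, $\|\nabla\Phi^{(n)}\|_{L^\infty_x}$ and $\|\partial_xA^{(n)}\|_{L^\infty_x}$ by an increasing function $\Psi(\bar Z^{(n)}(t))$. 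Since $|\dot X^{(n)}|=|v_{A^{(n)}}|\le1$ and $|\dot\Pi^{(n)}|\le\|\nabla\Phi^{(n)}\|_{L^\infty_x}+\|\partial_xA^{(n)}\|_{L^\infty_x}$, one gets $\bar Z^{(n+1)}(t)\le\bar Z^{(n)}(0)+t+\int_0^t\Psi(\bar Z^{(n)}(s))\,ds$; comparing with the scalar problem $\dot g=1+\Psi(g)$, $g(0)=\bar Z(0)+1$, and fixing $T$ below its blow-up time, induction on $n$ yields $\bar Z^{(n)}(t)\le g(t)$ on $[0,T]$ for \emph{all} $n$. In particular $|p-A^{(n)}|$ stays bounded on $\texttt{supp}f^{(n)}(t)$ (Remark \ref{Remark Velocity Control}), so $|v_{A^{(n)}}|<1$ strictly and, via the Appendix formulas, $\|\partial_xv_{A^{(n)}}\|_{L^\infty}$, $\|\partial_pv_{A^{(n)}}\|_{L^\infty}$ are uniformly bounded; Gr\"onwall on the variational equations of $\Xi^{(n)}$ then gives uniform $C^{1,\alpha}$ bounds for $f^{(n)}$ on $[0,T]\times\R^6$. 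Closing this loop \emph{before} one can even assert $|v_{A^{(n)}}|<1$ is the first technical point to get right.

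\emph{Convergence, limit, uniqueness.} The genuinely new step relative to \cite{Rein} is the stability of the implicit field map $f\mapsto A[f]$. Testing the difference of the Poisson equations for $A[f_1]$ and $A[f_2]$ against $A[f_1]-A[f_2]$ and exploiting, exactly as in the uniqueness part of Lemma \ref{Fixed Point Lemma}, that $z\mapsto z/\sqrt{1+|z|^2}$ has positive-definite derivative---so the non-coercive term is absorbed by the coercive one---yields a Lipschitz bound $\|\partial_x(A[f_1]-A[f_2])\|_{L^2_x}+\|A[f_1]-A[f_2]\|_{L^\infty_x}\le C\,\|f_1-f_2\|$ (in a norm combining the $L^1$ and $L^\infty$ norms of $f_1-f_2$ over the common support), while $\|\nabla(\Phi[f_1]-\Phi[f_2])\|_{L^\infty_x}\le C\|f_1-f_2\|$ is standard. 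Inserting these into a Gr\"onwall estimate for the difference of the flows $\Xi^{(n)}-\Xi^{(n-1)}$ and of their first $\xi$-derivatives gives $d_{n+1}(t)\le C\int_0^t d_n(s)\,ds$ for a suitable $C^1$-distance $d_n$ between consecutive iterates, hence $(f^{(n)})$, $(\Phi^{(n)})$, $(A^{(n)})$ converge uniformly on $[0,T]$ together with the needed derivatives. The limit $(f,\Phi,A)$ satisfies (\ref{Vlasov In RVD})--(\ref{Velocity In RVD}) with $f|_{t=0}=f_0$, and inherits non-negativity, compact support and $C^{1,\alpha}$-regularity; applying the same Gr\"onwall argument to two solutions sharing the datum gives uniqueness on the common existence interval.

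\emph{Continuation and main obstacle.} Let $[0,T^\ast)$ be the maximal interval and suppose $T^\ast<\infty$ but $P_\ast:=\sup\{|p|:\exists\,0\le t<T^\ast,\,x:\ f(t,x,p)\neq0\}<\infty$. Because $|\dot X|\le1$ the $x$-support also stays bounded, so $\bar Z(t)$ is bounded on $[0,T^\ast)$. Then Lemma \ref{Estimate Darwin Potentials}, Lemma \ref{L2 Estimate Time Derivative A} and Corollaries \ref{Corollary Estimate Time Derivative}, \ref{Corollary Estimate Double Time Derivative} bound $\Phi$, $A$ and all their first space and time derivatives uniformly on $[0,T^\ast)$; the logarithmic bound (\ref{Estimates Vector Potentials Second Derivatives}) then controls $\|\partial_x^2\Phi\|_{L^\infty_x}$, $\|\partial_x^2A\|_{L^\infty_x}$ through an Osgood-type estimate on $\|\partial_xf\|_{L^\infty_{x,p}}$, so the characteristic flow and $\|f(t)\|_{C^{1,\alpha}_0}$ remain bounded and $f(t)$ converges in $C^{1,\alpha}_0(\R^6)$ as $t\uparrow T^\ast$. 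Restarting the construction at $t=T^\ast$ extends $f$ beyond $T^\ast$, contradicting maximality, so $T^\ast=\infty$. The main obstacles are thus the nonlinear comparison argument that produces a uniform existence time, and---more essentially---the implicit nature of $A$: the coercivity/monotonicity estimate giving Lipschitz dependence of $A$ on $f$ is what makes both convergence and uniqueness go through, and the logarithmic loss in (\ref{Estimates Vector Potentials Second Derivatives}) is precisely what lets the $\partial_xf$-Gr\"onwall close once the momentum support is bounded.
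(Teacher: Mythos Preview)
Your proposal follows the same iterative strategy as the paper and is essentially correct. Two points of comparison are worth noting. First, for the stability of the implicit map $f\mapsto A[f]$ you test the difference of the Poisson equations directly against $A_1-A_2$; the paper instead introduces the interpolant $f_\lambda=\lambda f_1+(1-\lambda)f_2$, lets $A_\lambda$ be the induced potential, and replays the $\partial_t$-analysis of Lemma~\ref{L2 Estimate Time Derivative A} with $\partial_\lambda$ in place of $\partial_t$ to bound $\|\partial_\lambda\partial_x A_\lambda\|_{L^2}$ and $\|\partial_\lambda A_\lambda\|_{L^2(B_R)}$, then integrates in $\lambda$. Your direct route gives the same $L^2$ control and is arguably cleaner; however, the subsequent bootstrap to $\|A_1-A_2\|_{L^\infty}$ (needed in the Gr\"onwall for the characteristics) still requires the kernel splitting (\ref{eq difference vector potentials})--(\ref{eq difference vector potentials 3}), which you do not spell out. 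Second, for the continuation criterion you argue that $f(t)$ converges in $C^{1,\alpha}_0$ as $t\uparrow T^\ast$ and restart at $T^\ast$; this can be made to work but is heavier than necessary. The paper (Step~8) simply observes that the local existence time (\ref{Life Span}) depends only on $\|f_0\|_{L^1}$, $\|f_0\|_{L^\infty}$ and $\bar P_0$, all conserved or bounded under the hypothesis, so restarting at any $t_0<T$ gives a uniform positive extension length---an immediate contradiction with no limit needed.

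One genuine gap in your sketch: the claim that ``Gr\"onwall on the variational equations of $\Xi^{(n)}$ gives uniform $C^{1,\alpha}$ bounds for $f^{(n)}$'' is not a straightforward Gr\"onwall. The variational system for $\partial_\xi\Xi^{(n)}$ involves $\partial_x^2\Phi^{(n)}$ and $\partial_x^2 A^{(n)}$, which by Lemma~\ref{Estimate Darwin Potentials} are controlled only through $\partial_x\rho^{(n)}$, $\partial_x j_n$, hence through $\partial_x f^{(n)}$---so the estimate at level $n+1$ feeds back into level $n$ with an exponential, and naive iteration blows up. The paper closes this loop (Step~3) via the logarithmic bound (\ref{Estimates Vector Potentials Second Derivatives}): since $\|\partial_x\rho^{(n+1)}\|_{L^\infty}\le C\exp\bigl(C\int_0^t(\|\partial_x^2\Phi^{(n)}\|+\|\partial_x^2 A^{(n)}\|)\bigr)$, taking $\ln^+$ turns the feedback into the linear recursion
\[
\|\partial_x^2\Phi^{(n+1)}(t)\|_{L^\infty}+\|\partial_x^2 A_{n+1}(t)\|_{L^\infty}\le C+C\int_0^t\bigl(\|\partial_x^2\Phi^{(n)}\|_{L^\infty}+\|\partial_x^2 A_n\|_{L^\infty}\bigr)d\tau,
\]
which is uniform in $n$ by induction. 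You allude to an ``Osgood-type'' step only in the continuation paragraph, but it is already needed here, and without it neither the uniform $C^1$ bounds nor the convergence of $\partial_x^2(\Phi^{(n)},A^{(n)})$ (your Step~5 analogue) goes through.
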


\paragraph{\textit{\textbf{Uniqueness.}}} Consider two solutions $(f_1,\Phi_1,A_1)$ and $(f_2,\Phi_2,A_2)$ of the RVD system as given by Theorem \ref{Local Solutions Theorem}. Then for $i=1,2$, there exists $T_i>0$ such that $f_i(t)\in C^{1,\alpha}(\R^6;\R)$, with support in the ball $B_{R_i}\times B_{R_i}$, uniformly in $t\in (0,T_i)$. Then setting $R=max(R_1,R_2)$ and $T=min(T_1,T_2)$, we have that 
$$ \texttt{supp}f_1(t)\cup\texttt{supp}f_2(t)\subset B_R\times B_R,\quad t\in[0,\bar{T}]\subset[0,T[. $$
The Vlasov equation yields 
\begin{eqnarray}  \partial_t\left(f_1-f_2\right)^2 & + & v_{A_1}\cdot\nabla_x\left(f_1-f_2\right)^2+K_1\cdot\nabla_p\left(f_1-f_2\right)^2\nonumber\\
& = & 2\left(f_1-f_2\right)\left[\left(v_{A_2}-v_{A_1}\right)\cdot\nabla_xf_2+\left(K_2-K_1\right)\cdot\nabla_pf_2\right]\nonumber
\end{eqnarray}  
where $K_1=-\nabla\Phi_1+v^i_{A_1}\nabla A^i_1$ and analogously for $K_2$. Now, let $Q(t)=\left\|f_1(t)-f_2(t)\right\|^2_{L^2_{x,p}}$.  Since $f_2\in C^1([0,T[\times\mathbb{R}^6;\mathbb{R})$ has compact support,  \\ $\left\|\nabla_xf_2(t)\right\|_{L^{\infty}_{x,p}}+\left\|\nabla_pf_2(t)\right\|_{L^{\infty}_{x,p}}\leq C_R$. Also, we have $\left|v_{A_1}-v_{A_2}\right|\leq C\left|A_1-A_2\right|$ and, by Lemma \ref{Estimate Darwin Potentials}, $\left\|\partial_x(A_1,A_2)\right\|_{L^{\infty}_x}\leq C_R$. Then, it is not difficult to check that
\begin{eqnarray}
\label{Gronwall}
\frac{dQ(t)}{dt}& \leq & C_RQ^{1/2}(t)\left[\frac{}{}\left\|\partial_x\Phi_1(t)-\partial_x\Phi_2(t)\right\|_{L^2_x(B_R)}\right.\nonumber\\
& & \left.+\left\|A_1(t)-A_2(t)\right\|_{L^2_x(B_R)}
+\left\|\partial_x A_1(t)-\partial_x A_2(t)\right\|_{L^2_x(B_R)}\right].
\end{eqnarray}
From the Poisson equation satisfied by the scalar potentials we deduce
$$ \int_{\mathbb{R}^3}\left|\partial_x\Phi_1(t,x)\right|^2dx=4\pi\int_{\mathbb{R}^3}\int_{\mathbb{R}^3}\rho_1(t,x)\rho_1(t,y)\frac{dxdy}{\left|y-x\right|},
$$
and analogously for $\Phi_2$. Linearity, the Hardy-Littlewood-Sobolev inequality and Jensen's inequality yield
\begin{equation}
\label{Estimate For Gronwall 1}
 \left\|\partial_x\Phi_1(t)-\partial_x\Phi_2(t)\right\|_{L^2_x}\leq C \left\|\rho_1(t)-\rho_2(t)\right\|_{L^{6/5}_x}\leq C_R Q^{1/2}(t).
\end{equation}
On the other hand, in order to estimate the terms involving the vector potential, we proceed as follows. Define $f_{\lambda}=\lambda f_1+\left(1-\lambda\right)f_2$ for $0\leq\lambda\leq1$. Clearly $f_{\lambda}\geq 0$ has compact support and satisfies $\partial_{\lambda}f_{\lambda}=f_1-f_2$. Let $A_{\lambda}$ be the Darwin vector potential induced by $f_{\lambda}$. In view of Lemma \ref{Darwin Potentials Lemma} we have
$$ \Delta A_{\lambda}(t,x)=-4\pi\int_{B_R}v_{A_{\lambda}}f_{\lambda}(t,x,p)dp-\nabla\int_{B_R}\int_{B_R}\nabla\cdot\left(v_{A_{\lambda}}f_{\lambda}\right)(t,y,p)\frac{dpdy}{\left|y-x\right|},
$$
and $\nabla\cdot A_{\lambda}=0$. Notice that $A_1$ (resp. $A_2$) solves the above equation when $\lambda=1$ (resp. $\lambda=0$). By virtue of Remark \ref{Regularity In t Remark} (where $t$ is replaced by $\lambda$), we can use the arguments in the proof of Lemma \ref{L2 Estimate Time Derivative A} to find 
\begin{eqnarray}
\label{Analogous To Lemma}
\int_{\mathbb{R}^3}\left|\partial_{\lambda}\partial_xA_{\lambda}\right|^2 & + & \int_{B_R}\int_{B_R}\frac{f_{\lambda}}{\sqrt{1+g_{\lambda}}}\left(\left|\partial_{\lambda}A_{\lambda}\right|^2-\left|v_{A_{\lambda}}\cdot\partial_{\lambda}A_{\lambda}\right|^2\right)\nonumber\\
& = & \int_{B_R}\int_{B_R}\left(\partial_{\lambda}A_{\lambda}\cdot v_{A_{\lambda}}\right)\partial_{\lambda}f_{\lambda}.
\end{eqnarray}
The analogous expression in Lemma \ref{L2 Estimate Time Derivative A} is the first equality in (\ref{LHS and RHS})  with the right-hand side replaced by the expression of $I_2$ in (\ref{Working Integrals}).
 Note the integration over $B_R\times B_R$ in view of the compact support of $\partial_{\lambda}f_{\lambda}$. Hence, since $\left|v_{A_{\lambda}}\right|<1$ by Remark \ref{Remark Velocity Control}, we can use again the arguments in Lemma \ref{L2 Estimate Time Derivative A} and the Cauchy-Schwarz inequality on the right-hand side of  (\ref{Analogous To Lemma}) to obtain
 \[ \left\|\partial_{\lambda}\partial_xA_{\lambda}(t)\right\|^2_{L^2_x}  + \left\|\rho_\lambda^{1/2}(t)\partial_{\lambda}A_{\lambda}(t)\right\|^2_{L^2_x} \leq  C_R\left\|\partial_{\lambda}A_{\lambda}(t)\right\|_{L^2_x(B_R)} \|\partial_\lambda f_\lambda(t)\|_{L^2_{x,p}},\]
 which implies that
\begin{equation}
\label{Estimate Needed 1}
 \left\|\partial_{\lambda}\partial_xA_{\lambda}(t)\right\|^2_{L^2_x}\leq  C_R\left\|\partial_{\lambda}A_{\lambda}(t)\right\|_{L^2_x(B_R)}Q^{1/2}(t)
\end{equation}
for some $C_R>0$ and all $0\leq\lambda\leq1$. Poincar{\'e}'s inequality and (\ref{Estimate Needed 1}) then yield 
\[\left\|\partial_{\lambda}A_{\lambda}(t)\right\|^2_{L^2_x(B_R)}\leq  \tilde{C}_R\left\|\partial_x\partial_{\lambda}A_{\lambda}(t)\right\|^2_{L^2_x}\leq C_R\left\|\partial_{\lambda}A_{\lambda}(t)\right\|_{L^2_x(B_R)}Q^{1/2}(t),\]
and thus, 
\begin{equation}
\label{Estimate Needed 2}
\left\|\partial_{\lambda}A_{\lambda}(t)\right\|_{L^2_x(B_R)}\leq  C_R Q^{1/2}(t)
\end{equation}
for all $0\leq\lambda\leq1$. Inserting (\ref{Estimate Needed 2}) into (\ref{Estimate Needed 1}), we also have for all $0\leq\lambda\leq1$
\begin{equation}\label{new estimate needed}
\left\|\partial_{\lambda}\partial_xA_{\lambda}(t)\right\|_{L^2_x}\leq C_RQ^{1/2}(t).
\end{equation}

Now, we observe that by Jensen's inequality,
\begin{eqnarray*}
\int_{\mathbb{R}^3}\left|\partial_xA_1(t,x)-\partial_xA_2(t,x)\right|^2dx & = & \int_{\mathbb{R}^3}\left|\int^1_0\partial_{\lambda}\partial_xA_{\lambda}(t,x)d\lambda\right|^2dx \\
& \leq & \int^1_0\int_{\mathbb{R}^3}\left|\partial_{\lambda}\partial_xA_{\lambda}(t,x)\right|^2dxd\lambda\\
& \leq & \sup_{0\leq\lambda\leq1}\int_{\mathbb{R}^3}\left|\partial_{\lambda}\partial_xA_{\lambda}(t,x)\right|^2dx,
\end{eqnarray*}
and similarly for $\left\|A_1(t)-A_2(t)\right\|_{L^2_x}$.
Then, we use (\ref{Estimate Needed 2}) and (\ref{new estimate needed}) to derive the estimate
\begin{equation}
\label{Estimate For Gronwall 2}
\left\|A_1(t)-A_2(t)\right\|_{L^2_x(B_R)}+\left\|\partial_xA_1(t)-\partial_xA_2(t)\right\|_{L^2_x}\leq C_RQ^{1/2}(t).
\end{equation}
Finally, we combine (\ref{Gronwall}), (\ref{Estimate For Gronwall 1}) and (\ref{Estimate For Gronwall 2}) to conclude that 
\[\frac{dQ(t)}{dt}  \leq  C_RQ(t).\]
Uniqueness then follows as a trivial consequence of Gronwall's lemma.\qed

\smallskip

\paragraph{\textit{\textbf{Proof of Theorem \ref{Local Solutions Theorem}}}} 
Let $f_0\in C^{1,\alpha}_0(\mathbb{R}^6;\mathbb{R})$, $f_0\geq0$. Fix $\bar{X}_0>0$ and $\bar{P}_0>0$ such that $ f_0(x,p)=0$ for $\left|x\right|>\bar{X}_0$ or  $\left|p\right|>\bar{P}_0$. We introduce the following iterative scheme. For $t\in I$ and $z=(x,p)\in\mathbb{R}^3\times\mathbb{R}^3$, define $$f^0(t,z)=f_0(z).$$
For $n\in\mathbb{N}$, assume that $f^n:I\times\mathbb{R}^6\rightarrow\mathbb{R}$ is given, and define
\begin{equation} \label{Integral Phin}
\Phi^n(t,x)  =  \int_{\mathbb{R}^3}\int_{\mathbb{R}^3}f^n(t,y,p)\frac{dpdy}{\left|y-x\right|}.
\end{equation}
By Lemma \ref{Fixed Point Lemma}, there exists a unique $A_n$ solution to the equation 
\begin{equation}
\label{Integral An}
A_n(t,x)  =  \frac{1}{2}\int_{\mathbb{R}^3}\int_{\mathbb{R}^3}\left[\texttt{id}+\omega\otimes\omega\right]v_{A_n}f^n(t,y,p)\frac{dpdy}{\left|y-x\right|},
\end{equation}
where $$v_{A_n} = \frac{p-A_n}{\sqrt{1+\left|p-A_n\right|^2}}.$$ Denote by $Z_n=(X_n,P_n)(s,t,z)$ the solution of the characteristic system
\begin{eqnarray}
\label{Characteristic X Iterates}
\dot{X}_n(s,t,z) & = & v_{A_n}(s,X_n(s,t,z),P_n(s,t,z))\\
\label{Characteristic P Iterates}
\dot{P}_n(s,t,z) & = & -\left[\nabla\Phi^n-v^i_{A_n}\nabla A^i_n\right](s,t,X_n(t,z),P_n(s,t,z))
\end{eqnarray}
with $Z_n(t,t,z)=z$. We define the $(n+1)$-th iterate of the distribution function by 
$$ f^{n+1}(t,z)=f_0(Z_n(0,t,z)).
$$
For convenience we shall also define the sequences 
$$\rho^n(t,x)=\int_{\mathbb{R}^3}f^n(t,x,p)dp\quad \hbox{and}\quad j_n(t,x)=\int_{\mathbb{R}^3}v_{A_n}f^n(t,x,p)dp.$$
\\
\noindent\textbf{Step 1}: In view of the Lemmas and Remarks in Section \ref{The Potential Representation Section}, and Lemma \ref{Fixed Point Lemma} in Section \ref{The RVD System Section}, the sequence $\left\{(f^n,\Phi^n,A_n)\right\}$ is well defined. In particular, $f^n\in C^1(I, C^{1,\alpha}(\mathbb{R}^6);\mathbb{R})$, $f^n\geq0$, and $(\Phi^n,A_n)\in C^1(I,C^{2,\alpha}(\mathbb{R}^3);\mathbb{R}\times\mathbb{R}^3)$. For each $n$, the regularity in \textit{time} of the potentials is the one of $f^n$. This is trivial for $\Phi^n$. As for $A^n$, see Remark \ref{Regularity In t Remark}. 

For $t\in I$ set $\bar{P}_0(t)=\bar{P}_0$ and for $n\in\mathbb{N}$ define
  \begin{eqnarray}
  \bar{P}_n(t) & = & \left\{\left|p\right|:\exists0\leq s\leq t,x\in\mathbb{R}^3:f^n(s,x,p)\neq0\right\}\nonumber\\
         & \equiv & \left\{\left|P_{n-1}(s,0,z)\right|:0\leq s\leq t,z\in \texttt{supp}f_0\right\}\nonumber 
   \end{eqnarray}
It is clear that $\texttt{supp}f^n(t)\subseteq \left\{(x,p)\in\mathbb{R}^3\times\mathbb{R}^3:\left|x\right|\leq\bar{X}_0+t,\left|p\right|\leq\bar{P}_n(t)\right\}$. Also,  
$$ \left\|f^n(t)\right\|_{L^q_z}=\left\|f_0\right\|_{L^q_z},\quad 1\leq q\leq\infty,\quad t\in I,\quad n\in\mathbb{N}
$$
and we have the estimate
\begin{equation}\label{eqn estimate rho n}
 \left\|\rho^n(t)\right\|_{L^{\infty}_x}\leq\frac{4}{3}\pi\left\|f_0\right\|_{L^{\infty}_z}\bar{P}^3_n(t).
\end{equation}
Since $\left|j_n\right|\leq\left|\rho^n\right|$, the known estimates on the potentials imply that 
$$ \left\|\Phi^n(t)\right\|_{L^{\infty}_x}+\left\|A_n(t)\right\|_{L^{\infty}_x}\leq C(f_0)\bar{P}_n(t),
$$
and finally 
\begin{equation}
\label{Estimate Step 1 Derivative Field}
 \left\|\partial_x\Phi^n(t)\right\|_{L^{\infty}_x}+\left\|\partial_xA_n(t)\right\|_{L^{\infty}_x}\leq C(f_0)\bar{P}^2_n(t).
\end{equation}
\\
\noindent\textbf{Step 2}: For some $T>0$ there is a non-negative, non-decreasing $\mathcal{P}\in C([0,T[;\mathbb{R})$ depending on the Cauchy datum only, such that for all $n\in\mathbb{N}\cup\left\{0\right\}$ and $0\leq t<T$ 
  $$\bar{P}_n(t)\leq\mathcal{P}(t).$$
Indeed, for $n\in\mathbb{N}$ the characteristic equation (\ref{Characteristic P Iterates}) and the estimate (\ref{Estimate Step 1 Derivative Field}) imply 
\begin{eqnarray}
\label{Uniform Estimate Momenta Iterates}
  \left|P_n(s,0,z)\right| & \leq & \left|p\right|+\int^s_0\left(\left\|\partial_x\Phi^n(\tau)\right\|_{L^{\infty}_x}+\left\|\partial_xA_n(\tau)\right\|_{L^{\infty}_x}\right)d\tau\nonumber\\
  & \leq & \bar{P}_0+C(f_0)\int^t_0\bar{P}^2_n(\tau)d\tau.
\end{eqnarray}
Let $T>0$ be the life span of the solution of the integral equation
\begin{equation}
\label{Equation Maximal P(t)}
 \mathcal{P}(t) = \bar{P}_0+C(f_0)\int^t_0\mathcal{P}^2(\tau)d\tau.
\end{equation}
Hence, $\bar{P}_0(t)\leq\mathcal{P}(t)$. Suppose $\bar{P}_n(t)\leq\mathcal{P}(t)$ for some $n\in\mathbb{N}$. Then, in view of (\ref{Uniform Estimate Momenta Iterates}), this estimate also holds for $\bar{P}_{n+1}(t)$, which proves the claim. As a result, all estimates in Step 1 are uniform in $n$ on any subinterval $[0,\bar{T}]\subset[0,T[$. In particular, for all $n\in\mathbb{N}$ and $0\leq t<T$, we have
\begin{equation}
\label{Estimate Step 2 Derivative Field}
\left\|\rho^n(t)\right\|_{L^{\infty}_x}+\left\|j_n(t)\right\|_{L^{\infty}_x}+\left\|\partial_x\Phi^n(t)\right\|_{L^{\infty}_x}+\left\|\partial_xA_n(t)\right\|_{L^{\infty}_x}\leq C^0_{\bar{T}}\equiv C(\bar{T},f_0).
\end{equation}
For future use, we notice that the maximal solution of (\ref{Equation Maximal P(t)}) is given by 
\begin{equation}
\label{Life Span}
  \mathcal{P}(t)= \bar{P}_0\left(1-C(f_0)\bar{P}_0t\right)^{-1}, \quad 0\leq t < T \equiv\left(C(f_0)\bar{P}_0\right)^{-1},
\end{equation}
with $ C(f_0)=3(2\pi)^{2/3}\left\|f_0\right\|^{1/3}_{L^1_{x,p}}\left\|f_0\right\|^{2/3}_{L^{\infty}_{x,p}}$.\\
\\

\noindent\textbf{Step 3}: We claim that for every fixed $0\leq\bar{T}<T$
\begin{equation}
\label{Estimate Step 3 Derivative Field}
\left\|\partial_x\rho^n(t)\right\|_{L^{\infty}_x}+\left\|\partial_xj_n(t)\right\|_{L^{\infty}_x}+\left\|\partial^2_x\Phi^n(t)\right\|_{L^{\infty}_x}+\left\|\partial^2_xA_n(t)\right\|_{L^{\infty}_x}\leq C^0_{\bar{T}},
\end{equation}
for all $n\in\mathbb{N}$ and $0\leq t\leq\bar{T}$. 

To start with, we estimate the space derivatives of the characteristic curves. To ease notation, we write $(X_n,P_n)(s)\equiv (X_n,P_n)(s,t,x,p)$. Recall  
$$ v_{A_n}(s,X_n(s),P_n(s))\equiv v(P_n(s),A_n(s,X_n(s))),
$$  
where $v$ is $C^{\infty}_b$ in its argument. Hence, since (by abuse of notation) we have $(\partial_xX_n(t),\partial_xP_n(t))=(1,0)$, the uniform bounds in (\ref{Estimate Step 2 Derivative Field}) lead to
\begin{eqnarray}
  \left|\partial_xX_n(s)\right| & \leq & \left|\partial_xX_n(t)\right|+\int^t_s\left|\partial_x\left[v(P_n(\tau),A_n(\tau,X_n(\tau)))\right]\right|d\tau\nonumber\\
  & \leq & 1 + C^0_{T}\int^t_0\left(\left|\partial_xX_n(\tau)\right|+\left|\partial_xP_n(\tau)\right|\right)d\tau.\nonumber
\end{eqnarray}  
Similarly,
\begin{eqnarray}
  \left|\partial_xP_n(s)\right| & \leq & \left|\partial_xP_n(t)\right|+\int^t_s\left|\partial_x\left[\nabla\Phi-v^i_{A_n}\nabla A^i_n\right](\tau,X_n(\tau),P_n(\tau))\right|d\tau\nonumber\\
  & \leq & C^0_{T}\int^t_0\left(1+\left\|\partial^2_x\Phi^n(\tau)\right\|_{L^{\infty}_x}+\left\|\partial^2_xA_n(\tau)\right\|_{L^{\infty}_x}\right)\nonumber\\
  & & \hspace{4.cm}\times\left(\frac{}{}\left|\partial_xX_n(\tau)\right|+\left|\partial_xP_n(\tau)\right|\frac{}{}\right)d\tau.\nonumber
\end{eqnarray}  
These two estimates and the Gronwall's lemma yield 
\begin{eqnarray}
\left|\partial_xX_n(s)\right| & + & \left|\partial_xP_n(s)\right|\nonumber\\
& \leq & \exp\left\{C^0_{T}\int^t_0\left(1+\left\|\partial^2_x\Phi^n(\tau)\right\|_{L^{\infty}_x}+\left\|\partial^2_xA_n(\tau)\right\|_{L^{\infty}_x}\right)d\tau\right\}.\nonumber
\end{eqnarray}
As a result, we also have 
\begin{eqnarray}
  \left|\partial_x\rho^{n+1}(t,x)\right| & \leq & \int_{\left|p\right|\leq\mathcal{P}(t)}\left|\partial_x\left[f_0(Z_n(0,t,x,p))\right]\right|dp\nonumber\\
  & \leq & C^0_{\bar{T}}\exp\left\{C^0_{T}\int^t_0\left(1+\left\|\partial^2_x\Phi^n(\tau)\right\|_{L^{\infty}_x}+\left\|\partial^2_xA_n(\tau)\right\|_{L^{\infty}_x}\right)d\tau\right\}.\nonumber
\end{eqnarray}  
Similarly, after using the product rule and the known estimates, there exists a sufficiently large constant $C^0_{\bar{T}}$ such that
\begin{eqnarray}
  \left|\partial_xj_{n+1}(t,x)\right| & \leq & \int_{\left|p\right|\leq\mathcal{P}(t)}\left|\partial_x\left[v(p,A_{n+1}(t,x))f_0(Z_n(0,t,x,p))\right]\right|dp\nonumber\\
  & \leq & C^0_{\bar{T}}\exp\left\{C^0_{\bar{T}}\int^t_0\left(1+\left\|\partial^2_x\Phi^n(\tau)\right\|_{L^{\infty}_x}+\left\|\partial^2_xA_n(\tau)\right\|_{L^{\infty}_x}\right)d\tau\right\}.\nonumber
\end{eqnarray}  
Hence, in view of Lemma \ref{Estimate Darwin Potentials}, we have for all $0\leq t\leq\bar{T}$ that
\begin{eqnarray}
\left\|\partial^2_x\Phi^{n+1}(t)\right\|_{L^{\infty}_x} & + & \left\|\partial^2_xA_{n+1}(t)\right\|_{L^{\infty}_x}\nonumber\\
& \leq & C^0_{\bar{T}}\left(1+\ln^+\left\|\partial_x\rho^{n+1}(t)\right\|_{L^{\infty}_x}+\ln^+\left\|\partial_xj_{n+1}(t)\right\|_{L^{\infty}_x}\right)\nonumber\\
& \leq & C^0_{\bar{T}}+C^0_{\bar{T}}\int^t_0\left(\left\|\partial^2_x\Phi^n(\tau)\right\|_{L^{\infty}_x}+\left\|\partial^2_xA_n(\tau)\right\|_{L^{\infty}_x}\right)d\tau.\nonumber
\end{eqnarray}  
Since the right-hand side is bounded for $n=0$, induction in $n$ yields
$$ \left\|\partial^2_x\Phi^n(t)\right\|_{L^{\infty}_x} + \left\|\partial^2_xA_n(t)\right\|_{L^{\infty}_x}\leq C^0_{\bar{T}}\exp\left\{C^0_{\bar{T}}\bar{T}\right\},
$$
for all $n\in\mathbb{N}$ and $0\leq t\leq\bar{T}$. In turn, this provides a uniform bound on the derivatives of the iterates for the current and density functions.\\
\\
\noindent\textbf{Step 4}: We show that $\left\{f^n\right\}$ is Cauchy in the uniform norm on $[0,\bar{T}]\times\mathbb{R}^6$. To start with, notice that 
\begin{eqnarray}
\label{Initial Estimate}
 \left|f^{n+1}(t,z)-f^n(t,z)\right| & = & \left|f_0(Z_n(0,t,z))-f_0(Z_{n-1}(0,t,z))\right|\nonumber\\
 & \leq & C\left|Z_n(0,t,z)-Z_{n-1}(0,t,z)\right|.
\end{eqnarray}
On the other hand, by using the estimates in the previous steps, it is not difficult to check that the characteristics equations lead to
\begin{eqnarray}
\left|X_n(s)\right. & - & \left. X_{n-1}(s)\right|\nonumber\\
   & \leq & \int^t_s\left|v(P_n(\tau),A_n(\tau,X_n(\tau)))-v(P_{n-1}(\tau),A_{n-1}(\tau,X_{n-1}(\tau)))\right|d\tau\nonumber\\
  & \leq & C\int^t_s\left(\left|X_n(\tau)-X_{n-1}(\tau)\right|+\left|P_n(\tau)-P_{n-1}(\tau)\right|\right.\nonumber\\
  & & \left.+\left\|A_n(\tau)-A_{n-1}(\tau)\right\|_{L^{\infty}_x}\right)d\tau,\nonumber
\end{eqnarray}
and
\begin{eqnarray}
\left|P_n(s)\right. & - & \left. P_{n-1}(s)\right| \nonumber\\
& \leq &    \int^t_s\left(\frac{}{}\left|\nabla\Phi^n(\tau,X_n(\tau))-\nabla\Phi^{n-1}(\tau,X_{n-1}(\tau))\right|\right.\nonumber\\
& & +\left|\frac{}{}\left(v^i_{A_n}\nabla A^i_n\right)\left(\tau,X_n(\tau),P_n(\tau)\right)\right.\nonumber\\
& & -\left.\left.\left(v^i_{A_{n-1}}\nabla A^i_{n-1}\right)\left(\tau,X_{n-1}(\tau),P_{n-1}(\tau)\right)\right|\right)d\tau\nonumber\\
& \leq &  C\int^t_s\left(\frac{}{}\left|X_n(\tau)-X_{n-1}(\tau)\right|+\left|P_n(\tau)-P_{n-1}(\tau)\right|\right.\nonumber\\
& & +\left\|\partial_x\Phi^n(\tau)-\partial_x\Phi^{n-1}(\tau)\right\|_{L^{\infty}_x}\nonumber\\
& & +\left.\left\|A_n(\tau)-A_{n-1}(\tau)\right\|_{L^{\infty}_x}+\left\|\partial_xA_n(\tau)-\partial_xA_{n-1}(\tau)\right\|_{L^{\infty}_x}\right) d\tau\nonumber.
\end{eqnarray}
Therefore, after adding the above expressions, Gronwall's inequality yields
\begin{eqnarray}
\label{Full Characteristics Estimate}
  \left|Z_n(0,t,z)\right. & - & \left. Z_{n-1}(0,t,z)\right| \leq  C\int^t_0\left(\left\|\partial_x\Phi^n(\tau)-\partial_x\Phi^{n-1}(\tau)\right\|_{L^{\infty}_x}\right.\nonumber\\
  & + &\left. \left\|A_n(\tau)-A_{n-1}(\tau)\right\|_{L^{\infty}_x}+\left\|\partial_xA_n(\tau)-\partial_xA_{n-1}(\tau)\right\|_{L^{\infty}_x}\right) d\tau.\hspace{.5cm}
\end{eqnarray}

Now, to produce a Gronwall's inequality resulting from (\ref{Initial Estimate}) and (\ref{Full Characteristics Estimate}), we look for suitable estimates on the right-hand side of (\ref{Full Characteristics Estimate}). To start with, let $R=\max\left\{\bar{X}_0+\bar{T},\mathcal P(\bar{T})\right\}$. For all $n\in\mathbb{N}$ and $0\leq t\leq\bar{T}$ we have 
$$ \texttt{supp}f^n(t)\subset B_R\times B_R.
$$ 
Linearity and Lemma \ref{Estimate Darwin Potentials} yield
\begin{eqnarray}
\label{Last Estimate Scalar Potential}
 \left\|\partial_x\Phi^n(\tau)-\partial_x\Phi^{n-1}(\tau)\right\|_{L^{\infty}_x} & \leq & C \left\|\rho^n(\tau)-\rho^{n-1}(\tau)\right\|^{1/3}_{L^1_x}\left\|\rho^n(\tau)-\rho^{n-1}(\tau)\right\|^{2/3}_{L^{\infty}_x}\nonumber\\
  & \leq & C_R\left\|f^n(\tau)-f^{n-1}(\tau)\right\|_{L^{\infty}_{x,p}}.
\end{eqnarray}
To estimate the terms involving the vector potential, we proceed as follows. 

According to the definition of the iterates, it is clear that for each $n\in \mathbb{N}$ they satisfy
$$ \partial_tf^{n+1}+v_{A_n}\cdot\nabla_xf^{n+1}-\left[\nabla\Phi^n-v^i_{A_n}\nabla A^i_n\right]\cdot\nabla_pf^{n+1}=0.
$$
Hence, by the uniqueness proof now in terms of the iterates, (see (\ref{Estimate For Gronwall 2})), 
\begin{eqnarray}
\label{AL2fLinfty}
\left\|A_n(\tau)-A_{n-1}(\tau)\right\|_{L^2_x(B_R)} & \leq & C_R \left\|f^{n+1}(\tau)-f^n(\tau)\right\|_{L^2_{x,p}}\nonumber\\
& \leq & C_R\left\|f^{n+1}(\tau)-f^n(\tau)\right\|_{L^{\infty}_{x,p}}.
\end{eqnarray}
Next, we claim that 
\begin{eqnarray}\label{eq difference vector potentials}
\left\|A_n(\tau)\right. & - & \left. A_{n-1}(\tau)\right\|_{L^{\infty}_x}\nonumber\\   
& \leq & C_R\left(\left\|f^n(\tau)-f^{n-1}(\tau)\right\|_{L^{\infty}_{x,p}} 
+\left\|A_n(\tau)-A_{n-1}(\tau)\right\|_{L^2_x(B_R)}\right).\hspace{.5cm}
\end{eqnarray}


Indeed, by (\ref{Integral An}), we have
\begin{eqnarray}\label{eq difference vector potentials 1}
A_n(\tau,x)-A_{n-1}(\tau,x) &=& \frac{1}{2}\int_{\R^3}\int_{\R^3} \K(x,y)\left[v_{A_n}f^n-v_{A_{n-1}}f^{n-1}\right](\tau,y,p)dy dp \nonumber\\
&=& \frac{1}{2}\int_{\R^3}\int_{\R^3} \K(x,y)\left[v_{A_n}-v_{A_{n-1}}\right] f^n(\tau,y,p) dy dp \nonumber\\
& & + \frac{1}{2}\int_{\R^3}\int_{\R^3} \K(x,y)v_{A_{n-1}}\left[f^n - f^{n-1}\right](\tau,y,p)dydp \nonumber\\
&=& I_1(\tau, x) + I_2(\tau, x)
\end{eqnarray}
where $\mathcal{K}(x,y)=\left|y-x\right|^{-1}\left[\texttt{id}+\omega\otimes\omega\right]$. Since $|v_{A_{n}}| \leq 1$, $|\K(x,y)| \leq C|y-x|^{-1}$ and $\texttt{supp}f^n(t)\subset B_R\times B_R$ for all $n\in\N$, we have that 
\begin{eqnarray*}
|I_2(\tau, x)|  & =&  \frac{1}{2} \Big{|}\int_{\R^3}\int_{\R^3} \K(x,y)v_{A_{n-1}}\left[f^n - f^{n-1}\right](\tau,y,p)dydp \Big{|} \\
&\leq& C \int_{B_R} \eta(\tau, y)\frac{dy}{|y-x|}
\end{eqnarray*}
where we defined $\eta(\tau, y) = \int_{B_R} |f^n - f^{n-1}|(\tau, y, p) dp$ and $\texttt{supp}\,\eta(\tau)\subset B_R$. Note that $\|\eta(\tau)\|_{L^\infty(\R^3)} \leq  C_R \| f^n(\tau) - f^{n-1}(\tau)\|_{L^\infty_{x,p}}$. Then using Lemma \ref{Pallard}, we have,
\begin{eqnarray}\label{eq difference vector potentials 2}
|I_2(\tau, x)|  & \leq & C \Big{\|} \int_{B_R} \eta(\tau, y)\frac{dy}{|y-x|}\Big{\|}_{L^\infty_x} \nonumber \\
&\leq& C \|\eta(\tau)\|^{2/3}_{L^1_x} \|\eta(\tau)\|^{1/3}_{L^\infty_x}  \leq C_R \|\eta(\tau)\|_{L^\infty_x}\nonumber\\
& \leq&  C_R \| f^n(\tau) - f^{n-1}(\tau)\|_{L^\infty_{x,p}}.
\end{eqnarray}
On the other hand, we also have
\begin{eqnarray*}
|I_1(\tau,x)| &=& \frac{1}{2} \Big{|} \int_{\R^3}\int_{\R^3} \K(x,y)\left[v_{A_n}-v_{A_{n-1}}\right] f^n(\tau,y,p) dy dp \Big{|}\\
& \leq & C  \int_{B_R}\int_{B_R} |v_{A_n}-v_{A_{n-1}}| f^n(\tau,y,p) \frac{dy dp}{|y-x|}.
\end{eqnarray*}
Then, since $v_A=v(g_A)$ where $g_A(\tau,y,p) = p-A(\tau,y)$ and 
$$v(z)=\frac{z}{\sqrt{1+|z|^2}} \in C^1_b(\R^3;\R^3),$$
we have by the mean value theorem that for all $(y,p)\in B_R\times B_R$,
\[ \big{|} \left[ v_{A_n} - v_{A_{n-1}}\right] (\tau,y,p) \big{|} \leq C_R |A_n -A_{n-1}|(\tau, y).\]
Therefore,
\[|I_1(\tau,x)| \leq C_R \Big{\|} \int_{B_R} \rho^n (\tau, y) |A_n -A_{n-1}|(\tau, y)\frac{dy}{ |y-x|} \Big{\|} _{L^\infty_x}.\]
We use Lemma \ref{Pallard} and Cauchy-Schwarz inequality to have that 
\begin{eqnarray*}
|I_1(\tau,x)| &\leq& C_R \| \rho^n(\tau)|A_n -A_{n-1}|(\tau)\|^{1/3}_{L^1_x(B_R)} \| \rho^n(\tau)|A_n -A_{n-1}|(\tau)\|^{2/3}_{L^2_x(B_R)} \\
&\leq & C_R \|\rho^n(\tau)\|^{1/3}_{L^2_x(B_R)}  \|\rho^n(\tau)\|^{2/3}_{L^\infty_x} \|(A_n -A_{n-1})(\tau)\|_{L^2_x(B_R)}\\
&\leq& C_R  \|\rho^n(\tau)\|_{L^\infty_x} \|A_n(\tau)-A_{n-1}(\tau)\|_{L^2_x(B_R)}.
\end{eqnarray*}
Hence (\ref{eqn estimate rho n}) and Step 2 imply that
\begin{equation}\label{eq difference vector potentials 3}
|I_1(\tau,x)| \leq C_R  \|A_n(\tau)-A_{n-1}(\tau)\|_{L^2_x(B_R)}.
\end{equation}
By combining (\ref{eq difference vector potentials 1}) - (\ref{eq difference vector potentials 3}), the estimate (\ref{eq difference vector potentials}) readily follows.


Also, if we write respectively $f^n-f^{n-1}$ and $A_n-A_{n-1}$ in Lemma \ref{LInfinity Estimate Double Time Derivative A} instead of $\partial_tf$ and $\partial_tA$, we find that
\begin{eqnarray}
\label{eq difference derivative vector potentials}
\left\|\partial_xA_n(\tau) \right. & - & \left. \partial_xA_{n-1}(\tau)\right\|_{L^{\infty}_x}\nonumber\\ 
& \leq &   C_R\left(\left\|f^n(\tau)-f^{n-1}(\tau)\right\|_{L^{\infty}_{x,p}}+\left\|\left(A_n(\tau)-A_{n-1}(\tau)\right)\right\|_{L^{\infty}_x}\right).\hspace{.5cm}
\end{eqnarray}
Therefore, the estimates (\ref{AL2fLinfty}), (\ref{eq difference vector potentials}) and (\ref{eq difference derivative vector potentials}) yield 
\begin{eqnarray}
\label{Last Estimate Vector Potential}
\left\|A_n(\tau) \right. & - & \left. A_{n-1}(\tau)\right\|_{L^{\infty}_x} + \left\|\partial_xA_n(\tau)-\partial_xA_{n-1}(\tau)\right\|_{L^{\infty}_x}\nonumber\\
& \leq & C_R\left(\left\|f^{n+1}(t)-f^n\right\|_{L^{\infty}_{x,p}}+ \left\|f^n(\tau)-f^{n-1}(\tau)\right\|_{L^{\infty}_{x,p}}\right).
\end{eqnarray}
Hence, if we combine (\ref{Initial Estimate}) and (\ref{Full Characteristics Estimate}) with (\ref{Last Estimate Scalar Potential}) and (\ref{Last Estimate Vector Potential}), a use of Gronwall's lemma gives
$$ \left\|f^{n+1}(t)-f^n(t)\right\|_{L^{\infty}_{x,p}} \leq C_R\int^t_0\left\|f^n(\tau)-f^{n-1}(\tau)\right\|_{L^{\infty}_{x,p}}d\tau,
$$
which by induction, readily implies the claim. It follows that $\left\{f^n\right\}$ converges uniformly to some $f\in C([0,\bar{T}]\times\mathbb{R}^6;\mathbb{R})$ and for all $0\leq t\leq\bar{T}$ we have
$$ \texttt{supp}f(t)\subset B_R\times B_R.
$$
Finally, if we respectively define $\rho$, $\Phi$ and $A$ according to (\ref{Density and Current}), (\ref{Phi In RVD}) and (\ref{A In RVD}), we have that $\rho$, $\Phi$ and $A$ are $C_b$, and $\rho^n\rightarrow\rho$, $\Phi^n\rightarrow\Phi$ and $A_n\rightarrow A$ hold uniformly on $[0,\bar{T}]\times\mathbb{R}^3$. The latter follows from (\ref{Last Estimate Vector Potential}). The uniform limits $v_{A_n}\rightarrow v_A$ and $v_{A_n}f^n\rightarrow v_Af$ can be easily checked, and therefore $j_n\rightarrow j_A$ uniformly on $[0,\bar{T}]\times\mathbb{R}^3$, with $j_A\in C_b([0,\bar{T}]\times\mathbb{R}^3;\mathbb{R}^3)$ defined by (\ref{Density and Current}). \\
\\
\noindent\textbf{Step 5}: Actually $f\in C^1(I\times\mathbb{R}^6;\mathbb{R})$, as we show next. Indeed, in view of Step 4 and, respectively, (\ref{Last Estimate Scalar Potential}) and (\ref{Last Estimate Vector Potential}), the sequences $\left\{\partial_x\Phi^n\right\}$ and $\left\{\partial_xA_n\right\}$ are uniformly Cauchy on $[0,\bar{T}]\times\mathbb{R}^3$. Moreover, by Lemma \ref{Estimate Darwin Potentials} we have
\begin{eqnarray}
\left\|\partial^2_xA_m(t) \right. & - & \left.\partial^2_xA_n(t)\right\| \leq C\left[R^{-3}\left\|j_m(t)-j_n(t)\right\|_{L^1_x}\right.\nonumber\\
& + & \left.h\left\|\partial_xj_m(t)-\partial_xj_n(t)\right\|_{L^{\infty}_x}+\left(1+\ln\left(R/h\right)\left\|j_m(t)-j_n(t)\right\|_{L^{\infty}_x}\right)\right]\nonumber
\end{eqnarray}
and similarly for $\partial^2_x\Phi^n(t)$. Hence, the known estimates and the fact that we can choose $h$ arbitrary small imply that $\left\{\partial^2_x\Phi^n\right\}$ and $\left\{\partial^2_xA_n\right\}$ are uniformly Cauchy on $[0,\bar{T}]\times\mathbb{R}^3$ as well. Therefore, we have (with a slight abuse of notation)
$$  (\Phi,A),\;\partial_x(\Phi,A),\; \partial^2_x(\Phi,A)\; \in C([0,\bar{T}]\times\mathbb{R}^3;\mathbb{R}\times\mathbb{R}^3),
$$
and so the characteristic flow $Z\in C^1([0,\bar{T}]\times[0,\bar{T}]\times\mathbb{R}^6)$ induced by the limiting field is in turn the limit of the sequence $\left\{Z_n\right\}$. As a result, the function
$$ f(t,z)=\lim_{n\rightarrow\infty}f_0(Z_n(0,t,z))=f_0(Z(0,t,z))
$$
has the claimed regularity and the triplet $(f,\Phi,A)$ satisfies (\ref{Vlasov In RVD})-(\ref{Velocity In RVD}).\\
\\
\noindent\textbf{Step 6}: We show that the potentials $\Phi$ and $A$ have the required regularity in time. Define $f_{\lambda}=\lambda f^m+\left(1-\lambda\right)f^n$, $0\leq\lambda\leq1$. This definition is just like the one in the uniqueness proof but in terms of any two elements of the sequence $\left\{f^n\right\}$. Let $A_{\lambda}$ be the vector potential induced by $f_{\lambda}$. Following the lines in the proof of Lemma \ref{L2 Estimate Time Derivative A}, it is not difficult to check that (this is analogous to (\ref{Working Integrals}))
\begin{eqnarray}
\label{Long And Painful}
\int_{\mathbb{R}^3}\left|\partial_{\lambda}\partial_t\partial_xA_{\lambda}\right|^2 & = & \int_{B_R}\int_{B_R}f_{\lambda}\partial_{\lambda}\partial_tA_{\lambda}\cdot\partial_{\lambda}\partial_tv_{A_{\lambda}}\nonumber\\
&  + & \int_{B_R}\int_{B_R}\partial_{\lambda}\partial_tA\cdot\left[\partial_{\lambda}v_{A_\lambda}\partial_tf_{\lambda}+\partial_tv_{A_\lambda}\partial_{\lambda}f_{\lambda}+v_{A_\lambda}\partial_{\lambda}\partial_tf_{\lambda}\right]\nonumber\\
& + & \partial_{\lambda}\partial_{t}\int_{\mathbb{R}^3}\int_{B_R}\frac{1}{r}\left(\nabla\cdot A_{\lambda}\right)\left(\nabla\cdot\partial_{\lambda}\partial_tj_{\lambda}\right).
\end{eqnarray}
Since $\nabla\cdot A_{\lambda}=0$, the third integral in the right-hand side vanishes. On the other hand, by using the notation of the Appendix, we have $\partial_{\lambda}v_{A_{\lambda}}=-Dv_{A_{\lambda}}\partial_{\lambda}A_{\lambda}$, also $\partial_tv_{A_{\lambda}}=-Dv_{A_{\lambda}}\partial_tA_{\lambda}$, and
$$ \partial_{\lambda}\partial_tv_{A_{\lambda}}=-Dv_{A_{\lambda}}\partial_{\lambda}\partial_tA-D^2v_{A_{\lambda}}\partial_{\lambda}A_{\lambda}\partial_tA_{\lambda}.
$$

Therefore, since by Step 5 $\left|\partial_tf_{\lambda}\right|\leq C_R$, and by Corollary \ref{Corollary Estimate Time Derivative} $\left|\partial_tA_{\lambda}\right|\leq C_R$, we obtain from (\ref{Long And Painful}) that
\begin{eqnarray} 
\label{Estimate With Gamma}
\lefteqn{\int_{\mathbb{R}^3}\left|\partial_{\lambda}\partial_t\partial_xA_{\lambda}\right|^2 +\int_{B_R}\int_{B_R}\frac{f_{\lambda}}{\sqrt{1+g^2_{\lambda}}}\left(\left|\partial_{\lambda}\partial_tA_{\lambda}\right|^2-\left|v_{A_{\lambda}}\cdot\partial_{\lambda}\partial_tA_{\lambda}\right|^2\right)}\nonumber\\
& \leq & C_R\int_{B_R}\int_{B_R}\left|\partial_{\lambda}\partial_tA_{\lambda}\right|\left[\left|\partial_{\lambda}A_\lambda\right|+\left|\partial_{\lambda}f_{\lambda}\right|+\left|\partial_{\lambda}\partial_tf_{\lambda}\right|\right]\nonumber\\
& \leq & C_R\left\|\partial_{\lambda}\partial_tA_{\lambda}(t)\right\|_{L^2_x}\left[\left\|\partial_{\lambda}A_{\lambda}(t)\right\|_{L^2_x(B_R)}+\left\|\partial_\lambda f_{\lambda}(t)\right\|_{L^{\infty}_{x,p}}+\left\|\partial_{\lambda}\partial_tf_{\lambda}(t)\right\|_{L^{\infty}_{x,p}}\right].\hspace{-.7cm}
\end{eqnarray}
In the last step we have used the Cauchy-Schwarz inequality. Now, define 
$$ G_{mn} = \sup_{0\leq t\leq\bar{T}}\left( \left\|f^n(t)-f^m(t)\right\|_{L^{\infty}_{x,p}}+\left\|\partial_tf^n(t)-\partial_tf^m(t)\right\|_{L^{\infty}_{x,p}}\right),
$$
which in view of Steps 4 and 5 converges to zero as $n,m\rightarrow\infty$. If we use the estimate (\ref{Estimate Needed 2}) for the iterates, i.e. $\left\|\partial_{\lambda}A_{\lambda}(t)\right\|_{L^2_x(B_R)}\leq C_R\left\|\partial_{\lambda}f_{\lambda}(t)\right\|_{L^2_{x,p}}$, we find that the expression in square brackets in the right-hand side of (\ref{Estimate With Gamma}) can be estimated as
\begin{eqnarray*}
\lefteqn{\left\|\partial_{\lambda}A_{\lambda}(t)\right\|_{L^2_x(B_R)}+\left\|\partial_\lambda f_{\lambda}(t)\right\|_{L^{\infty}_{x,p}}+\left\|\partial_{\lambda}\partial_tf_{\lambda}(t)\right\|_{L^{\infty}_{x,p}}}\\
& \leq & C_R\left(\left\|\partial_{\lambda}f_{\lambda}(t)\right\|_{L^2_{x,p}}+\left\|\partial_\lambda f_{\lambda}(t)\right\|_{L^{\infty}_{x,p}}+\left\|\partial_{\lambda}\partial_tf_{\lambda}(t)\right\|_{L^{\infty}_{x,p}}\right)\ \leq\ C_RG_{mn},
\end{eqnarray*}
uniformly in $\lambda$. On the other hand, since $\left|v_{A_{\lambda}}\right|<1$ strictly, we can reason as in the proof of Lemma \ref{L2 Estimate Time Derivative A} to find a lower bound on the left-hand side of (\ref{Estimate With Gamma}). This lower bound can then be estimated as 
$$
\left\|\partial_{\lambda}\partial_t\partial_xA_{\lambda}(t)\right\|^2_{L^2_x} +\left\|\rho^{1/2}_{\lambda}(t)\partial_{\lambda}\partial_tA_{\lambda}(t)\right\|^2_{L^2_x(B_R)}\leq C_R\left\|\partial_{\lambda}\partial_tA_{\lambda}(t)\right\|_{L^2_x(B_R)}G_{mn}.
$$
Consider the first term on the left-hand side. Poincar{\'e}'s inequality and the above estimate imply that
$$
\left\|\partial_{\lambda}\partial_tA_{\lambda}(t)\right\|^2_{L^2_x(B_R)}\leq C_R\left\|\partial_{\lambda}\partial_t\partial_xA_{\lambda}(t)\right\|^2_{L^2_x}\leq C_R\left\|\partial_{\lambda}\partial_tA_{\lambda}(t)\right\|_{L^2_x(B_R)}G_{mn}.
$$
Then, the last two estimates yield
\begin{equation}
\label{Estimate L^2 Gamma} 
\left\|\partial_{\lambda}\partial_tA_{\lambda}(t)\right\|_{L^2_x(B_R)}+\left\|\partial_{\lambda}\partial_t\partial_xA_{\lambda}(t)\right\|_{L^2_x}\leq C_RG_{mn}.
\end{equation}
On the other hand, by the definition of $A_{\lambda}$, we have 
$$ \partial_{\lambda}\partial_tA_{\lambda}(t,x)=\int_{B_R}\int_{B_R}\mathcal{K}(x,y)\partial_{\lambda}\partial_t\left[v_{A_{\lambda}}f_{\lambda}(t,x,p)\right]dpdy.
$$
Therefore, after taking the product rule in the integrand, we may proceed as in Lemma \ref{LInfinity Estimate Time Derivative A} to obtain the estimate
\begin{eqnarray}
\label{Other LInfinity Estimate Time Derivative A}
\left\|\partial_{\lambda}\partial_tA_{\lambda}(t)\right\|_{L^{\infty}_x}\leq C_R\left(G_{mn}+\left\|\partial_{\lambda}\partial_tA_{\lambda}(t)\right\|_{L^2_{x}(B_R)}\right),
\end{eqnarray}
where again we have used $\left|\partial_tf_{\lambda}\right|\leq C_R$ and $\left|\partial_tA_{\lambda}\right|\leq C_R$. Similarly, since 
$$ \partial_{\lambda}\partial_t\partial_xA_{\lambda}(t,x)=\int_{B_R}\int_{B_R}\partial_x\mathcal{K}(x,y)\partial_{\lambda}\partial_t\left[v_{A_{\lambda}}f_{\lambda}(t,x,p)\right]dpdy,
$$
we can proceed as in Lemma \ref{LInfinity Estimate Double Time Derivative A} to find
\begin{eqnarray}
\label{Other LInfinity Estimate Double Time Derivative A} 
\left\|\partial_{\lambda}\partial_t\partial_xA_{\lambda}(t)\right\|_{L^{\infty}_x}\leq C_R\left(G_{mn}+\left\|\partial_{\lambda}\partial_tA_{\lambda}(t)\right\|_{L^{\infty}_{x}(B_R)}\right).
\end{eqnarray}
Hence, since $\left|\partial_tA_m-\partial_tA_n\right|\leq\int^1_0\left|\partial_{\lambda}\partial_tA_{\lambda}\right|d\lambda\leq\sup_{\lambda}\left|\partial_{\lambda}\partial_tA_{\lambda}\right|$ and similarly for $\left|\partial_t\partial_xA_m-\partial_t\partial_xA_n\right|$, we can gather the above estimates to find that 
$$ \left\|\partial_tA_m(t)-\partial_tA_n(t)\right\|_{L^{\infty}_x}+\left\|\partial_t\partial_xA_m(t)-\partial_t\partial_xA_n(t)\right\|_{L^{\infty}_x}\leq C_RG_{mn}.
$$
Therefore, the sequences $\left\{\partial_tA_n\right\}$ and $\left\{\partial_t\partial_xA_n\right\}$ are uniformly Cauchy and we have that $\partial_tA_n\rightarrow \partial_tA$ and $\partial_t\partial_xA_n\rightarrow\partial_t\partial_xA$ uniformly on $[0,\bar{T}]\times\mathbb{R}^3$. In turn, the former limit and Steps 4 and 5 imply the uniform convergence $\partial_{t}(v_{A_n}f^n)\rightarrow\partial_{t}(v_Af)$, and so $\partial_tj_n\rightarrow\partial_tj_{A}$. Also, $\partial_t\rho^n\rightarrow\partial_t\rho$. Hence, just as in Step 5, the sequences $\left\{\partial_t\partial^2_x\Phi\right\}$ and $\left\{\partial_t\partial^2_xA\right\}$ are uniformly Cauchy on $[0,\bar{T}]\times\mathbb{R}^3$. Therefore, since trivially $\partial_t\Phi$ and $\partial_t\partial_x\Phi$ are continuous on $[0,\bar{T}]\times\mathbb{R}^3$, we conclude that 
 $$  \partial_t(\Phi,A),\; \partial_t\partial_x(\Phi,A),\; \partial_t\partial^2_x(\Phi,A) \in C([0,\bar{T}]\times\mathbb{R}^3).
 $$
Having proved the claim, and since $0\leq\bar{T}<T$ was arbitrary, we conclude that $f\in C^1([0,T[\times\mathbb{R}^6;\mathbb{R})$ is a classical solution of the relativistic Vlasov-Darwin system.
$$ \vspace{-.3cm}
$$
\noindent\textbf{Step 7}: Moreover, $f(t)\in C^{1,\alpha}(\mathbb{R}^6;\mathbb{R})$, $0<\alpha<1$, for each $0\leq t<T$. In view of Remark \ref{Alpha In Vlasov}, this holds if $\left(\Phi,A\right)(t)\in C^{2,\alpha}(\mathbb{R}^3;\mathbb{R}\times\mathbb{R}^3)$. But, since we have $(\rho,j_A)(t)\in C^1_0(\mathbb{R}^3;\mathbb{R}\times\mathbb{R}^3)\subset C^{\alpha}_0(\mathbb{R}^3;\mathbb{R}\times\mathbb{R}^3)$, the regularity needed for the potentials is guaranteed (see the last lines in the proof of Lemma \ref{Fixed Point Lemma}).\\
\\
\noindent\textbf{Step 8}: The proof of the continuation criterion is as follows. Let $f$ be the solution of the RVD system previously obtained, which clearly satisfies $\left.f\right|_{t=0}$. As shown in (\ref{Life Span}), the life span of $f$ is $T\equiv\left(C(f_0)\bar{P}_0\right)^{-1}$ with
$$ C(f_0)=3(2\pi)^{2/3}\left\|f_0\right\|^{1/3}_{L^1_{x,p}}\left\|f_0\right\|^{2/3}_{L^{\infty}_{x,p}}.
$$
Define $\bar{P}_T=\sup\left\{\left|p\right|:\exists0\leq t<T,x\in\mathbb{R}^3:f(t,x,p)\neq0\right\}$ and assume that $\bar{P}_T<\infty$ but $T<\infty$. We claim that this is a contradiction.

Fix $0<t_0<T$ and consider $f(t_0)$ as a Cauchy datum of the RVD system, which is guaranteed by Step 7. Known estimates yield
$$  \left\|f(t_0)\right\|_{L^1_{x,p}}=\left\|f_0\right\|_{L^1_{x,p}},\quad \left\|f(t_0)\right\|_{L^{\infty}_{x,p}}=\left\|f_0\right\|_{L^{\infty}_{x,p}}.
$$
Thus, $C(f(t_0))=C(f_0)$. Define $\epsilon=\left(C(f_0)\bar{P}_T\right)^{-1}$, which does \textit{not} depend on $t_0$. Steps 1-3 imply that all uniform estimates on the sequence of approximate solutions induced by $f(t_0)$ hold on $[t_0,t_0+\epsilon[$. Then, $f(t_0)$ yields a unique classical solution of the RVD system on that interval. 

But we could have fixed $t_0$ arbitrary close to the life span $T<\infty$ of $f$ and so extend this solution beyond $T$, which is a contradiction. Hence, we have shown that $\bar{P}_T<\infty$ implies $T=\infty$. This, and the uniqueness result, conclude the proof of Theorem \ref{Local Solutions Theorem}.\qed


\subsection{Global Solutions}
\label{Global Solutions Section}

If additional conditions are imposed on the Cauchy datum in Theorem \ref{Local Solutions Theorem}, then the local solution found in the previous section can be extended globally in time. We prove this result next. We start by defining the set where the Cauchy datum will be taken from. For $\bar{X}_0>0, \,\bar{P}_0>0$ and $0<\alpha<1$ given, let 
\begin{equation}\label{definition of D}
 \mathcal{D}  = \left\{f\in C^{1,\alpha}(\mathbb{R}^6;\mathbb{R}): \; f\geq0,\:\left\|f\right\|_{W^{1,\infty}_{x,p}}\leq1,\:\texttt{supp}f\subset B_{\bar{X}_0}\times B_{\bar{P}_0}\right\}.
\end{equation}

\begin{theorem}
\label{Global Solutions Theorem}
There exists a $\delta>0$ such that, if $f_0\in\mathcal{D}$ with $\left\|f_0\right\|_{L^{\infty}_{x,p}}\leq\delta$, then the classical solution of the RVD system (\ref{Vlasov In RVD})-(\ref{Velocity In RVD}) with Cauchy datum $f_0$ is global in time. Moreover, for $t>0$ this solution satisfies the decay estimates
\begin{eqnarray}
  \left\|\rho(t)\right\|_{L^{\infty}_x}+\left\|j_A(t)\right\|_{L^{\infty}_x} & \leq & Ct^{-3}\\
  \left\|\partial_x\Phi(t)\right\|_{L^{\infty}_x}+\left\|\partial_xA(t)\right\|_{L^{\infty}_x} +  \left\|\partial_tA(t)\right\|_{L^{\infty}_x}  & \leq & Ct^{-2}\\
  \left\|\partial^2_x\Phi(t)\right\|_{L^{\infty}_x}+\left\|\partial^2_xA(t)\right\|_{L^{\infty}_x} & \leq & Ct^{-3}\ln(1+t).
\end{eqnarray}  
\end{theorem}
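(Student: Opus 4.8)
The plan is to run a continuation/bootstrap argument in the constructive spirit of \cite{Rein}, using the dispersive decay of free transport together with the potential estimates of Lemma \ref{Estimate Darwin Potentials} and the weighted $L^2$-bound of Lemma \ref{L2 Estimate Time Derivative A}. Let $f_0\in\mathcal D$ with $\|f_0\|_{L^\infty_{x,p}}\le\delta$, let $[0,T[$ be the maximal interval of existence given by Theorem \ref{Local Solutions Theorem}, and recall that by its continuation criterion it suffices to bound the momentum support $P(t):=\sup\{|p|:\exists\,0\le s\le t,\,x:\ f(s,x,p)\ne0\}$ on $[0,T[$. Since the force in the generalised variables is $K=-\nabla\Phi+v_A^i\nabla A^i$ (no $\partial_tA$!), the characteristic equation (\ref{Characteristics P Gen}) and $|v_A|\le1$ give
\[
P(t)\le\bar P_0+\int_0^t\bigl(\|\nabla\Phi(s)\|_{L^\infty_x}+\|\nabla A(s)\|_{L^\infty_x}\bigr)\,ds,
\]
so the whole matter is to make this time integral converge.

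I would fix the threshold $\beta:=\bar P_0+1$ and an auxiliary constant $M$, and let $\tau\le T$ be the supremum of times on which simultaneously $P(s)\le\beta$ and $\|\partial_x^2\Phi(s)\|_{L^\infty_x}+\|\partial_x^2 A(s)\|_{L^\infty_x}\le M(1+s)^{-3}\ln(2+s)$. On $[0,\tau[$ one has $|p-A|\le C_\beta$ on $\operatorname{supp}f(s)$ (Remark \ref{Remark Velocity Control}), hence $\det D_pv_A=(1+|p-A|^2)^{-5/2}\ge c_\beta>0$. Writing $X(0,s,x,p)=x-\int_0^s v_A(\sigma,X(\sigma),P(\sigma))\,d\sigma$, differentiating in $p$, and comparing with free streaming, a Gronwall estimate for the variational system — in which the decaying force derivatives enter — gives $\partial_pP(\sigma)=\mathrm{id}+O(\delta)$ and $\partial_pX(0,s,x,p)=-s\,D_pv_A+O(\delta s)+O(1)$, so that $|\det\partial_pX(0,s,x,p)|\ge\tfrac12 c_\beta s^3$ for $s\ge1$ and $\delta$ small (for $s\le1$ one uses the crude bound). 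Changing variables $p\mapsto y=X(0,s,x,p)$ in $\rho(s,x)=\int f_0\bigl(X(0,s,x,p),P(0,s,x,p)\bigr)\,dp$ and using $\operatorname{supp}f_0\subset B_{\bar X_0}\times B_{\bar P_0}$ yields $\|\rho(s)\|_{L^\infty_x}+\|j_A(s)\|_{L^\infty_x}\le C_\beta\delta(1+s)^{-3}$. Since $\|\rho(s)\|_{L^1_x}=\|f_0\|_{L^1_{x,p}}\le C\delta$ is conserved, Lemma \ref{Estimate Darwin Potentials} then gives $\|\nabla\Phi(s)\|_{L^\infty_x}+\|\nabla A(s)\|_{L^\infty_x}\le C_\beta\delta(1+s)^{-2}$; running the computation of Lemma \ref{L2 Estimate Time Derivative A} while tracking this decay yields $\|\partial_x\partial_tA(s)\|_{L^2_x}+\|\rho^{1/2}(s)\partial_tA(s)\|_{L^2_x}\le C_\beta\delta(1+s)^{-3/2}$, and then Lemma \ref{LInfinity Estimate Time Derivative A} gives $\|\partial_tA(s)\|_{L^\infty_x}\le C_\beta\delta(1+s)^{-2}$.

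To close the second-order bound I would exploit that the variational equations for $\partial_xZ$ have a ``two integrations'' structure (the $\partial_p$-block of the linearised velocity is $O(1)$, but the feedback into $\partial_xP$ carries the force derivatives), so on $[0,\tau[$ the assumed decay of $\partial_x^2(\Phi,A)$ makes $\int_0^s\!\!\int_0^\sigma(\|\partial_x^2\Phi\|_{L^\infty_x}+\|\partial_x^2A\|_{L^\infty_x})\le CM$, whence $\|\partial_xZ\|_{L^\infty}\le e^{CM}$ and $\|\partial_x\rho(s)\|_{L^\infty_x}+\|\partial_xj_A(s)\|_{L^\infty_x}\le C_{\beta,M}$. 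Re-inserting this into the sharp form of Lemma \ref{Estimate Darwin Potentials} with cut-off radii $R=1+s$ and $h$ comparable to $(1+s)^{-3}\ln(2+s)/\|\partial_xj_A(s)\|_{L^\infty_x}$ produces $\|\partial_x^2\Phi(s)\|_{L^\infty_x}+\|\partial_x^2A(s)\|_{L^\infty_x}\le C_\beta\delta\,(1+\ln e^{CM})(1+s)^{-3}\ln(2+s)$. Now I would choose, in order, $\beta=\bar P_0+1$, then $M=2C_\beta\delta$, then $\delta$ small enough that $C_\beta\delta(1+\ln e^{CM})\le M/2$ and $C_\beta\delta\int_0^\infty(1+s)^{-2}ds\le 1$: both defining inequalities of $\tau$ then hold \emph{strictly} on $[0,\tau[$, and since all quantities are continuous in $t$ by the local theory, this forces $\tau=T$. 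Hence $P(t)\le\beta$ on $[0,T[$, the continuation criterion gives $T=\infty$, and the above estimates — now valid on $[0,\infty[$ — give the stated rates after replacing $(1+t)$ by $t$ (harmless for $t\ge1$, and the bounds being dominated by the crude local bounds for $0<t\le1$).

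The main obstacle is the dispersive step: proving that $p\mapsto X(0,t,x,p)$ is a global diffeomorphism with $|\det\partial_pX(0,t,x,p)|\gtrsim t^3$, i.e.\ that the flow is a genuine small perturbation of free streaming. This is where the smallness of $\delta$ is used, and it is more delicate than for Vlasov--Poisson because the linearised flow mixes position and generalised momentum through the relativistic $D_pv_A$ and the magnetic terms $v_A^i\nabla A^i$; one must verify that the second-order field estimate, \emph{with} its logarithmic loss, is still strong enough to keep $\int_0^\infty s\,(\|\partial_x^2\Phi(s)\|_{L^\infty_x}+\|\partial_x^2A(s)\|_{L^\infty_x})\,ds$ small. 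The secondary technical point is to organise the mutually dependent estimates — $\rho$-decay $\Rightarrow$ first-order fields, and $\partial_x^2$-decay $\Rightarrow$ bounded $\partial_xZ$ $\Rightarrow$ bounded $\partial_x\rho,\partial_xj_A$ $\Rightarrow$ $\partial_x^2$-decay — into a single bootstrap that closes, which is why $\tau$ is defined by two simultaneous conditions.
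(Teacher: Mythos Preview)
Your overall strategy---a bootstrap on the momentum support, closed via the continuation criterion of Theorem~\ref{Local Solutions Theorem}---is exactly that of the paper (and of \cite{Rein}). The gaps are in the specific bootstrap you set up.

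First, your bootstrap hypothesis omits decay of the \emph{first-order} fields $\partial_x\Phi$, $\partial_xA$, $\partial_tA$. These quantities enter the variational system for $\partial_pZ$: $\partial_xA$ through $\partial_xv_A=-Dv_A\,\partial_xA$ in $\partial_p\dot X$ and through $(\partial_p v_A^i)\nabla A^i$ in $\partial_p\dot P$, and $\partial_tA$ through $\dot A=\partial_tA+v_A\!\cdot\!\nabla A$ in the $D^2v_A$ term of $\dot\eta$. With only the uniform smallness $\|\partial_xA\|_{L^\infty_x}\le C_\beta\delta$ that follows from $P(s)\le\beta$, the Gronwall factors in the $\xi$--$\eta$ system become $e^{C\delta t}$, so your claims $\partial_pP=\mathrm{id}+O(\delta)$ and $\partial_pX(0,s,x,p)=-s\,D_pv_A+O(\delta s)+O(1)$ fail for large $t$. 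This is why the paper's free-streaming condition (FS$\beta$), Definition~\ref{FS Alpha}, includes all five quantities with the \emph{integrable} rates $(1+t)^{-3/2}$ and $(1+t)^{-5/2}$; see Step~1 of Lemma~\ref{Lemma Estimates}.

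Second, the second-order bound cannot be closed with $M=2C_\beta\delta$. In $\mathcal D$ one has $\|f_0\|_{W^{1,\infty}}\le 1$, not $\le\delta$, so $\|\partial_xj_A\|_{L^\infty_x}\le C_{\beta,M}$ carries no factor $\delta$; with your choice of $h$ the term $h\|\partial_xj\|_{L^\infty_x}$ in Lemma~\ref{Estimate Darwin Potentials} equals $(1+s)^{-3}\ln(2+s)$, again with no $\delta$. Your asserted bound $C_\beta\delta(1+\ln e^{CM})(1+s)^{-3}\ln(2+s)$ is therefore off by an $O(1)$ term, and $M\sim\delta$ cannot absorb it. The paper sidesteps this by taking $\beta$ \emph{fixed} and the bootstrap rates deliberately weaker than the actual ones: one first fixes $T_0$ so that $Ct^{-2}\le\tfrac\beta2(1+t)^{-3/2}$ and $Ct^{-3}\ln(1+t)\le\tfrac\beta2(1+t)^{-5/2}$ for $t\ge T_0$ (possible since the ratios tend to $0$), and only then chooses $\delta=\delta(\beta,T_0)$ via Lemma~\ref{lemma_noname} so that (FS$\beta$) already holds on $[0,T_0]$. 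The smallness of $\delta$ is used to \emph{start} the bootstrap, not to close it.
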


\vspace{.5cm}

We first introduce some technical results and postpone the actual proof of Theorem \ref{Global Solutions Theorem} to the end of this section. The following lemma shows that a sufficiently small Cauchy datum leads to a classical solution of the RVD system which exists on any given time interval and induces potentials whose derivatives can be made as small as desired. 

\begin{lemma}\label{lemma_noname}
 Fix $\epsilon>0$ and $T>0$. There exists $\delta=\delta(\epsilon,T)>0$ such that, if $f_0\in\mathcal{D}$ with $\left\|f_0\right\|_{L^{\infty}_{x,p}}\leq\delta$, then the classical solution of the RVD system with Cauchy datum $f_0$ exists on the time interval $\left[0,T\right]$ and induces potentials satisfying 
 \begin{equation}
 \label{Epsilon Bound}
\left\|\partial_tA(t)\right\|_{L^{\infty}_x}+\left\|\partial_xA(t)\right\|_{L^{\infty}_x}+\left\|\partial_x\Phi(t)\right\|_{L^{\infty}_x}+\left\|\partial^2_xA(t)\right\|_{L^{\infty}_x}+\left\|\partial^2_x\Phi(t)\right\|_{L^{\infty}_x}<\epsilon
 \end{equation}
 for all $0\leq t\leq T$.
\end{lemma}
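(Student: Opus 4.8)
The plan is to combine the life-span formula (\ref{Life Span}) from the proof of Theorem \ref{Local Solutions Theorem} with a careful bookkeeping of how the a priori estimates of Section \ref{The RVD System Section} scale in the parameter $\delta:=\|f_0\|_{L^\infty_{x,p}}$. Since $f_0\in\mathcal{D}$ is supported in $B_{\bar{X}_0}\times B_{\bar{P}_0}$, one has $\|f_0\|_{L^1_{x,p}}\le|B_{\bar{X}_0}||B_{\bar{P}_0}|\,\delta$, hence $C(f_0)=3(2\pi)^{2/3}\|f_0\|_{L^1_{x,p}}^{1/3}\|f_0\|_{L^\infty_{x,p}}^{2/3}\le C_*\delta$ with $C_*$ depending only on $\bar{X}_0,\bar{P}_0$. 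First I would pick $\delta$ so small that $C_*\bar{P}_0 T\delta\le 1/2$; by (\ref{Life Span}) the solution then exists past time $2T$, and on $[0,T]$ the momentum support stays inside $B_{2\bar{P}_0}$, since $\mathcal{P}(t)=\bar{P}_0(1-C(f_0)\bar{P}_0 t)^{-1}\le 2\bar{P}_0$ there. This step invokes only the $L^1$ and $L^\infty$ norms of $f_0$, so there is no circularity. From here on one works on $[0,T]$, where $\texttt{supp}\,f(t)\subset B_{\bar{X}_0+T}\times B_{2\bar{P}_0}$.

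Next I would propagate the smallness. Conservation of the $L^q$ norms together with the momentum bound yields $\|\rho(t)\|_{L^1_x}+\|j_A(t)\|_{L^1_x}\le C\delta$ and, using $\|\rho(t)\|_{L^\infty_x}\le\frac{4}{3}\pi(2\bar{P}_0)^3\|f_0\|_{L^\infty_{x,p}}$ and $|j_A|\le\rho$, also $\|\rho(t)\|_{L^\infty_x}+\|j_A(t)\|_{L^\infty_x}\le C\delta$. Plugging these into Lemma \ref{Estimate Darwin Potentials} gives at once $\|A(t)\|_{L^\infty_x}+\|\partial_x\Phi(t)\|_{L^\infty_x}+\|\partial_x A(t)\|_{L^\infty_x}\le C\delta$. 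In particular $g:=|p-A|\le 2\bar{P}_0+C\delta\le 3\bar{P}_0$ on $\texttt{supp}\,f(t)$ once $\delta$ is small, so the coercivity constant $C_{\texttt{min}}$ from (\ref{LHS}) satisfies $C_{\texttt{min}}\ge(1+9\bar{P}_0^2)^{-3/2}>0$ \emph{independently of} $\delta$. Re-examining the right-hand side of (\ref{LHS and RHS}), each term carries a factor that is $O(\delta)$ -- namely $\|\partial_x v_A\|_{L^\infty_x}\le C\|\partial_x A\|_{L^\infty_x}$, $\|\rho\|_{L^2_x}\le\|\rho\|_{L^1_x}^{1/2}\|\rho\|_{L^\infty_x}^{1/2}$, $\|\rho\|_{L^1_x}^{1/2}$, or $\|K\|_{L^\infty_x}\le\|\partial_x\Phi\|_{L^\infty_x}+\|\partial_x A\|_{L^\infty_x}$ -- so that $\texttt{RHS}\le C\delta\big(\|\partial_t\partial_x A\|_{L^2_x}+\|\rho^{1/2}\partial_t A\|_{L^2_x}\big)$; comparing with the lower bound (\ref{LHS}) and dividing gives $\|\partial_t\partial_x A(t)\|_{L^2_x}+\|\rho^{1/2}(t)\partial_t A(t)\|_{L^2_x}\le C\delta$. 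Feeding this into Lemma \ref{LInfinity Estimate Time Derivative A}, with $\|\rho(t)\|_{L^1_x},\|\rho(t)\|_{L^\infty_x}=O(\delta)$, produces $\|\partial_t A(t)\|_{L^\infty_x}\le C\delta$.

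It remains to handle the second space derivatives. Here I would invoke the refined estimate in Lemma \ref{Estimate Darwin Potentials} with $R=1$ and $h=\delta$:
\[
\|\partial^2_x A(t)\|_{L^\infty_x}\le C\left[\|j_A(t)\|_{L^1_x}+\delta\|\partial_x j_A(t)\|_{L^\infty_x}+\left(1+\ln(1/\delta)\right)\|j_A(t)\|_{L^\infty_x}\right],
\]
and likewise for $\Phi$ with $j_A$ replaced by $\rho$. The key is that $\|\partial_x\rho(t)\|_{L^\infty_x}$ and $\|\partial_x j_A(t)\|_{L^\infty_x}$ are bounded on $[0,T]$ by a constant $M(T)$ \emph{uniform in} $\delta$: this follows from the Gronwall/induction argument of Step 3 in the proof of Theorem \ref{Local Solutions Theorem}, whose constants depend only on $T$, on $\|f_0\|_{W^{1,\infty}_{x,p}}\le1$, and on the momentum bound $2\bar{P}_0$ -- all $\delta$-independent once $\delta$ is small. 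Hence $\|\partial^2_x A(t)\|_{L^\infty_x}+\|\partial^2_x\Phi(t)\|_{L^\infty_x}\le C_T\,\delta(1+\ln(1/\delta))$. Altogether, each of the five terms in (\ref{Epsilon Bound}) is dominated, uniformly for $t\in[0,T]$, by a quantity $\Psi_T(\delta)$ of the form $C_T\,\delta(1+\ln(1/\delta))$, and $\Psi_T(\delta)\to0$ as $\delta\to0^+$; choosing $\delta$ additionally small enough that $\Psi_T(\delta)<\epsilon$ completes the proof.

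The main obstacle -- and the reason the generalized-variable formulation is being exploited -- is obtaining the transversal-field bound $\|\partial_t A(t)\|_{L^\infty_x}=O(\delta)$: it requires the coercivity constant $C_{\texttt{min}}$ in Lemma \ref{L2 Estimate Time Derivative A} not to degenerate as $\delta\to0$, which in turn rests on the self-consistent control $|p-A|\le 3\bar{P}_0$ coming from the momentum-support bound. A secondary technical point is the logarithmic loss in the second-derivative estimate, circumvented by the scaling choice $h=\delta$ together with the $\delta$-uniform bound on $\|\partial_x j_A\|_{L^\infty_x}$ inherited from the local theory.
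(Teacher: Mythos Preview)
Your argument is correct and follows essentially the same route as the paper's own proof, which defers the space-derivative estimates to Rein's Lemma~4.2 (mutatis mutandis, using that $|j_A|\le\rho$ and that $\|\partial_x\rho\|_{L^\infty_x}+\|\partial_xj_A\|_{L^\infty_x}$ is bounded via Step~3 of Theorem~\ref{Local Solutions Theorem}) and invokes Lemmas~\ref{L2 Estimate Time Derivative A} and~\ref{LInfinity Estimate Time Derivative A} for the $\partial_tA$ term. Your write-up simply makes explicit the $\delta$-scaling that the paper leaves implicit---in particular the observation that $C_{\texttt{min}}$ in (\ref{LHS}) stays bounded away from zero because $|p-A|\le 3\bar P_0$ on $\texttt{supp}\,f(t)$, and the choice $h=\delta$ to absorb the logarithmic loss in the second-derivative bound---but the underlying mechanism is the same.
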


\begin{proof} In view of Lemma \ref{Estimate Darwin Potentials}, and since $\left|j_A\right|\leq \rho$ and
$$\left\|\partial_x\rho(t)\right\|_{L^{\infty}_{x}}+\left\|\partial_xj_A(t)\right\|_{L^{\infty}_{x}}\leq C^0_{T},\quad 0\leq t\leq T
$$
hold (the latter proved just as in Step 3 in Theorem \ref{Local Solutions Theorem}, with the estimates applied to the solution instead of the iterates), the \textit{space} derivatives of $A$ satisfy the same estimates as the space derivatives of $\Phi$. Hence, the proof is mutatis mutandis the proof of \cite[Lemma 4.2]{Rein} for the Vlasov-Poisson system, as far as the space derivatives of the potentials are concerned. As for $\partial_tA$, the result follows suit in view of the estimates in Lemmas \ref{L2 Estimate Time Derivative A} and \ref{LInfinity Estimate Time Derivative A}.
\end{proof}

To proceed, we now define the so-called \textit{free streaming condition} for classical solutions of the RVD system. 

\begin{definition}
\label{FS Alpha}
  Fix $\beta>0$ and $a>0$. A classical solution of the RVD system is said to satisfy the free streaming condition of parameter $\beta$ (FS$\beta$) on the time interval $[0,a]$, if it exists on $[0,a]$ and induces potentials satisfying the estimates 
 \begin{eqnarray}
\left\|\partial_tA(t)\right\|_{L^{\infty}_x}+\left\|\partial_xA(t)\right\|_{L^{\infty}_x}+\left\|\partial_x\Phi(t)\right\|_{L^{\infty}_x} & \leq & \beta(1+t)^{-3/2},\nonumber\\
 \left\|\partial^2_xA(t)\right\|_{L^{\infty}_x}+\left\|\partial^2_x\Phi(t)\right\|_{L^{\infty}_x} & \leq & \beta(1+t)^{-5/2},\nonumber
  \end{eqnarray}
for all $0\leq t\leq a$.  
\end{definition}

\begin{lemma}
\label{Lemma Estimates}
 There exists $\delta>0$, $\beta>0$ and a positive $C=C(\bar{X}_0,\bar{P}_0)$ such that any classical solution $f$ of the RVD system having a Cauchy datum $f_0\in\mathcal{D}$ with $\left\|f_0\right\|_{L^{\infty}_{x,p}}\leq\delta$ and satisfying (FS$\beta$) on some interval $[0,a]$, also satisfies the estimates
 \begin{eqnarray}
   \label{First Density Wanted}
  \left\|\rho(t)\right\|_{L^{\infty}_x}+\left\|j_A(t)\right\|_{L^{\infty}_x} & \leq & Ct^{-3},\\
   \label{Second Density Wanted}
  \left\|\partial_x\rho(t)\right\|_{L^{\infty}_x}+\left\|\partial_xj_A(t)\right\|_{L^{\infty}_x} & \leq & C.
   \end{eqnarray}
  for all $0\leq t\leq a$.
\end{lemma}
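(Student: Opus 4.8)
The plan is to follow the dispersive/free-streaming argument from \cite[Lemma 4.3]{Rein}, adapted to the Darwin case. The key point is that under (FS$\beta$) the characteristics are close to straight lines, so the standard change of variables that makes $\rho$ decay like $t^{-3}$ goes through. First I would fix a solution $f$ as in the statement and work on $[0,a]$. The density is $\rho(t,x) = \int f_0(Z(0,t,x,p))\,dp$, and by the volume-preserving property (Lemma \ref{Volume Preserving Lemma}) and a substitution $p \mapsto X(0,t,x,p)$, one gets $\|\rho(t)\|_{L^\infty_x} \le \|f_0\|_{L^\infty_{x,p}}\,\sup_x \big|\{p : X(0,t,x,p)\in B_{\bar X_0}\}\big|$ provided the map $p\mapsto X(0,t,x,p)$ is a diffeomorphism with Jacobian bounded below. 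To control that Jacobian, differentiate the characteristic system \eqref{Characteristic X Iterates}--\eqref{Characteristic P Iterates} (written for the solution): along backward characteristics, $\partial_p X(s) = -(t-s)\,\mathrm{id} + (\text{error})$ where the error is controlled by $\int_s^t \int_\sigma^t (\|\partial_x v_A\| + \|\partial^2_x\Phi\| + \|\partial^2_x A\| + \dots)\,d\tau\,d\sigma$, and each of those norms is $\lesssim \beta(1+\tau)^{-5/2}$ (plus terms from $\|\partial_x A\|\lesssim\beta(1+\tau)^{-3/2}$ which, after the double time integration, still converge). Choosing $\beta$ small makes the error term smaller than $(t-s)/2$ in operator norm for all $s\le t$, so $|\det \partial_p X(0,t,x,p)| \ge (t/2)^3$ for $t$ bounded away from $0$; for small $t$ one uses the trivial bound $\|\rho(t)\|_{L^\infty}\le C\bar P_0^3\|f_0\|_{L^\infty}$ together with $t^{-3}$ absorbing the constant (i.e. the estimate is only claimed up to a multiplicative constant, so one really proves $\|\rho(t)\|_{L^\infty_x}\le C(1+t)^{-3}$ or $C\min\{1,t^{-3}\}$, whichever matches the intended statement). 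Since $|v_A|\le 1$, the bound on $j_A$ follows immediately from that on $\rho$.

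For \eqref{Second Density Wanted} I would differentiate $\rho(t,x)=\int f_0(Z(0,t,x,p))\,dp$ in $x$, obtaining $\partial_x\rho(t,x) = \int \nabla_z f_0(Z(0,t,x,p))\cdot \partial_x Z(0,t,x,p)\,dp$. One needs (i) a bound on $\partial_x Z(0,t,x,p)$ and (ii) a substitution argument as above to convert the $p$-integral over the (bounded) support into a bounded quantity. For (i), the same linearized characteristic equations give, via Gronwall, $|\partial_x X(s)| + |\partial_x P(s)| \le \exp\{C\int_0^t (1 + \|\partial^2_x\Phi(\tau)\|_{L^\infty} + \|\partial^2_x A(\tau)\|_{L^\infty} + \dots)\,d\tau\}$, and under (FS$\beta$) the integral $\int_0^\infty \beta(1+\tau)^{-5/2}\,d\tau$ is finite, so $|\partial_x Z(0,t,x,p)| \le C(\bar X_0,\bar P_0)$ uniformly on $[0,a]$. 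Combined with $\|\nabla_z f_0\|_{L^\infty}\le 1$ and the volume of the $p$-support (which is $\le C\bar P_0^3$, or better $\le C\,t^{-3}$ if one runs the diffeomorphism argument again), one gets $\|\partial_x\rho(t)\|_{L^\infty_x}\le C$. For $\partial_x j_A$ one writes $j_A(t,x)=\int v(p - A(t,x)) f_0(Z(0,t,x,p))\,dp$ and applies the product rule: the term hitting $f_0$ is handled as for $\partial_x\rho$, and the term hitting $v$ produces a factor $|\partial_x A(t,x)|\le\beta(1+t)^{-3/2}\le\beta$, times $\rho$, which is bounded; so $\|\partial_x j_A(t)\|_{L^\infty_x}\le C$ as claimed.

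The main obstacle is the lower bound on the Jacobian $|\det \partial_p X(0,t,x,p)|$: one has to show that the perturbation of $\partial_p X$ away from $-(t-s)\,\mathrm{id}$ is genuinely controlled for \emph{all} $s\in[0,t]$ and not just small $s$, which requires the double time-integral of the second-derivative bounds $\beta(1+\tau)^{-5/2}$ (and, crucially, the decay $\beta(1+\tau)^{-3/2}$ on $\|\partial_t A\|$ and $\|\partial_x A\|$, since these enter $\partial_x v_A$) to be summable and small. This is exactly where the FS$\beta$ hypothesis with small $\beta$ is used, and where the novelty of the Darwin system — the presence of $\partial_t A$ and $\partial_x A$ inside $v_A$ and inside the force $K = -\nabla\Phi + v_A^i\nabla A^i$ — forces us to invoke Lemma \ref{L2 Estimate Time Derivative A} and its corollaries rather than only the Vlasov--Poisson estimates. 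Once these decay rates are in hand the rest is a routine adaptation of the Vlasov--Poisson computation and I would simply cite \cite[Lemma 4.3]{Rein} for the parts that are formally identical, spelling out only the vector-potential contributions.
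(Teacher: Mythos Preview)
Your overall strategy is correct and matches the paper, but there are two genuine gaps.

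First, and most important: your ansatz $\partial_p X(s) = -(t-s)\,\mathrm{id} + (\text{small error})$ is the \emph{non-relativistic} leading order. Here $\dot X = v_A$ with $\partial_p v_A = Dv_A = (1+g^2)^{-1/2}[\mathrm{id} - v_A\otimes v_A]$, so the true leading order is $(s-t)\,Dv_A(t)$. The difference $Dv_A - \mathrm{id}$ is $O(1)$ and is \emph{not} controlled by any of the quantities you list in your error integral (all of which are small under (FS$\beta$)). The paper fixes this by setting $\xi(s) = \partial_p X(s) - (s-t)Dv_A(t)$ and $\eta(s) = Dv_A(s)\partial_p P(s) - Dv_A(t)$, proving $|\xi(0)| \le \beta C e^{\beta C}\,t$, and then exploiting the explicit form of $Dv_A$ together with the crucial bound $|v_A| \le \nu < 1$ (coming from the a priori control $|p-A| \le C(\bar X_0,\bar P_0)$) to conclude
\[
\left|\frac{\sqrt{1+g^2}}{t}\,\partial_p X(0) + \mathrm{id}\right| \;\le\; \beta C e^{\beta C} + \nu \;=:\; \gamma \;<\; 1,
\]
whence $|\det \partial_p X(0)| \ge C t^3$. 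This relativistic step cannot be absorbed into the Vlasov--Poisson template.

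Second, your Gronwall for $|\partial_x Z|$ has an explicit ``$1$'' inside the time integral (from $|Dv_A|\le C$ coupling $\partial_x \dot X$ to $\partial_x P$), so it yields $e^{Ct}$, not a uniform bound; your own sentence ``the integral $\int_0^\infty \beta(1+\tau)^{-5/2}\,d\tau$ is finite'' simply ignores this term. The paper instead Gronwalls $\partial_x X$ first (coefficient $\|\partial_x A\|$, integrable), substitutes into the $\partial_x P$ inequality, and performs a double time-integral that converts the order-one coupling into an integrable $(1+\sigma)^{-3/2}$ factor. Finally, Lemma~\ref{L2 Estimate Time Derivative A} plays no role here; the bound on $\partial_t A$ is taken directly from the (FS$\beta$) hypothesis and enters only via $\dot A = \partial_t A + (v_A\cdot\nabla)A$ inside the $D^2v_A$ term of $\dot\eta$.
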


\begin{proof}
To prove (\ref{First Density Wanted})-(\ref{Second Density Wanted}), we first introduce some technical results which we present as a sequence of steps.\\
\\
\textbf{Step 1}: Let $0\leq s\leq t\leq a$. Denote by $(X,P)(s)=(X,P)(s,t,x,p)$ the solution of the characteristic system 
\begin{eqnarray}
   \dot{X}(s) & = & v_A(s,X(s),P(s))  \label{charact X} \\
   \dot{P}(s) & = & -\left[\nabla\Phi+v_A^i\nabla A^i\right](s,X(s),P(s)), \label{charact P}
\end{eqnarray}
with $(X,P)(t)=(x,p)$. Denote also $Dv_A(s)=Dv_A(P(s),A(s,X(s)))$, where the matrix $Dv_A$ is as given in the Appendix. Consider the system
\begin{eqnarray}
  \xi(s) & = & \partial_pX(s)-(s-t)Dv_A(t)\nonumber\\
  \eta(s) & = & Dv_A(s)\partial_pP(s)-Dv_A(t).\nonumber
\end{eqnarray}
Notice that $\xi(t)=\eta(t)=0$. We show that for some $C=C(\bar{X}_0,\bar{P}_0)>0$
\begin{equation} 
\label{First Claim}
\left|\xi(s)\right|\leq \beta Ce^{\beta C}(t-s).
\end{equation}  
Indeed, on the characteristic curves, we have 
\begin{eqnarray}
   \dot{\xi}(s) & = & \partial_p\dot{X}(s)-Dv_A(t)\nonumber\\
   & = & Dv_A(s)\left[\partial_pP(s)-\partial_xA(s,X(s))\partial_pX(s)\right]-Dv_A(t)\nonumber\\
   & = & \eta(s)-Dv_A(s)\partial_xA(s,X(s))\left[\xi(s)+(s-t)Dv_A(t)\right].\nonumber
\end{eqnarray}
Therefore, since $\left|Dv_A(s)\right|\leq C$, a use of (FS$\beta$) yields
\begin{eqnarray}
\label{Xi}
   \left|\xi(s)\right| & \leq & \int^t_s\left|\eta(\tau)\right|d\tau+\beta C\int^t_s\left(1+\tau\right)^{-3/2}\left[\left|\xi(\tau)\right|+(t-\tau)\right]d\tau\nonumber\\
   & \leq & \beta Ce^{\beta C}\left(\left(t-s\right)+\int^t_s\left|\eta(\tau)\right|d\tau\right),
\end{eqnarray}
where the Gronwall's inequality has been used in the last step. 

On the other hand, we have
\begin{eqnarray}
   \dot{\eta}(s) & = & Dv_A(s)\partial_p\dot{P}(s)+D^2v_A(s)\left[\dot{P}(s)-\dot{A}(s,X(s))\right]\partial_pP(s).\nonumber
\end{eqnarray}
In view of the characteristic system, it is not difficult to check that 
\begin{eqnarray}
 \left|\partial_p\dot{P}(s)\right| & \leq & C\left(\left\|\partial^2_x\Phi(s)\right\|_{L^{\infty}_x}+\left\|\partial^2_xA(s)\right\|_{L^{\infty}_x}+\left\|\partial_xA(s)\right\|^2_{L^{\infty}_x}\right)\left|\partial_pX(s)\right|\nonumber\\
 & & +C\left\|\partial_xA(s)\right\|_{L^{\infty}_x}\left|\partial_pP(s)\right|.\nonumber
\end{eqnarray}
Hence, since $\dot{A}^i=\partial_sA^i+v_A\cdot\nabla A^i$, $i=1,2,3$ and $\left|D^2v_A(s)\right|\leq C$, the above inequality and (FS$\beta$) yield
\begin{eqnarray}
 \left|\dot{\eta}(s)\right| & \leq & \left(\left\|\partial^2_x\Phi(s)\right\|_{L^{\infty}_x}+\left\|\partial^2_xA(s)\right\|_{L^{\infty}_x}+\left\|\partial_xA(s)\right\|^2_{L^{\infty}_x}\right)\left|\partial_pX(s)\right|\nonumber\\
& & +\left(\left\|\partial_x\Phi(s)\right\|_{L^{\infty}_x}+\left\|\partial_xA(s)\right\|_{L^{\infty}_x}+\left\|\partial_sA(s)\right\|_{L^{\infty}_x}\right)\left|\partial_pP(s)\right|\nonumber\\
& \leq & 2\beta\left(1+s\right)^{-5/2}\left|\partial_pX(s)\right|+\beta\left(1+s\right)^{-3/2}\left|\partial_pP(s)\right|.\nonumber
\end{eqnarray}
Now, by the definition of $\xi(s)$ and $\eta(s)$, we have $\left|\partial_pX(s)\right|\leq\left|\xi(s)\right|+C\left(t-s\right)$ and $\left|\partial_pP(s)\right|\leq C\left(\left|\eta(s)\right|+1\right)$, the latter as a result of $\left|Dv_A^{-1}(s)\right|\leq C$, as it can be easily checked. Then, Gronwall's inequality implies
\begin{eqnarray}
\label{Eta}
   \left|\eta(s)\right| & \leq & \beta Ce^{\beta C}\left(\int^t_s\left(1+\tau\right)^{-5/2}\left|\xi(\tau)\right|d\tau\right.\nonumber\\
   & & \left.+\int^t_s\left[\left(1+\tau\right)^{-5/2}\left(t-\tau\right)+\left(1+\tau\right)^{-3/2}\right]d\tau\right).
\end{eqnarray}
Both (\ref{Xi}) and (\ref{Eta}) then lead to 
\begin{eqnarray}
   \left|\xi(s)\right| & \leq & \beta Ce^{\beta C}\left(\left(t-s\right)+\int^t_s\int^t_{\tau}\left(1+\sigma\right)^{-5/2}\left|\xi(\sigma)\right|d\sigma d\tau\right.\nonumber\\
   & & \left.+\int^t_s\int^t_{\tau}\left[\left(1+\sigma\right)^{-5/2}\left(t-\sigma\right)+\left(1+\sigma\right)^{-3/2}\right]d\sigma d\tau\right),\nonumber\\
   & \leq & \beta Ce^{\beta C}\left(\left(t-s\right)+\int^t_s\int^{\sigma}_s\left(1+\sigma\right)^{-5/2}\left|\xi(\sigma)\right|d\tau d\sigma\right.\nonumber\\
   & & \left.+\int^t_s\int^{\sigma}_s\left[\left(1+\sigma\right)^{-5/2}\left(t-\sigma\right)+\left(1+\sigma\right)^{-3/2}\right] d\tau d\sigma\right),\nonumber\\
   & \leq & \beta Ce^{\beta C}\left(\left(t-s\right)+\int^t_s\left(1+\sigma\right)^{-3/2}\left|\xi(\sigma)\right| d\sigma\right.\nonumber\\
   & & \left.+\int^t_s\left[\left(1+\sigma\right)^{-3/2}\left(t-\sigma\right)+\left(1+\sigma\right)^{-1/2}\right]  d\sigma\right).\nonumber   
\end{eqnarray}
Finally, since the last integral is less than $3\left(t-s\right)$, another use of Gronwall's inequality yields (\ref{First Claim}).\\
\\
\noindent\textbf{Step 2}: For $\beta>0$ small enough, there exists a $C=C(\bar{X}_0,\bar{P}_0)>0$ such that the mapping $X(0,t,x,\cdot):\mathbb{R}^3\rightarrow\mathbb{R}^3$ has Jacobian determinant satisfying
$$  \left|\texttt{det}\partial_pX(0,t,x,p)\right|\geq Ct^3,\quad 0\leq t\leq a,\quad x\in\mathbb{R}^3,\quad p\in\mathbb{R}^3.
$$
For $t=0$ this is obvious. Let $0<t\leq a$. Without loss of generality, we shall assume that $0<\beta\leq 1/2$. Then, by the characteristics and (FS$\beta$) we have
\begin{equation}
\label{Maximum Momentum}
\left|P(t)\right|\leq\bar{P}_0+\beta\int^t_0\left(1+\tau\right)^{-3/2}d\tau\leq\bar{P}_0+1.
\end{equation}
Also, in view of the estimate on the vector potential given in Lemma \ref{Estimate Darwin Potentials}, and recalling that $f_0\in\mathcal{D}$, is not difficult to check that
$$\left\|A(t)\right\|_{L^{\infty}_x}\leq C\bar{X}^2_0\bar{P}^2_0\left(\bar{P}_0+1\right).$$
Denote $g=\left|p-A\right|$. Hence $g\leq C(\bar{X}_0,\bar{P}_0)$ and therefore the relativistic velocity satisfies $\left|v_A\right|\leq\nu<1$, where $\nu$ depends only on $\bar{X}_0$ and $\bar{P}_0$. Now, we have that $Dv_A=\left(1+g^2\right)^{-1/2}\left[\texttt{id}-v_A\otimes v_A\right]$. Then, since by Step 1, $\left|\xi(0)\right|\leq \beta Ce^{\beta C}t$ with $\xi(0)=\partial_pX(0)+tDv_A(t)$, we have for some $\beta>0$ small enough that
\begin{eqnarray}
\label{Gamma}
   \left|\frac{\sqrt{1+g^2}}{t}\partial_pX(0)+\texttt{id}\right| & \equiv & \left|\frac{\sqrt{1+g^2}}{t}\xi(0)+v_A\otimes v_A\right|\nonumber\\
   & \leq & \beta Ce^{\beta C} + \nu \ \equiv \ \gamma \ < \ 1.
\end{eqnarray} 
Therefore, a positive constant $C=C(\bar{X}_0,\bar{P}_0)$ exists such that
$$ \left|\texttt{det}\partial_pX(0)\right|\equiv\frac{t^3}{\left(1+g^2\right)^{3/2}}\left|\texttt{det}\left[\frac{\sqrt{1+g^2}}{t}\partial_pX(0)+\texttt{id}-\texttt{id}\right]\right|\geq Ct^3.
$$ 
\\

\noindent\textbf{Step 3}: For every $0<t\leq a$ and $x\in\mathbb{R}^3$, the mapping $X(0,t,x,\cdot):\mathbb{R}^3\rightarrow\mathbb{R}^3$ is bijective. Indeed, for $p,q\in\mathbb{R}^3$, let
$$ p_{\lambda}=\lambda p+\left(1-\lambda\right)q,\quad g_{\lambda}=g(t,x,p_{\lambda})=\left|p_{\lambda}-A(t,x)\right|,\quad 0\leq\lambda\leq 1.
$$
In view of (\ref{Gamma}) in Step 2, we have
\begin{eqnarray}
  \left|X(0,t,x,p)\right. & - & \left. X(0,t,x,q)\right| \ = \ \left|\int^1_0\partial_pX(0,t,x,p_{\lambda})\left(p-q\right)d\lambda\right|\nonumber\\
   & = & \left|\int^1_0\left[-\texttt{id}+\texttt{id}+\frac{\sqrt{1+g^2_{\lambda}}}{t}\partial_pX(0,t,x,p_{\lambda})\right]\frac{t\left(p-q\right)}{\sqrt{1+g^2_{\lambda}}}d\lambda\right|\nonumber\\
   & \geq & t\left|p-q\right|\int^1_0\frac{d\lambda}{\sqrt{1+g^2_{\lambda}}}-\gamma t\left|p-q\right|\int^1_0\frac{d\lambda}{\sqrt{1+g^2_{\lambda}}}\nonumber\\
   & \geq & \left(1-\gamma\right)\left|p-q\right|t,\nonumber   
\end{eqnarray}
which shows that the mapping is injective. It is also surjective, since the open range $X(0,t,x,\mathbb{R}^3)=\mathbb{R}^3$. If not, there exists a boundary point $x_0$ so that $X(0,t,x,p_n)\rightarrow x_0\notin X(0,t,x,\mathbb{R}^3)$ as $n\rightarrow\infty$, for some $p_n\rightarrow p_0\in\mathbb{R}^3$. By continuity $X(0,t,x,p_0)=x_0$, which is a contradiction, and the assertion follows. \\
\\
\noindent\textbf{Step 4}: Then, Steps 2 and 3 imply that the mapping $X(0,t,x,\cdot):\mathbb{R}^3\rightarrow\mathbb{R}^3$ is a $C^1$-diffeomorphism. In particular, Step 2 implies that for some constant $C=C(\bar{X}_0,\bar{P}_0)>0$, the inverse mapping $X^{-1}(0,t,x,\cdot):\mathbb{R}^3\rightarrow\mathbb{R}^3$ defined by $X\mapsto p(X)$ has Jacobian determinant satisfying
$$  \left|\texttt{det}\partial_pX^{-1}(0,t,x,p(X))\right|\leq Ct^{-3},\quad 0< t\leq a,\quad x\in\mathbb{R}^3.
$$
\\
 We can now deduce the estimates (\ref{First Density Wanted})-(\ref{Second Density Wanted}) for the charge and current densities. Indeed, bearing in mind that $f_0\in\mathcal{D}$, we have
\begin{eqnarray}
  \rho(t,x) & = & \int_{\mathbb{R}^3}f_0(X(0,t,x,p),P(0,t,x,p))dp\nonumber\\
            & = & \int_{\mathbb{R}^3}f_0(X,P(0,t,x,p(X)))\left|\texttt{det}\partial_pX^{-1}(0,t,x,p(X))\right|dX\nonumber\\
            & \leq & Ct^{-3},\nonumber
\end{eqnarray}
where $C=C(\bar{X}_0,\bar{P}_0)>0$. Then, since $\left|j_A\right|\leq\rho$, (\ref{First Density Wanted}) indeed holds.  

 To prove (\ref{Second Density Wanted}) we proceed as follows. In view of (FS$\beta$) with $\beta=1/2$, and recalling that $\left|\partial_xv_A\right|\leq C\left|\partial_xA\right|$ and $f_0\in\mathcal{D}$, we have that
 \begin{eqnarray}
\label{To Long To Fit 1}
   \left\|\partial_x\rho(t)\right\|_{L^{\infty}_x} & \leq & C\left(\bar{P}_0+1\right)^3\left\|\partial_xf(t)\right\|_{L^{\infty}_{x,p}}\\
\label{To Long To Fit 2} 
   \left\|\partial_xj_A(t)\right\|_{L^{\infty}_x} & \leq &
C\left(\bar{P}_0+1\right)^3\left(\left\|\partial_xA(t)\right\|_{L^{\infty}_x}\left\|f_0\right\|_{L^{\infty}_{x,p}}+\left\|\partial_xf(t)\right\|_{L^{\infty}_{x,p}}\right)\nonumber\\
   & \leq & C\left(\bar{P}_0+1\right)^3\left(1+\left\|\partial_xf(t)\right\|_{L^{\infty}_{x,p}}\right),
 \end{eqnarray}
and
\begin{equation}
\label{To Long To Fit 3}
\left|\partial_xf(t,x,p)\right| \leq
 \left|\partial_xP(0,t,x,p)\right|+\left|\partial_xX(0,t,x,p)\right|.
\end{equation}
Hence, the proof will be completed if we provide a uniform bound on the space derivatives of the characteristic curves. Similar to the computations in Step 1, it is not difficult to check that for $0\leq s\leq t\leq a$
$$  \left|\partial_xX(s)\right|\leq 1+C\int^t_s\left(\left|
\partial_xP(\tau)\right|+\left\|\partial_xA(\tau)\right\|_{L^{\infty}_x}\left|\partial_xX(\tau)\right|\right)d\tau,
$$   
and by (FS$\beta$) and Gronwall's lemma 
\begin{equation}
\label{Estimate Partial Characteristic X}
\left|\partial_xX(s)\right|\leq C\left(1+\int^t_s\left|
\partial_xP(\tau)\right|d\tau\right).
\end{equation}
Also,
\begin{eqnarray}
\label{Estimate Partial Characteristic P}
  \left|\partial_xP(s)\right| & \leq & C\int^t_s\left(\left\|\partial^2_x\Phi(s)\right\|_{L^{\infty}_x}+\left\|\partial^2_xA(s)\right\|_{L^{\infty}_x}+\left\|\partial_xA(s)\right\|^2_{L^{\infty}_x}\right)\left|\partial_xX(\tau)\right|d\tau\nonumber\\
  & & +C\int^t_s\left\|\partial_xA(\tau)\right\|_{L^{\infty}_x}\left|\partial_xP(\tau)\right|d\tau\nonumber\\
  & \leq & C\int^t_s\left(1+\tau\right)^{-5/2}\left|
\partial_xX(\tau)\right|d\tau.
\end{eqnarray}
Therefore, both (\ref{Estimate Partial Characteristic X}) and (\ref{Estimate Partial Characteristic P}) yield
\begin{eqnarray}
 \left|\partial_xX(s)\right| & \leq & C+C\int^t_s\int^t_{\tau}\left(1+\sigma\right)^{-5/2}\left|
\partial_xX(\sigma)\right|d\sigma d\tau\nonumber\\
& \leq & C+C\int^t_s\int^{\sigma}_s\left(1+\sigma\right)^{-5/2}\left|
\partial_xX(\sigma)\right|d\tau d\sigma\nonumber\\
& \leq & C+C\int^t_s\left(1+\sigma\right)^{-3/2}\left|
\partial_xX(\sigma)\right|d\sigma.\nonumber
\end{eqnarray}
Gronwall's lemma then provides a uniform bound on $\left|\partial_xX(s)\right|$, which in turn produces a uniform bound on $\left|\partial_xP(s)\right|$ via (\ref{Estimate Partial Characteristic P}). As a consequence 
$$ \left|\partial_xX(0,t,x,p)\right|+\left|\partial_xP(0,t,x,p)\right|\leq C,\hspace{.5cm} 0\leq t\leq a,\hspace{.1cm}x\in\mathbb{R}^3,\hspace{.1cm} p\in\mathbb{R}^3, 
$$
which implies (\ref{Second Density Wanted}) via (\ref{To Long To Fit 1})-(\ref{To Long To Fit 3}). This concludes the proof of the lemma.
\end{proof}


\begin{lemma}
\label{Lemma Estimates second}
Under the assumptions of Lemma \ref{Lemma Estimates}, we have:
  \begin{eqnarray}
  \label{No Potentials Wanted}
    \left\|\partial_t\partial_x A(t)\right\|_{L^2_x}+\left\|\rho^{1/2}(t)\partial_tA(t)\right\|_{L^2_x} & \leq & Ct^{-3/2},\\
   \label{First Potentials Wanted}
   \left\|\partial_x\Phi(t)\right\|_{L^{\infty}_x}+\left\|\partial_xA(t)\right\|_{L^{\infty}_x}  + \left\|\partial_tA(t)\right\|_{L^{\infty}_x} & \leq & Ct^{-2},\\
   \label{Second Potentials Wanted}
   \left\|\partial^2_x\Phi(t)\right\|_{L^{\infty}_x}+\left\|\partial^2_xA(t)\right\|_{L^{\infty}_x} & \leq & Ct^{-3}\ln(1+t),
  \end{eqnarray}
  for all $0\leq t\leq a$.
\end{lemma}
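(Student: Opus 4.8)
The plan is to prove the three estimates in sequence, each feeding into the next, starting from the density decay already furnished by Lemma \ref{Lemma Estimates}. Recall that under its hypotheses we have $\|\rho(t)\|_{L^\infty_x}+\|j_A(t)\|_{L^\infty_x}\le Ct^{-3}$ and $\|\partial_x\rho(t)\|_{L^\infty_x}+\|\partial_xj_A(t)\|_{L^\infty_x}\le C$ on $[0,a]$, together with $\|\rho(t)\|_{L^1_x}=\|f_0\|_{L^1_{x,p}}\le C$ and the pointwise bound $|j_A|\le\rho$; interpolating between $L^1$ and $L^\infty$ also gives $\|\rho(t)\|_{L^2_x}\le\|\rho(t)\|_{L^1_x}^{1/2}\|\rho(t)\|_{L^\infty_x}^{1/2}\le Ct^{-3/2}$. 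Since the local solution has all the relevant derivatives bounded on compact time intervals (Corollaries \ref{Corollary Estimate Time Derivative} and \ref{Corollary Estimate Double Time Derivative}), each inverse-power bound below only needs to be verified for $t\ge1$, the values $0\le t\le1$ being absorbed into the constant.

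First I would dispose of the \emph{space} derivatives. Inserting $\|j_A\|_{L^1_x}\le C$ and $\|j_A\|_{L^\infty_x}\le Ct^{-3}$ into the first estimate of Lemma \ref{Estimate Darwin Potentials} gives $\|\partial_xA(t)\|_{L^\infty_x}\le C\|j_A\|_{L^1_x}^{1/3}\|j_A\|_{L^\infty_x}^{2/3}\le Ct^{-2}$, and the same with $\rho$ in place of $j_A$ yields $\|\partial_x\Phi(t)\|_{L^\infty_x}\le Ct^{-2}$; these are the first two ingredients of (\ref{First Potentials Wanted}). For (\ref{Second Potentials Wanted}) I would use the second-derivative bound of Lemma \ref{Estimate Darwin Potentials} with the free parameters chosen as $R=1+t$ and $h=(1+t)^{-3}$ (admissible, since $h\le R$ for $t\ge1$): then $R^{-3}\|j_A\|_{L^1_x}\le Ct^{-3}$, $h\|\partial_xj_A\|_{L^\infty_x}\le Ct^{-3}$, and $(1+\ln(R/h))\|j_A\|_{L^\infty_x}\le C(1+\ln(1+t))t^{-3}$, so that $\|\partial^2_xA(t)\|_{L^\infty_x}\le Ct^{-3}\ln(1+t)$, and likewise for $\partial^2_x\Phi$.

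Next comes the weighted $L^2$ estimate (\ref{No Potentials Wanted}), which I would obtain by re-running the computation behind Lemma \ref{L2 Estimate Time Derivative A} --- in particular the identity (\ref{LHS and RHS}) --- while now tracking the time decay of its right-hand side. The left-hand side still dominates $C_{\min}\bigl(\|\partial_x\partial_tA(t)\|_{L^2_x}^2+\|\rho^{1/2}(t)\partial_tA(t)\|_{L^2_x}^2\bigr)$ with $C_{\min}>0$ a genuine constant, since on $\texttt{supp}f(t)$ one has $|p|\le\bar P_0+1$ by (\ref{Maximum Momentum}) and $\|A(t)\|_{L^\infty_x}\le C$, hence $g=|p-A|$ uniformly bounded. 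On the right-hand side one has $|\nabla_xv_A|\le C|\partial_xA|\le Ct^{-2}$, $|\nabla_pv_A|\le C$, $|K|\le\|\partial_x\Phi\|_{L^\infty_x}+\|\partial_xA\|_{L^\infty_x}\le Ct^{-2}$ and $|v_A|\le1$; pairing the two terms that carry a $t^{-2}$ prefactor with $\int_{\mathbb{R}^3}\rho\,|\partial_tA|\le\|\rho^{1/2}\|_{L^2_x}\|\rho^{1/2}\partial_tA\|_{L^2_x}\le C\|\rho^{1/2}\partial_tA\|_{L^2_x}$, and the one remaining term (the one with no small prefactor) with $\int_{\mathbb{R}^3}\rho\,|\partial_x\partial_tA|\le\|\rho\|_{L^2_x}\|\partial_x\partial_tA\|_{L^2_x}\le Ct^{-3/2}\|\partial_x\partial_tA\|_{L^2_x}$, one bounds the whole right-hand side by $Ct^{-3/2}\|\partial_x\partial_tA(t)\|_{L^2_x}+Ct^{-2}\|\rho^{1/2}(t)\partial_tA(t)\|_{L^2_x}$. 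Young's inequality, absorbing the quadratic parts into the left-hand side, then gives $\|\partial_x\partial_tA(t)\|_{L^2_x}^2+\|\rho^{1/2}(t)\partial_tA(t)\|_{L^2_x}^2\le Ct^{-3}$, i.e. (\ref{No Potentials Wanted}). Finally, feeding $\|\rho\|_{L^1_x}\le C$, $\|\rho\|_{L^\infty_x}\le Ct^{-3}$ and $\|\rho^{1/2}\partial_tA\|_{L^2_x}\le Ct^{-3/2}$ into Lemma \ref{LInfinity Estimate Time Derivative A} gives $\|\partial_tA(t)\|_{L^\infty_x}\le C\bigl(t^{-2}(1+t^{-1})+t^{-1}\cdot t^{-3/2}\bigr)\le Ct^{-2}$, which completes (\ref{First Potentials Wanted}).

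The only genuinely delicate point is (\ref{No Potentials Wanted}): in re-deriving the integration-by-parts/Vlasov-substitution identity of Lemma \ref{L2 Estimate Time Derivative A} one must match each resulting term with the correct norm of $\rho$, the single term with no small prefactor being paired with $\|\rho\|_{L^2_x}\le Ct^{-3/2}$ (not with $\|\rho^{1/2}\partial_tA\|_{L^2_x}$), so that the right-hand side becomes $O(t^{-3/2})$ times the square root of the left-hand side; Young's inequality then produces exactly the $t^{-3}$ rate. Everything else is a direct substitution of the density decay into Lemmas \ref{Estimate Darwin Potentials} and \ref{LInfinity Estimate Time Derivative A}.
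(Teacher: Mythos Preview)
Your proposal is correct and follows essentially the same route as the paper: you insert the density decay from Lemma \ref{Lemma Estimates} into Lemma \ref{Estimate Darwin Potentials} (with the same choice of $R\sim t$, $h\sim t^{-3}$) to obtain the space-derivative bounds, you re-run the integration-by-parts identity (\ref{LHS and RHS}) of Lemma \ref{L2 Estimate Time Derivative A} while tracking the $t$-decay of each term to get (\ref{No Potentials Wanted}), and you then feed this into Lemma \ref{LInfinity Estimate Time Derivative A} for the $\partial_tA$ bound. The only cosmetic difference is that you close (\ref{No Potentials Wanted}) via Young's inequality whereas the paper uses $(a+b)^2\le 2(a^2+b^2)$ and divides through; these are equivalent.
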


\begin{proof}
 By virtue of Lemma \ref{Estimate Darwin Potentials}, the following estimates hold
 \begin{eqnarray}
  \left\|\partial_x\Phi(t)\right\|_{L^{\infty}_x}+\left\|\partial_xA(t)\right\|_{L^{\infty}_x} & \leq & C\left\|f_0\right\|^{1/3}_{L^{1}_{x,p}}\left\|\rho(t)\right\|^{2/3}_{L^{\infty}_x},\nonumber\\
 \left\|\partial^2_x\Phi(t)\right\|_{L^{\infty}_x}+\left\|\partial^2_xA(t)\right\|_{L^{\infty}_x} & \leq &  Ct^{-3}\left[\left\|f_0\right\|^{1/3}_{L^{1}_{x,p}}+\left\|\partial_x\rho(t)\right\|_{L^{\infty}_x}+\left\|\partial_xj_A(t)\right\|_{L^{\infty}_x}\right.\nonumber\\
  && \left.+\left(1+\ln t^4\right)t^3\left\|\rho(t)\right\|_{L^{\infty}_x}\right],\quad  t>1,\nonumber
 \end{eqnarray}
where the latter is a consequence of setting $R=t$ and $h=t^{-3}\leq R$ in the cited lemma. Then using Lemma \ref{Lemma Estimates}, we conclude (\ref{Second Potentials Wanted}) and 
\[ \left\|\partial_x\Phi(t)\right\|_{L^{\infty}_x}+\left\|\partial_xA(t)\right\|_{L^{\infty}_x} \leq  Ct^{-2}.\]
Moreover, the estimate $\left\|\partial_tA(t)\right\|_{L^{\infty}_x}  \leq  Ct^{-2}$ follows from (\ref{No Potentials Wanted}) and Lemmas \ref{LInfinity Estimate Time Derivative A} and \ref{Lemma Estimates}, as we show next. From Lemma \ref{LInfinity Estimate Time Derivative A} and the fact that
\[\|\rho(t)\|_{L^1_x} =\|f(t)\|_{L^1_{x,p}} = \|f_0\|_{L^1_{x,p}} \leq 1, \]
we have
\[\left\|\partial_tA(t)\right\|_{L^{\infty}_x} \leq C \left[\|\rho(t)\|^{2/3}_{L^\infty_x} (1+\|\rho(t)\|^{1/3}_{L^\infty_x}) + \|\rho(t)\|^{1/3}_{L^\infty_x}  \|\rho^{1/2}(t) \partial_t A(t)\|_{L^2_x} \right]. \]
Then using (\ref{First Density Wanted}) and (\ref{No Potentials Wanted}), we deduce that for $t>1$,
\begin{eqnarray*}
\left\|\partial_tA(t)\right\|_{L^{\infty}_x} &\leq& C \left[t^{-2} + t^{-3} + t^{-1}\|\rho^{1/2}(t) \partial_t A(t)\|_{L^2_x}\right] \\
& \leq & C \left[t^{-2} + t^{-3} + t^{-5/2}\right] \leq C t^{-2}
\end{eqnarray*} 
as desired. It remains to prove (\ref{No Potentials Wanted}). This estimate can be obtained by following the lines of the proof of Lemma \ref{L2 Estimate Time Derivative A} while keeping track of the time dependence in the estimates. Indeed, from (\ref{LHS and RHS}) and (\ref{inequality on A}), we have

\begin{eqnarray*}
\int_{\mathbb{R}^3}\left|\partial_t\partial_xA(t)\right|^2 dx & + & \int_{\mathbb{R}^3}\int_{\mathbb{R}^3}\frac{f (1-|v_A|^2)}{\sqrt{1+|g_A(t)|^2}}\left|\partial_tA(t)\right|^2\\
& \leq & \int_{\mathbb{R}^3}\int_{\mathbb{R}^3}f\partial_tA^i\left(v_A\cdot\nabla_xv^i_A\right)+\int_{\mathbb{R}^3}\int_{\mathbb{R}^3}fv^i_A\left(v_A\cdot\nabla_x\partial_tA^i\right)\\
   & & + \int_{\mathbb{R}^3}\int_{\mathbb{R}^3}f\partial_tA^i\left(K\cdot\nabla_pv^i_A\right),
\end{eqnarray*}
where $g_A(t)=p-A(t)$ and $K=-\nabla\Phi + v_A^i\nabla A^i$. Since $|v_A| \leq 1$, $|\partial_xv_A| \leq C|\partial_xA|$, $|\partial_pv_A| \leq C$ and $1-|v_A|^2=(1+|g_A|^2)^{-1}$, we have
\begin{eqnarray}
\label{LHS and RHS prime}
\int_{\mathbb{R}^3}\left|\partial_t\partial_xA(t)\right|^2 dx & + & \int_{\mathbb{R}^3}\int_{\mathbb{R}^3}\frac{f}{(1+|g_A(t)|^2)^{3/2}}\left|\partial_tA(t)\right|^2 dx dp
\nonumber\\
& \leq & C \left[ \int_{\mathbb{R}^3} \rho(t) |\partial_tA(t)| |\partial_x A(t)| dx  +\int_{\mathbb{R}^3} \rho(t) |\partial_t\partial_x A(t)| dx \right] \nonumber\\
   & & +\; C \int_{\mathbb{R}^3}\rho(t) \left( |\partial_x\Phi(t)| + |\partial_xA(t)|\right) |\partial_tA(t)| dx.
   \end{eqnarray}
We now proceed to estimate both sides of the inequality (\ref{LHS and RHS prime}). By Lemma \ref{Estimate Darwin Potentials}, the right-hand side term gives
\begin{eqnarray*}
\texttt{RHS} &\leq& C\left[\left( \|\partial_x\Phi(t)\|_{L^\infty_x} + \|\partial_xA(t)\|_{L^\infty_x}\right) \int_{\R^3}\rho(t)|\partial_tA(t)| + \int_{\R^3}\rho(t)|\partial_t\partial_xA(t)| \right] \\
& \leq& C\left[ \|\rho(t)\|^{1/3}_{L^1_x} \|\rho(t)\|^{2/3}_{L^\infty_x}\int_{\R^3} \rho(t)|\partial_t A(t)|  + \int_{\R^3}\rho(t)|\partial_t\partial_xA(t)| \right].
\end{eqnarray*}
By Cauchy-Schwarz inequality, 
\[\int_{\R^3}\rho(t)|\partial_t A(t)|   \leq  \|\rho^{1/2}(t)\|_{L^2_x} \|\rho^{1/2}(t)\partial_tA(t)\|_{L^2_x} = \|\rho(t)\|_{L^1_x}^{1/2} \|\rho^{1/2}(t)\partial_tA(t)\|_{L^2_x} \]
and
\[ \int_{\R^3}\rho(t)|\partial_t\partial_xA(t)| \leq \|\rho(t)\|_{L^2_x} \|\partial_t\partial_x A(t)\|_{L^2_x}  \leq \|\rho(t)\|^{1/2}_{L^1_x} \|\rho(t)\|^{1/2}_{L^\infty_x}  \|\partial_t\partial_x A(t)\|_{L^2_x}. \]
Then using that $\|\rho(t)\|_{L^1_x} \leq 1$, we have
 \begin{eqnarray*}
\texttt{RHS} &\leq& C\, \|\rho(t)\|^{5/6}_{L^1_x} \|\rho(t)\|^{2/3}_{L^\infty_x} \|\rho^{1/2}(t)\partial_tA(t)\|_{L^2_x}\\
& &  \quad + \,  C\, \|\rho(t)\|^{1/2}_{L^1_x} \|\rho(t)\|^{1/2}_{L^\infty_x}  \|\partial_t\partial_x A(t)\|_{L^2_x} \\
&\leq& C\left[ \|\rho(t)\|^{2/3}_{L^\infty_x} + \|\rho(t)\|^{1/2}_{L^\infty_x}\right] \left [ \|\rho^{1/2}(t)\partial_tA(t)\|_{L^2_x} +  \|\partial_t\partial_x A(t)\|_{L^2_x} \right].
 \end{eqnarray*}
Finally, we deduce from (\ref{First Density Wanted}) in Lemma \ref{Lemma Estimates} that for $t>1$,
\begin{eqnarray}\label{estimate rhs term}
\texttt{RHS} &\leq& C(t^{-2} + t^{-3/2}) \left [ \|\rho^{1/2}(t)\partial_tA(t)\|_{L^2_x} +  \|\partial_t\partial_x A(t)\|_{L^2_x} \right] \nonumber\\
&\leq& Ct^{-3/2} \left [ \|\rho^{1/2}(t)\partial_tA(t)\|_{L^2_x} +  \|\partial_t\partial_x A(t)\|_{L^2_x} \right].
\end{eqnarray}
 As for the term in the left-hand side of (\ref{LHS and RHS prime}), we proceed as follows. First, we notice that $|g_A(t,x,p)| \leq 2\left(|p|^2 + \|A(t)\|^2_{L^\infty_x}\right)$, and by Lemma \ref{Estimate Darwin Potentials} with $\|\rho(t)\|_{L^1_x} \leq 1$ we have $\|A(t)\|_{L^\infty_x} \leq C\|\rho(t)\|^{1/3}_{L^\infty_x}$. Then, by (\ref{First Density Wanted}) in Lemma \ref{Lemma Estimates},
 \[ |g_A(t,x,p)| \leq C\left( |p|^2 + t^{-2} \right). \]
 From the characteristic equation (\ref{charact P}), 
 \[ |P(t)| \leq |P(0)| + \int_0^t \left(\|\partial_x\Phi(s)\|_{L^\infty_x} + \|\partial_xA(s)\|_{L^\infty_x} \right) ds, \]
 and by Lemma \ref{Estimate Darwin Potentials} and $\|\rho(t)\|_{L^1_x} \leq 1$,
 \[ \|\partial_x\Phi(s)\|_{L^\infty_x} + \|\partial_xA(s)\|_{L^\infty_x} \leq C\|\rho(s)\|^{1/3}_{L^1_x} \|\rho(s)\|^{2/3}_{L^\infty_x} \leq C \|\rho(s)\|^{2/3}_{L^\infty_x}.\]
 Then, using  that $\|\rho(s)\|_{L^\infty_x} \leq Cs^{-3}$ on $s>1$ by (\ref{First Density Wanted}) in Lemma \ref{Lemma Estimates}, and that $\|\rho(s)\|_{L^\infty_x}$ is uniformly bounded on $s\in [0,1]$ (by Theorem \ref{Local Solutions Theorem} and Lemma \ref{lemma_noname}), we have that $\|\rho(s)\|_{L^\infty_x} \leq C(1+s)^{-3}$ for $0\leq s\leq t$ with $t>1$, so that
 \[|P(t)| \leq |P(0)| + C\int_0^t  (1+s)^{-2} ds \leq |P(0)| + C \equiv C.  \]
 Therefore, for all $t>1$, we obtain that
 \[|g_A(t,x,p)| \leq C(1+t^{-2}) \leq 2C.\]
 We then deduce that the left hand side term of (\ref{LHS and RHS prime}) can be estimated as:
 \begin{eqnarray}\label{estimate lhs term}
 \texttt{LHS} &\geq& \int_{\R^3}|\partial_t\partial_x A(t)|^2 dx + C\int_{\R^3}\int_{\R^3} f|\partial_tA(t)|^2 dx dp \nonumber\\
 &=&  \|\partial_t\partial_xA(t)\|^2_{L^2_x} + C \|\rho^{1/2}(t)\partial_tA(t)\|^2_{L^2_x} \nonumber\\
 &\geq& C\left[ \|\partial_t\partial_xA(t)\|^2_{L^2_x} +  \|\rho^{1/2}(t)\partial_tA(t)\|^2_{L^2_x}\right].
 \end{eqnarray}
 Hence, if we combine (\ref{LHS and RHS prime}) - (\ref{estimate lhs term}), we deduce that 
 \begin{eqnarray*}
 \lefteqn{ \|\partial_t\partial_xA(t)\|^2_{L^2_x} +  \|\rho^{1/2}(t)\partial_tA(t)\|^2_{L^2_x}}\\
 & &  \leq Ct^{-3/2} \left [  \|\partial_t\partial_x A(t)\|_{L^2_x} + \|\rho^{1/2}(t)\partial_tA(t)\|_{L^2_x} \right].
 \end{eqnarray*}
 The above inequality together with $(a+b)^2 \leq 2(a^2+b^2)$ conclude (\ref{No Potentials Wanted}). 
 \end{proof}


\paragraph{\textit{\textbf{Proof of Theorem \ref{Global Solutions Theorem}}}} 
By virtue of the previous lemmas, the proof is almost identical to the proof of \cite[Theorem 4.1]{Rein} for the Vlasov-Poisson system. 

Indeed, let $\beta,\delta>0$ and $C=C(\bar{X}_0,\bar{P}_0)>0$ be suitable for Lemma \ref{Lemma Estimates} to hold. Fix $T_0>1$ such that for all $t\geq T_0$ 
\begin{equation}
\label{Suitable Estimate}
Ct^{-2}\leq\frac{\beta}{2}\left(1+t\right)^{-3/2},\hspace{.5cm} C\left(1+\ln t\right)t^{-3}\leq\frac{\beta}{2}\left(1+t\right)^{-5/2}.
\end{equation}

Now, by letting $\delta>0$ be smaller if necessary, Lemma \ref{lemma_noname} implies that the Cauchy datum $f_0\in\mathcal{D}$ with $\left\|f_0\right\|_{L^{\infty}_{x,p}}\leq\delta$ yields a classical solution $f$ of the RVD system on the maximal existence interval $[0,T[$ with $T>T_0$, and 
\begin{eqnarray} 
\left\|\partial_tA(t)\right\|_{L^{\infty}_x}+\left\|\partial_xA(t)\right\|_{L^{\infty}_x}+\left\|\partial_x\Phi(t)\right\|_{L^{\infty}_x}\hspace{3.cm}\nonumber\\
+\left\|\partial^2_xA(t)\right\|_{L^{\infty}_x}+\left\|\partial^2_x\Phi(t)\right\|_{L^{\infty}_x}<\frac{\beta}{2}\left(1+T_0\right)^{-5/2},\nonumber
\end{eqnarray}
for all $0\leq t\leq T_0$. Hence, $f$ satisfies the free streaming condition (FS$\beta$) on $[0,T_0]$. In fact, the continuity of the left-hand side of the above inequality implies that there exists a maximal $T_0<T_1\leq T$ such that $f$ satisfies (FS$\beta$) on $[0,T_1[$. Therefore, Lemma \ref{Lemma Estimates second} and (\ref{Suitable Estimate}) imply that for all $T_0\leq t<T_1$
\begin{eqnarray}
 \left\|\partial_tA(t)\right\|_{L^{\infty}_x} + \left\|\partial_x\Phi(t)\right\|_{L^{\infty}_x}+\left\|\partial_xA(t)\right\|_{L^{\infty}_x}  & \leq & Ct^{-2}\hspace{.1cm}\leq \hspace{.1cm}\frac{\beta}{2}(1+t)^{-3/2},\nonumber\\
\left\|\partial^2_x\Phi(t)\right\|_{L^{\infty}_x}+\left\|\partial^2_xA(t)\right\|_{L^{\infty}_x} & \leq & C\left(1+\ln t\right)t^{-3}\hspace{.1cm}\leq\hspace{.1cm}\frac{\beta}{2}(1+t)^{-5/2}.\nonumber
\end{eqnarray}
Then, a continuation argument yields $T_1=T$, and by (\ref{Maximum Momentum}), we deduce
   $$\sup\left\{\left|p\right|:\exists0\leq t<T,x\in\mathbb{R}^3:f(t,x,p)\neq0\right\}\leq\bar{P}_0+1.$$
Therefore the continuation criterion in Theorem \ref{Local Solutions Theorem} implies that $T=\infty$, and thus the solution $f$ is global in time. The proof of Theorem \ref{Global Solutions Theorem} is complete.\qed


\section*{Appendix}
\label{C}

\noindent $$\hbox{For}\quad v_A=\frac{p-A}{\sqrt{1+\left|p-A\right|^2}},\quad \hbox{set}\quad 
Dv_A=\frac{\texttt{id}-v_A\otimes v_A}{\sqrt{1+\left|p-A\right|^2}}=\partial_pv_A.$$
Then $\partial_xv_A=-Dv_A\partial_xA$ and $\partial_tv_A=-Dv_A\partial_tA$. Clearly, $\left|Dv_A\right|\leq C$, and so $\left|\partial_xv_A\right|\leq C\left|\partial_xA\right|$ and $\left|\partial_tv_A\right|\leq C\left|\partial_tA\right|$. Also, $\left|\partial^2_pv_A\right|\leq C$; $\left|\partial_x\partial_pv_A\right|\leq C\left|\partial_xA\right|$; $\left|\partial^2_xv_A\right|\leq C\left(\left|\partial^2_xA\right|+\left|\partial_xA\right|^2\right)$; $\left|\partial^2_tv_A\right| \leq C\left(\left|\partial^2_tA\right|+\left|\partial_tA\right|^2\right)$;$\left|\partial_t\partial_pv_A\right|\leq  C\left|\partial_tA\right|$ and finally
$\left|\partial_t\partial_xv_A\right| \leq C\left(\left|\partial_t\partial_xA\right|+\left|\partial_tA\right|\left|\partial_xA\right|\right)$.


\end{document}